\numberwithin{equation}{section}
\newcommand\void[1]{}
\newcommand{\C}{\mathbb{C}}
\newcommand{\R}{\mathbb{R}}
\newcommand{\CA}{\mathcal{A}}
\newcommand{\CB}{\mathcal{B}}
\newcommand{\CC}{\mathcal{C}}
\newcommand{\CD}{\mathcal{D}}
\newcommand{\CE}{\mathcal{E}}
\newcommand{\CF}{\mathcal{F}}
\newcommand{\CG}{\mathcal{G}}
\newcommand{\CH}{\mathcal{H}}
\newcommand{\CK}{\mathcal{K}}
\newcommand{\CL}{\mathcal{L}}
\newcommand{\CO}{\mathcal{O}}
\newcommand{\CP}{\mathcal{P}}
\newcommand{\CQ}{\mathcal{Q}}
\newcommand{\CR}{\mathcal{R}}
\newcommand{\CS}{\mathcal{S}}
\newcommand{\CT}{\mathcal{T}}
\newcommand{\CX}{\mathcal{X}}
\newcommand{\FZ}{\mathfrak{Z}}
\newcommand{\Hilb}{\mathrm{Hilb}}
\newcommand{\op}{\mathrm{op}}
\newcommand{\TO}{\mathcal{QL}}
\newcommand{\TOsk}{\mathcal{QL}_\mathrm{sk}}
\newcommand{\Net}{\mathcal{N}et}
\newcommand{\LQS}{\mathcal{LQS}}
\newcommand{\sk}{\mathrm{sk}}
\newcommand{\lqs}{\mathrm{lqs}}
 \DeclareMathOperator{\Hom}{Hom}
 \DeclareMathOperator{\End}{End}
 \DeclareMathOperator{\Id}{Id}
 \DeclareMathOperator{\one}{\mathbf1}
 \DeclareMathOperator{\Fun}{Fun}
 \DeclareMathOperator{\LMod}{LMod}
 \DeclareMathOperator{\BMod}{BMod}
 \DeclareMathOperator{\Diff}{Diff}
 \DeclareMathOperator{\Conf}{Conf}
 \DeclareMathOperator{\Isom}{Isom}
 \DeclareMathOperator{\Rep}{Rep}
 \DeclareMathOperator{\Disk}{Disk}
 \DeclareMathOperator{\VN}{VN}
 \DeclareMathOperator{\Ad}{Ad}
 \DeclareMathOperator{\Mor}{Mor}
 \DeclareMathOperator{\Sect}{Sect}
\newtheorem{thm}{Theorem}[section]
\newtheorem{lem}[thm]{Lemma}
\newtheorem{prop}[thm]{Proposition}
\newtheorem{cor}[thm]{Corollary}
\theoremstyle{definition}
\newtheorem{defn}[thm]{Definition}
\newtheorem{exam}[thm]{Example}
\newtheorem{rem}[thm]{Remark}
\theoremstyle{remark}
\newcommand\arXiv[1]{\href{http://arxiv.org/abs/#1}{\nolinkurl{arXiv:#1}}}
\newcommand\condense{\mathrel{\,\hspace{.75ex}\joinrel\rhook\joinrel\hspace{-.75ex}\joinrel\rightarrow}}
\begin{document}

\title{Categories of quantum liquids III}
\maketitle

\begin{center}
Liang Kong$^{a,b,c}$,
Hao Zheng$^{d,e}$
~\footnote{Emails:{\tt  kongl@sustech.edu.cn, haozheng@mail.tsinghua.edu.cn}}
\\[1em]
{\small $^a$ Shenzhen Institute for Quantum Science and Engineering, \\
Southern University of Science and Technology, Shenzhen, 518055, China 
\\[0.2em]
$^b$ International Quantum Academy, 
Shenzhen 518048, China
\\[0.2em]
$^c$ Guangdong Provincial Key Laboratory of Quantum Science and Engineering, \\
Southern University of Science and Technology, Shenzhen, 518055, China
\\[0.2em]
$^d$ Institute for Applied Mathematics, Tsinghua University, Beijing, 100084, China
\\[0.2em]
$^e$ Beijing Institute of Mathematical Sciences and Applications, 
Beijing 101408, China }
\end{center}

\smallskip
\begin{abstract}
We continue our study of the categories of quantum liquids started in a previous work. We combine local quantum symmetries with topological skeletons into a single mathematical theory of topological nets and defect nets. In particular, we introduce the notion of a topological net, which is motivated from and generalizes that of a conformal net, and the notion of a defect net which generalizes that of a defect between conformal nets. We give explicit examples of them. Moreover, we construct the category of topological $n$-nets with $k$-morphisms defined by defect $n$-nets of codimension $k$, and show that the category of $n$D quantum liquids can be extracted from it and computed explicitly via the condensation theory of topological nets. 
\end{abstract}

\tableofcontents

\section{Introduction}

In this work, we continue our study of the category of quantum liquids started in \cite{KZ20b}. Quantum liquids are quantum phases that only ``softly'' depend on the local geometry on spacetime. They include the usual spontaneous symmetry-breaking phases, topological orders, symmetry protected/enriched topological (SPT/SET) orders and CFT-type gapless phases. We have shown in \cite{KZ20b} that the mathematical description of a quantum liquid consists of two parts of data: the local quantum symmetry and the topological skeleton. In \cite{KZ20b}, we focused on the topological skeletons and explicitly computed the category $\TOsk^n$ of the topological skeletons of $n$D  quantum liquids. In this work, we combine local quantum symmetries with  topological skeletons into a single mathematical theory of topological nets and defect nets. This theory provides a rather complete mathematical description of quantum liquids in all dimensions. Throughout this work, $n$D represents the spacetime dimension. 

\smallskip
In 2D CFT's, the local quantum symmetries are given by vertex operator algebras (VOA) or their non-chiral analogues. It is not known how to generalize VOA directly to higher dimensions. However, there is an alternative formulation of 2D CFT's in terms of conformal nets (see for examples \cite{BGL93,BMT88,BSM90,GF93,KL04,KLM01,LR95,Reh00a,Reh00b,Was95} and references therein). The idea of conformal net was originated from algebraic quantum field theories (defined in all dimensions), which was based on Haag-Kastler nets or the nets of local observables \cite{HK64} (see \cite{Haa92} for a review).  Therefore, it is natural to ask how to generalize the notion of a conformal net to a more general one for the study of quantum liquids in all dimensions. Moreover, from the lattice model realization of topological orders, it seems that the ideas of the net of local observables and the superselection sectors of particles work better in lattice models than in quantum gauge field theories (see for example \cite{Kit03,KK12,KWZ21}). This motivates us to introduce the notion of a topological net, which includes that of a conformal net as a special case. 
We also generalize Bartels, Douglas and Henriques' theory of defects\footnote{Certain defects in conformal nets were studied much earlier than \cite{BDH19a} under the name of `solitons' (see for example \cite{BE98,Kaw02,LR95,LX04}).} in conformal nets and their fusion \cite{BDH19a} to the theory of defect nets and their fusion.

In this work, we show that all finite topological $n$-nets, together with their higher codimensional finite defects, form a symmetric monoidal $*$-$(n+1)$-category $\Net^n$. Then we can define a subcategory $\LQS^n$ of $\Net^n$. The category of $n$D quantum liquids, denoted by $\TO^n$, can be obtained from $\LQS^n$ by a (co)slice construction. The category $\TO^n$ is equipped with two forgetful functors $\TO^n \to \LQS^n$ and $\TO^n \to \TOsk$, whose images are precisely the local quantum symmetry and the topological skeleton, respectively. One of the main results of this work says that $\LQS^n$ is $*$-condensation-complete (see Theorem \ref{thm:lqs-cc}). It further implies $\LQS^n\simeq (n+1)\Hilb$ and $\TO^n\simeq\TOsk^n$, both of which are naturally required by physics thus provide the necessary consistent check of our theory.

\smallskip
We provide the layout of this paper. 
In Section \ref{sec:topological-net}, we introduce the notion of a topological net; in Section \ref{sec:defect-net}, we introduce that of a defect net; in Section \ref{sec:net-oss}, we construct from every finite group a topological $n$-net that describe the local quantum symmetry of $n$D gapped quantum liquids with a finite onsite symmetry; in Section \ref{sec:lw-net}, we provide an explicit construction of topological nets and defect nets to describe Levin-Wen models and their boundaries.

In Section \ref{sec:netn}, we introduce the category $\Net^n$ and postpone the complete construction (mainly the composition of higher morphisms) of $\Net^n$ to Section \ref{sec:fus-def}. In Section \ref{sec:construct-QL}, we give the construction of $\LQS^n$ and $\TO^n$, state the main result Theorem \ref{thm:lqs-cc} and discuss its consequences. In Section \ref{sec:transparent-wall}, we show how to extract the information of local quantum symmetries from a tower of subcategories of $\LQS^n$ based on the so-called transparent domain walls. 

Section \ref{sec:fus-def} is devoted to defining the fusion of defect nets and completing the construction of $\Net^n$. In Section \ref{sec:condense-net}, we sketch the condensation theory of topological nets. This theory provides a proof of Theorem \ref{thm:lqs-cc}. 

In Appendix \ref{sec:condense-vn}, we briefly review Lurie's formulation of Connes fusion and prove the $*$-condensation completeness of the $*$-2-category of von Neumann algebras and bimodules.

\medskip
We assume the reader is familiar with the fundamentals of von Neumann algebras, such as standard form and Connes fusion of bimodules. We say that a bimodule (including a left or right module as special case) over von Neumann algebras is semisimple if it is a finite direct sums of irreducible ones.

We work on the $*$-settings. Functors between $*$-$n$-categories are silently assumed to be $*$-functors.
Recall that $\Hilb$ denotes the unitary symmetric monoidal 1-category of finite-dimensional Hilbert spaces and linear maps. We use $\widehat\Hilb$ to the denote the symmetric monoidal $*$-1-category of all (possibly inseparable) Hilbert spaces and bounded linear maps. 
Given a unitary 1-category $\CC$, we use $\hat\CC$ to denote $\CC\boxtimes_\Hilb\widehat\Hilb$ and refer to it as the {\em completion} of $\CC$. By definition $\hat\CC$ is a finite direct sum of $\widehat\Hilb$.

\medskip
\noindent{\bf Acknowledgments}:
HZ would like to thank Zhengwei Liu for helpful discussions on von Neumann algebras. LK is supported by NSFC under Grant No. 11971219, and by Guangdong Provincial Key Laboratory (Grant No.2019B121203002) and by Guangdong Basic and Applied Basic Research Foundation under Grant No. 2020B1515120100. HZ is supported by NSFC under Grant No. 11871078.

\section{Topological nets}

The notion of a topological net is a generalization of a conformal net. By dropping the continuity of diffeomorphism covariance, we obtain a theory unifying conformal symmetries of gapless phases and onsite symmetries of gapped phases.

\subsection{Topological nets} \label{sec:topological-net}

We adopt the standard definition of the $n$-sphere
$$S^n=\{ (x_0,x_1,\dots,x_n) \mid \sum x_i^2=1 \}.$$
Let $S^n_\uparrow$ denote the upper half of $S^n$ defined by $x_0\ge0$ and let $S^n_\downarrow$ denote the lower half of $S^n$ defined by $x_0\le0$.

A {\em disk region} of the $n$-sphere $S^n$ is a closed region that is diffeomorphic to the $n$-disk $D^n$.
Let $\Disk^n$ denote the 1-category of disk regions of $S^n$ whose morphisms are inclusions of disk regions. Let $\VN$ denote the 1-category of von Neumann algebras whose morphisms are (unital, normal) homomorphisms of von Neumann algebras. 

\begin{defn}
An {\em $n$-dimensional net of von Neumann algebras} or an {\em $n$-net} for short is a functor $\CA:\Disk^{n-1}\to\VN$. A {\em partial $n$-net} is a functor $\CA:\CD\to\VN$ where $\CD$ is a full subcategory of $\Disk^{n-1}$.
\end{defn}

\begin{rem}
Unwinding the definition we see that a partial $n$-net $\CA:\CD\to\VN$ consists of a family of von Neumann algebras $\CA(I)$ for $I\in\CD$ and a family of homomorphisms $\CA(I\subset J):\CA(I)\to\CA(J)$ such that $\CA(I\subset I) = \Id_{\CA(I)}$ and $\CA(J\subset K)\circ\CA(I\subset J) = \CA(I\subset K)$. The von Neumann algebras $\CA(I)$ are referred to as {\em local observable algebras}.
\end{rem}

\begin{rem}
We do not require the isotony of a partial $n$-net, that is, $\CA(I)$ may be not a subalgebra of $\CA(J)$ for $I\subset J$. The isotony of a topological $n$-net follows from other axioms (see Remark \ref{rem:tn}(2)). This relaxation is necessary for defining a zero defect $n$-net (see Example \ref{exam:tn2dn}).
\end{rem}

The {\em direct sum} of two $n$-nets $\CA,\CB$ is an $n$-net $\CA\oplus\CB$ defined by
$$(\CA\oplus\CB)(I) = \CA(I)\oplus\CB(I).$$
The {\em tensor product} of $\CA,\CB$ is an $n$-net $\CA\boxtimes\CB$ defined by
$$(\CA\boxtimes\CB)(I) = \CA(I)\bar\otimes\CB(I).$$

The diffeomorphism group $\Diff(S^{n-1})$ acts on the collection of partial $n$-nets by the formula: $(h^*\CA)(I) = \CA(h(I))$ if $h$ preserves orientation or $(h^*\CA)(I) = \CA(h(I))^\op$ otherwise.

\begin{defn} \label{defn:sector}
A {\em sector} of a partial $n$-net $\CA:\CD\to\VN$ is a Hilbert space $\CH$ equipped with a family of homomorphisms $\rho_I:\CA(I)\to\CL(\CH)$ for $I\in\CD$ such that $\rho_J\circ\CA(I\subset J) = \rho_I$ for all inclusions $I\subset J$.
We say that $\CH$ is {\em irreducible} if it is neither zero nor the direct sum of two nonzero sectors; {\em semisimple} if it is a finite direct sum of irreducible sectors.

A {\em homomorphism} $f:\CH\to\CK$ between two sectors of $\CA$ is a bounded linear map such that $f$ is a left $\CA(I)$-module map for all $I\in\CD$. We use $\Sect(\CA)$ to denote the $*$-1-category of sectors of $\CA$.
\end{defn}

Given a sector $\CH$ of a partial $n$-net $\CA:\CD\to\VN$ and a closed region $R\subset S^{n-1}$ with smooth boundary, we use $\CO_\CA(\CH,R)$ or $\CO(\CH,R)$ to denote the von Neumann algebra on $\CH$ generated by the images of $\CA(I)$ for all $R\supset I\in\CD$. In particular, we denote $\CO_\CA(\CH,S^{n-1})$ by $\CO_\CA(\CH)$ or $\CO(\CH)$.

\begin{rem}
A sector $\CH$ of a partial $n$-net is irreducible if and only if $\CH$ is an irreducible left $\CO(\CH)$-module, i.e. $\CO(\CH)=\CL(\CH)$.
Similarly, $\CH$ is semisimple if and only if $\CH$ is a semisimple left $\CO(\CH)$-module. If $\CH$ is semisimple, $\CO(\CH)$ is a finite direct sum of type I factors.
\end{rem}

\begin{defn} \label{defn:top-net}
A {\em topological $n$-net} consists of the following data:
\begin{itemize}
\item an $n$-net $\CA:\Disk^{n-1}\to\VN$;
\item a sector $\CH_\CA$ of $\CA$, called the {\em vacuum sector};
\item a natural isomorphism $\eta_h:\CA\to h^*\CA$ for every $h\in\Diff(S^{n-1})$ such that $h^*(\eta_g) \circ \eta_h = \eta_{g\circ h}$.
\end{itemize}
These data are subject to the following axioms where $I,J,K\in\Disk^{n-1}$:
\begin{itemize}
\item {\em Locality:} If $I,J\subset K$ have disjoint interiors, then the images of $\CA(I)$ and $\CA(J)$ are commuting subalgebras of $\CA(K)$.

\item {\em Additivity:} If $I,J\subset K$ and if the interiors of $I$ and $J$ in $K$ cover $K$, then $\CA(K)$ is generated by the images of $\CA(I)$ and $\CA(J)$.


\item {\em Covariance:} For every orientation-preserving (resp. orientation-reversing) diffeomorphism $h\in\Diff(S^{n-1})$ there exists an isometry $\alpha_h:\CH_\CA\to\CH_\CA$ (resp. $\alpha_h:\CH_\CA\to\bar\CH_\CA$) rendering the following diagram commutative:
$$\xymatrix{
  \CA(I) \ar[r]^{\eta_{h,I}} \ar[d]_{\rho_I} & \CA(h(I)) \ar[d]^{\rho_{h(I)}} \\
  \CL(\CH_\CA) \ar[r]^{\Ad(\alpha_h)} & \CL(\CH_\CA) 
} \quad\quad \raisebox{-2em}{\text{resp.}} \quad\quad 
\xymatrix{
  \CA(I) \ar[r]^{\eta_{h,I}} \ar[d]_{\rho_I} & \CA(h(I))^\op \ar[d]^{\rho_{h(I)}} \\
  \CL(\CH_\CA) \ar[r]^{\Ad(\alpha_h)} & \CL(\bar\CH_\CA) .
}
$$
Moreover, $\eta_{h,I}=\Id_{\CA(I)}$ if $h|_I=\Id_I$.

\item {\em Vacuum property:} Identify $\CA(S^{n-1}_\downarrow)$ with $\CA(S^{n-1}_\uparrow)^\op$ via $\eta_r$ where $r$ is the reflection across the hyperplane $x_0=0$. Then the $\CA(S^{n-1}_\uparrow)$-$\CA(S^{n-1}_\downarrow)^\op$-bimodule $\CH_\CA$ is isometric to $L^2(\CA(S^{n-1}_\uparrow))$.
\end{itemize}
We use the notations $\CA$ and $(\CA,\CH_\CA)$ interchangeably for a topological $n$-net.
We say that $(\CA,\CH_\CA)$ is {\em irreducible} (resp. {\em semisimple}) if the vacuum sector $\CH_\CA$ is irreducible (resp. semisimple). 
\end{defn}

\begin{exam}
(1) The {\em zero topological $n$-net} $(\underline{0},0)$ where $\underline{0}(I)=0$.

(2) The {\em trivial topological $n$-net} $(\underline{\C},\C)$ where $\underline{\C}(I)=\C$.
\end{exam}

\begin{exam}
If $(\CA,\CH_\CA)$ and $(\CB,\CH_\CB)$ are topological $n$-nets then $(\CA\oplus\CB,\CH_\CA\oplus\CH_\CB)$ and $(\CA\boxtimes\CB,\CH_\CA\otimes\CH_\CB)$ are also topological $n$-nets.
\end{exam}

\begin{rem} \label{rem:tn}
There are several remarks concerning Definition \ref{defn:top-net}:

(1) All of $\CA(I)$ are von Neumann algebras on $\CH_\CA$ and Haag duality holds on $\CH_\CA$:
$$\CA(I') = \CA(I)'$$
where $I'$ is the closure of the complement of $I$.
Indeed, the claim holds for $I=S^{n-1}_\uparrow$ by the vacuum property and holds for general $I$ by the covariance axiom.

(2) As an immediate consequence, $\CA(I)$ is a subalgebra of $\CA(J)$ if $I\subset J$. 

(3) By the last statement of the covariance axiom, the isomorphism $\eta_{h,I}:\CA(I)\to\CA(h(I))$ or $\eta_{h,I}:\CA(I)\to\CA(h(I))^\op$ depends only on $h|_I$. In another word, a diffeomorphism $f:I\to J$ determines an isomorphism $f_*:\CA(I)\to\CA(J)$ or $f_*:\CA(I)\to\CA(J)^\op$.
Moreover, if $h_{I'}=\Id_{I'}$ then $\alpha_h\in\CA(I')'=\CA(I)$ so that $\eta_{h,I}=\Ad(\alpha_h):\CA(I)\to\CA(I)$ is an inner automorphism.

(4) By Haag duality, $\CO(\CH_A)'\subset\CO(\CH_A)$. Hence $\CO(\CH_A)'$ is the center $Z(\CO(\CH_\CA))$ and therefore $\CO(\CH_A)$ is of type I. In particular, $(\CA,\CH_\CA)$ is irreducible if and only if $\CO(\CH_\CA)=\CL(\CH_\CA)$ and if and only if $Z(\CO(\CH_\CA))=\C$. In general, if $(\CA,\CH_\CA)$ is semisimple then it is a finite direct sum of irreducible ones where the decomposition is induced by that of $Z(\CO(\CH_\CA))$.

(5) The isometry $\alpha_h$ is unique up to a unitary in $Z(\CO(\CH_\CA))$. Therefore, if $(\CA,\CH_\CA)$ is irreducible then $\CH_A$ carries a (possibly discontinuous) projective action of $\Diff(S^{n-1})$.
\end{rem}

\begin{defn}
We say that a topological $n$-net $(\CA,\CH_\CA)$ is {\em finite} if it is semisimple and satisfies the following conditions for any (equivalently, some) disjoint $I,J\in\Disk^{n-1}$:
\begin{itemize}
\item {\em Split property:} The homomorphism from the algebraic tensor product $\CA(I)\otimes_{alg}\CA(J) \to \CL(\CH_\CA)$ extends to a homomorphism from the spatial tensor product $\CA(I)\bar\otimes\CA(J) \to \CL(\CH_\CA)$.
\item {\em Duality:} The $\CO(\CH_\CA,I\cup J)$-$\CO(\CH_\CA,I'\cap J')^\op$-bimodule $\CH_\CA$ is dualizable\footnote{We do not required the normalization condition of \cite{BDH14}. See Remark \ref{rem:vnbim-dual}.}.
\end{itemize}
\end{defn}

\begin{exam} \label{exam:1tn}
Note that a 1-net $\CA$ consists of a pair of von Neumann algebras $\CA(1)$ and $\CA(-1)$. For a topological 1-net $(\CA,\CH_\CA)$, $\CA(-1)$ is determined by $\CA(1)$ and $\CH_\CA$ is determined up to isometry. Therefore, a sector of $\CA$ is simply an $\CA(1)$-$\CA(1)$-bimodule.
Moreover, $(\CA,\CH_\CA)$ satisfies the split property if and only if $\CA(1)$ is of type I; $(\CA,\CH_\CA)$ is finite if and only if $\CA(1)$ is a finite direct sum of type I factors.
\end{exam}

\begin{exam} \label{exam:conf-net}
A (finite semisimple) conformal net in the sense of \cite{BDH15,BDH19b} is equivalent to a (finite) topological 2-net satisfying the additional assumptions of continuity, strong additivity and split property. See the terminology therein.
(An irreducible conformal net in the sense of \cite{BDH15} with positive energy spectrum is equivalent to a conformal net in the sense of \cite{GF93} satisfying the additional assumptions of strong additivity and diffeomorphism covariance.)
\end{exam}

\subsection{Defect nets} \label{sec:defect-net}

We identify $S^{n-1}$ with the equator of $S^n$ defined by $x_n=0$ so that $S^n_\uparrow\cap S^{n-1}=S^{n-1}_\uparrow$ and $S^n_\downarrow\cap S^{n-1}=S^{n-1}_\downarrow$.

\smallskip

Let $M$ be a smooth manifold and $N\subset M$ a closed submanifold of codimension one. A {\em local parametrization} of $M$ around $N$ is an equivalence class of smooth embeddings $f:N\times(-1,1)\to M$ that satisfy $f(x,0)=x$ for $x\in N$; two smooth embeddings are equivalent if they agree on a neighborhood of $N\times\{0\}$. For example, there is a standard local parametrization of $S^n$ around $S^{n-1}$ represented by the smooth embedding $(x,s)\mapsto(x\sqrt{1-s^2},s)$.

Let $\Diff(S^n)_k \subset \Diff(S^n)$ be the subgroup of diffeomorphisms $h$ such that $h(S^{n-i})=S^{n-i}$ and $h$ preserves the standard local parametrization of $S^{n-i+1}$ around $S^{n-i}$ for $1\le i\le k$.
Let $\Disk^n_k$ denote the full subcategory of $\Disk^n$ consisting of those disk regions $I$ such that
$I=h(S^n_\uparrow)$ for some $h\in\Diff(S^n)_{d(I)}$ where $d(I)=\max\{i\mid 0\le i\le k, \, I\cap S^{n-i}\ne\emptyset\}$.

\begin{defn}
A {\em defect $n$-net} of codimension $k$ where $0\le k<n$ consists of the following data:
\begin{itemize}
\item a partial $n$-net $\CA:\Disk^{n-1}_k\to\VN$;
\item a sector $\CH_\CA$ of $\CA$, called the {\em vacuum sector};
\item a natural isomorphism $\eta_h:\CA\to h^*\CA$ for every $h\in\Diff(S^{n-1})_k$ such that $h^*(\eta_g) \circ \eta_h = \eta_{g\circ h}$.
\end{itemize}
These data are subject to the following axioms where $I,J,K\in\Disk^{n-1}_k$:
\begin{itemize}
\item {\em Locality:} If $I,J\subset K$ have disjoint interiors, then the images of $\CA(I)$ and $\CA(J)$ are commuting subalgebras of $\CA(K)$.

\item {\em Additivity:} If $I,J\subset K$ and if the interiors of $I$ and $J$ in $K$ cover $K$, then $\CA(K)$ is generated by the images of $\CA(I)$ and $\CA(J)$.


\item {\em Covariance:} For every orientation-preserving (resp. orientation-reversing) diffeomorphism $h\in\Diff(S^{n-1})_k$ there exists an isometry $\alpha_h:\CH_\CA\to\CH_\CA$ (resp. $\alpha_h:\CH_\CA\to\bar\CH_\CA$) rendering the following diagram commutative:
$$\xymatrix{
  \CA(I) \ar[r]^{\eta_{h,I}} \ar[d]_{\rho_I} & \CA(h(I)) \ar[d]^{\rho_{h(I)}} \\
  \CL(\CH_\CA) \ar[r]^{\Ad(\alpha_h)} & \CL(\CH_\CA) 
} \quad\quad \raisebox{-2em}{\text{resp.}} \quad\quad 
\xymatrix{
  \CA(I) \ar[r]^{\eta_{h,I}} \ar[d]_{\rho_I} & \CA(h(I))^\op \ar[d]^{\rho_{h(I)}} \\
  \CL(\CH_\CA) \ar[r]^{\Ad(\alpha_h)} & \CL(\bar\CH_\CA) .
}
$$
Moreover, $\eta_{h,I}=\Id_{\CA(I)}$ if $h|_I=\Id_I$.

\item {\em Vacuum property:} Identify $\CA(S^{n-1}_\downarrow)$ with $\CA(S^{n-1}_\uparrow)^\op$ via $\eta_r$ where $r$ is the reflection across the hyperplane $x_0=0$. Then the $\CA(S^{n-1}_\uparrow)$-$\CA(S^{n-1}_\downarrow)^\op$-bimodule $\CH_\CA$ is isometric to $L^2(\CA(S^{n-1}_\uparrow))$.
\end{itemize}
We use the notations $\CA$ and $(\CA,\CH_\CA)$ interchangeably for a topological $n$-net.
We say that $(\CA,\CH_\CA)$ is {\em irreducible} (resp. {\em semisimple}) if the vacuum sector $\CH_\CA$ is irreducible (resp. semisimple). 
\end{defn}

\begin{rem} \label{rem:res-k-k+1}
A topological $n$-net is exactly a defect $n$-net of codimension zero. A defect $n$-net of codimension $k$ restricts to a defect $n$-net of codimension $k+1$.
\end{rem}

\begin{rem}
(1) By the vacuum property and the covariance axiom, if $I$ intersects $S^{n-1-k}$ then $\CA(I)$ is a von Neumann algebra on $\CH_\CA$ and Haag duality holds on $\CH_\CA$:
$$\CA(I') = \CA(I)'.$$

(2) As an immediate consequence, $\CA(I)$ is a subalgebra of $\CA(J)$ if $I\subset J$ and if $I,J$ intersect $S^{n-1-k}$. 

(3) By the last statement of the covariance axiom, the isomorphism $\eta_{h,I}:\CA(I)\to\CA(h(I))$ or $\eta_{h,I}:\CA(I)\to\CA(h(I))^\op$ depends only on $h|_I$. 
If $I$ intersects $S^{n-1-k}$ and if $h_{I'}=\Id_{I'}$ then $\eta_{h,I}:\CA(I)\to\CA(I)$ is an inner automorphism.

(4) By Haag duality, $\CO(\CH_A)'\subset\CO(\CH_A)$. Hence $\CO(\CH_A)'$ is the center $Z(\CO(\CH_\CA))$ and therefore $\CO(\CH_A)$ is of type I. In particular, $(\CA,\CH_\CA)$ is irreducible if and only if $\CO(\CH_\CA)=\CL(\CH_\CA)$ and if and only if $Z(\CO(\CH_\CA))=\C$. 

(5) The isometry $\alpha_h$ is unique up to a unitary in $Z(\CO(\CH_\CA))$. Therefore, if $(\CA,\CH_\CA)$ is irreducible then $\CH_A$ carries a (possibly discontinuous) projective action of $\Diff(S^{n-1})_k$.
\end{rem}

\begin{defn}
We say that a defect $n$-net $(\CA,\CH_\CA)$ of codimension $k$ is {\em finite} if it is semisimple and satisfies the following conditions for any (equivalently, some) disjoint $I,J\in\Disk^{n-1}_k$ that intersect $S^{n-1-k}$:
\begin{itemize}
\item {\em Split property:} The homomorphism from the algebraic tensor product $\CA(I)\otimes_{alg}\CA(J) \to \CL(\CH_\CA)$ extends to a homomorphism from the spatial tensor product $\CA(I)\bar\otimes\CA(J) \to \CL(\CH_\CA)$.
\item {\em Duality:} The $\CO(\CH_\CA,I\cup J)$-$\CO(\CH_\CA,I'\cap J')^\op$-bimodule $\CH_\CA$ is dualizable.
\end{itemize}
\end{defn}

\begin{exam}
If $(\CA,\CH_\CA)$ is a defect $n$-net of codimension $k$, then $(\CA^\op,\bar\CH_\CA)$ is also a defect $n$-net of codimension $k$ where $\CA^\op(I)=\CA(I)^\op$.
\end{exam}

\begin{exam} \label{exam:tn2dn}
A defect $n$-net $(\CA,\CH_\CA)$ of codimension $k$ induces a defect $(n+1)$-net $(\hat\CA,\CH_\CA)$ of codimension $k+1$ where $\hat\CA(I) = \CA(I\cap S^{n-1})$ if $I$ intersects $S^{n-1}$ or $\hat\CA(I)=\C$ otherwise. In particular, the zero topological $n$-net $\underline{0}$ induces a zero defect $(n+1)$-net of codimension one which do not satisfy the isotony axiom.
\end{exam}

\begin{exam} \label{exam:conf-defect}
A (finite semisimple) defect defined in \cite{BDH19b} is a (finite) defect 2-net of codimension one.
Indeed, a finite semisimple defect satisfies the duality condition by \cite[Proposition 3.18]{BDH19a}.
\end{exam}

\subsection{Nets of onsite symmetries} \label{sec:net-oss}

We define for every finite group $G$ a finite irreducible topological $n$-net that describes the local quantum symmetry of an $n$D gapped phase with an onsite symmetry $G$.

\smallskip

Let $G$ be a finite group and let $V$ be a finite-dimensional $G$-module which contains all the irreducible $G$-modules as direct summands. Fix a $G$-invariant vector of norm one $\mu\in V$.

Consider the orthonormal frame bundle $E$ over $S^{n-1}$, i.e. the fiber $E_x$ over a point $x\in S^{n-1}$ consists of the orthonormal frames at $x$. For example, $E$ is a double cover of $S^1$ for $n=2$. In general, $E$ has two connected components $E^\pm$ corresponding to the two orientations of $S^{n-1}$ (in fact, $E\cong O(n)$).

Assign $W_s=V^*$ for $s\in E^+$ and $W_s=V$ for $s\in E^-$. 
Define $\CH = \bigotimes_{s\in E}W_s$ to be the completion of the pre-Hilbert space spanned by the following vectors 
$$\{ \otimes w_s \mid \text{$w_s=\mu$ or $\mu^*$ except for finitely many $s\in E$} \}.$$
Define $\CH_\CA$ to be the subspace of $G$-invariants in $\CH$
$$\CH_\CA = \CH^G = \{ w\in\CH \mid G w=w \}.$$
It carries an obvious action by the diffeomorphism group $\Diff(S^{n-1})$.

Let $I\subset S^{n-1}$ be a disk region. Note that $I$ and $I'$ divide $E$ into two disjoint parts $E(I)$ and $E(I')$ such that an orthonormal frame $(e_1,\dots,e_{n-1})$ at $x\in\partial I$ belongs to $E(I)$ if the first $e_i$ not tangent to $\partial I$ points towards $I$. Define $\CA(I)$ to be the von Neumann algebra on $\CH_\CA$ generated by $\End_G(\bigotimes_{s\in P}W_s)$ where $P$ runs over all finite subsets of $E(I)$. 

\begin{prop}
The pair $(\CA,\CH_\CA)$, together with the obvious action of $\Diff(S^{n-1})$ on $\CA$, defines a finite irreducible topological $n$-net. 
\end{prop}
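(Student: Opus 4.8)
The plan is to realize everything through the incomplete infinite tensor product $\CH=\bigotimes_{s\in E}W_s$ taken with respect to the $G$-fixed reference vectors (a vector being $\otimes w_s$ with $w_s\in\{\mu,\mu^*\}$ for all but finitely many $s$), and then to read off each axiom from the combinatorics of the partition $E=E(I)\sqcup E(I')$ together with the isotypic decomposition of the $G$-action. First I would check that $\CA$ is a genuine functor: since $I\subseteq J$ forces $E(I)\subseteq E(J)$, the generating algebras $\End_G(\bigotimes_{P}W_s)$ for finite $P\subseteq E(I)$ sit inside those for $P\subseteq E(J)$, so $\CA(I)\subseteq\CA(J)$ and $\CA(I\subseteq J)$ is simply this (unital, normal) inclusion; isotony is thus manifest. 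Locality is equally direct: if $I,J$ have disjoint interiors, the boundary convention (a frame is counted on the side it points into) makes $E(I)$ and $E(J)$ disjoint, so the two families of operators act on disjoint tensor factors and commute.

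The covariance data I would construct by letting $h\in\Diff(S^{n-1})$ act on the frame bundle $E$ by pushforward. For orientation-preserving $h$ this permutes $E^+$ and $E^-$ within themselves and relabels factors by the identities $V\to V$, $V^*\to V^*$, giving a unitary $U_h$ on $\CH$; for orientation-reversing $h$ it interchanges $E^+\leftrightarrow E^-$, hence identifies $W_s=V^*$ with $W_{h(s)}=V$ antilinearly through $V^*\cong\overline V$, giving a map $\CH\to\overline\CH$. In either case $U_h$ is $G$-equivariant, so it restricts to the required isometry $\alpha_h$ on $\CH_\CA=\CH^G$, and setting $\eta_{h,I}=\Ad(U_h)$ makes the covariance square commute tautologically; functoriality of the pushforward yields the cocycle identity, and $h|_I=\Id$ fixes the factors over $E(I)$ and hence gives $\eta_{h,I}=\Id$.

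The technical heart is the vacuum property, which I would prove by an explicit decomposition rather than by exhibiting a cyclic vector. Writing $\CH_\uparrow=\bigotimes_{E(S^{n-1}_\uparrow)}W_s$ and $\CH_\downarrow$ similarly, decompose into $G$-isotypic components $\CH_\uparrow=\bigoplus_\lambda M_\lambda^\uparrow\otimes U_\lambda$ and $\CH_\downarrow=\bigoplus_\mu M_\mu^\downarrow\otimes U_\mu$; then $\CH_\CA=(\CH_\uparrow\otimes\CH_\downarrow)^G=\bigoplus_\lambda M_\lambda^\uparrow\otimes M_{\bar\lambda}^\downarrow$, using that $(U_\lambda\otimes U_\mu)^G\neq0$ iff $\mu=\bar\lambda$. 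One first checks that the von Neumann algebra generated by the finite $\End_G$'s over $E(S^{n-1}_\uparrow)$ is the full relative commutant $\End_G(\CH_\uparrow)=\bigoplus_\lambda\CL(M_\lambda^\uparrow)$, a finite direct sum of type I factors; this identifies $\CA(S^{n-1}_\uparrow)$ and $\CA(S^{n-1}_\downarrow)$ as $\bigoplus_\lambda\CL(M_\lambda^\uparrow)$ and $\bigoplus_\lambda\CL(M_{\bar\lambda}^\downarrow)$ acting on the two tensor legs, which are visibly each other's commutants on every summand, giving Haag duality $\CA(S^{n-1}_\downarrow)=\CA(S^{n-1}_\uparrow)'$ on the hemispheres and hence for all $I$ by covariance (Remark~\ref{rem:tn}(1)). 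Comparing with $L^2(\bigoplus_\lambda\CL(M_\lambda^\uparrow))=\bigoplus_\lambda M_\lambda^\uparrow\otimes\overline{M_\lambda^\uparrow}$, the vacuum property reduces to the identification $M_{\bar\lambda}^\downarrow\cong\overline{M_\lambda^\uparrow}$ as right modules, which I would obtain from the reflection $r$: being orientation-reversing it sends $E(S^{n-1}_\uparrow)$ to $E(S^{n-1}_\downarrow)$ and swaps $E^\pm$, producing an antiunitary $\CH_\uparrow\to\CH_\downarrow$ that conjugates the $G$-action and carries $M_\lambda^\uparrow$ onto $\overline{M_{\bar\lambda}^\downarrow}$, intertwining the actions exactly as $\eta_r$ prescribes.

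The remaining points are comparatively short. Irreducibility (which also gives the semisimplicity required for finiteness) follows because the algebra generated by all local algebras is the restriction of $\End_G(\CH)=(G)'$ to $\CH^G$, and this is all of $\CL(\CH_\CA)$ since $\CH^G$ is the trivial-isotypic multiplicity space of $\CH$; thus $\CO(\CH_\CA)=\CL(\CH_\CA)$. For finiteness, the split property is automatic for disjoint $I,J$ because $\CA(I),\CA(J)$ are type I and act on disjoint tensor factors, so the algebraic product map extends spatially, and the duality condition reduces to dualizability of a bimodule between finite direct sums of type I factors, which holds since $G$ is finite and $V$ is finite-dimensional. The step I expect to be the real obstacle is \emph{additivity}: when the interiors of $I,J$ cover $K$ the regions overlap, and one must show that $\End_G(\CH_I)$ and $\End_G(\CH_J)$, which are block-diagonal in the isotypic labels, together generate the strictly larger $\End_G(\CH_K)$ that mixes isotypic blocks. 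This forces one to use that the overlap frames $E(I)\cap E(J)$ already carry every irreducible of $G$ (guaranteed by $V$ containing all irreducibles), so that equivariant operators can be routed through the overlap, and making this precise in the infinite tensor product is the most delicate part of the argument.
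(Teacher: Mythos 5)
Your proposal is correct, and its skeleton matches the paper's proof: both rest on the hemisphere factorization $\CH=\CH_\uparrow\otimes\CH_\uparrow^*$, the identification $\CA(S^{n-1}_\uparrow)=\CL_G(\CH_\uparrow)$ (which, as you note, needs the normal $G$-averaging expectation to commute generation with taking $G$-invariants), passage to invariants for the vacuum property, and ``semisimple hence dualizable'' for the duality condition. The differences are worth recording. For the vacuum property the paper is slicker: it observes $\CH=L^2(\CL(\CH_\uparrow))$ outright and passes to $G$-invariants, whereas you expand the same identification through the isotypic decomposition $\bigoplus_\lambda M_\lambda\otimes U_\lambda$ and use the reflection $r$ to match multiplicity spaces $M^\downarrow_{\bar\lambda}\cong\overline{M^\uparrow_\lambda}$ --- more explicit, same content. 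For irreducibility the routes genuinely diverge: the paper generates $\CL(\CH^G)$ from $\CL_G(\CH_\uparrow)$, $\CL_G(\CH_\uparrow^*)$ together with operators supported at the equatorial point $(0,\dots,0,1)$, which intertwine the two halves; you instead observe that any finite $P\subset E$ lies in $E(I)$ for a \emph{single} disk region $I$, so $\CO(\CH_\CA)$ contains every $\End_G(\bigotimes_{s\in P}W_s)$ and is therefore the cut-down of $\End_G(\CH)$ to the trivial isotypic component --- arguably cleaner, since it avoids the limiting argument implicit in ``supported on a point.'' Your instinct about additivity is also sound where the paper simply declares it clear: it is not formal, because $\End_G(W_{P_1}\otimes W_{P_2})$ is strictly larger than $\End_G(W_{P_1})\vee\End_G(W_{P_2})$ in general, and the mechanism you sketch is the right one --- if the overlap carries a $G$-module $Y$ containing the regular representation, a direct commutant computation shows that $\End_G(X\otimes Y)\otimes1$ and $1\otimes\End_G(Y\otimes Z)$ have joint commutant exactly the diagonal image of $\C G$, hence generate $\End_G(X\otimes Y\otimes Z)$, and this passes to $\CH^G$ by cutting down along the central projection onto invariants. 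Two small imprecisions to repair: the pushforward of $h\in\Diff(S^{n-1})$ does not preserve orthonormal frames, so your action on $E$ needs a Gram--Schmidt (QR) correction, which is functorial and so preserves the cocycle identity $h^*(\eta_g)\circ\eta_h=\eta_{g\circ h}$; and in the duality step, dualizability is \emph{not} automatic for bimodules between finite direct sums of type I factors --- what is needed, and what the paper invokes, is semisimplicity (finite multiplicities) of the $\CO(\CH_\CA,I\cup J)$-$\CO(\CH_\CA,I'\cap J')^\op$-bimodule $\CH_\CA$, which again follows from the finite-dimensional commutant controlled by $\C G$ rather than bare finiteness of $G$ and $\dim V$.
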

\begin{proof}
The axioms of locality, additivity, covariance and split property are all clear. To see the vacuum property we note that $\CH = \CH_\uparrow \otimes \CH_\uparrow^*$ where $\CH_\uparrow = \bigotimes_{s\in E(S^{n-1}_\uparrow)} W_s$ hence $\CH = L^2(\CL(\CH_\uparrow))$. Moreover, $\CL(\CH_\uparrow)$ is generated by $\End_\C(\bigotimes_{s\in P}W_s)$ for finite $P\subset E(S^{n-1}_\uparrow)$ hence $\CL_G(\CH_\uparrow)=\CA(S^{n-1}_\uparrow)$. Passing to the $G$-invariants yields $\CH_\CA = L^2(\CA(S^{n-1}_\uparrow))$. 
For the duality condition, note that for any disjoint $I,J\in\Disk^{n-1}$, the $\CO(\CH_\CA,I\cup J)$-$\CO(\CH_\CA,I'\cap J')^\op$-bimodule $\CH_A$ is semisimple hence dualizable.
It remains to show that $\CO(\CH_\CA) = \CL(\CH_\CA)$. Indeed, the operators from $\CO(\CH_\CA)$ supported on the point $(0,\dots,0,1)\in S^{n-1}$ intertwine $\CH_\uparrow$ and $\CH_\uparrow^*$. Together with these operators, $\CL_G(\CH_\uparrow)$ and $\CL_G(\CH_\uparrow^*)$ generate $\CL(\CH^G)$.
\end{proof}

\begin{rem}
The topological $n$-net $\CA$ constructed above (except the trivial case) differs from a conformal net on several aspects.
First, from the proof above we notice that all of $\CA(I)$ are type I von Neumann algebras on an inseparable Hilbert space while the local observable algebras of a conformal net are usually type III von Neumann algebras on a separable Hilbert space.
Second, $\CA$ does not satisfy the strong additivity axiom. 
Thirdly and most importantly, the whole diffeomorphism group $\Diff(S^{n-1})$ acts genuinely on $\CH_\CA$, however, the action is not continuous in any reasonable sense because every vector of $\CH_\CA$ lies over countably many points of $S^{n-1}$. In particular, 
$\Diff(S^{n-1})$ acts strongly continuously only on the one-dimensional subspace spanned by the vacuum vector $\bigotimes_{s\in E^+}(\mu\otimes\mu^*)$. 
This suggests that the topological $n$-net $\CA$ describes the local quantum symmetry of a gapped phase.
\end{rem}

The above remark motivates the following definition. Typical examples include that a conformal net in the sense of \cite{BDH15} is a conformal 2-net and the topological $n$-net $\CA$ constructed above is gapped.

\begin{defn} \label{defn:net-gap}
We say that a defect $n$-net $\CA$ of codimension $k$ is {\em gapless} if $\Diff(S^{n-1})_k$ acts strongly continuously on an infinite-dimensional subspace of $\CH_\CA$. Otherwise, we say that $\CA$ is {\em gapped}.
We say that a topological $n$-net $\CA$ is a {\em conformal $n$-net} if $\Diff(S^{n-1})$ acts strongly continuously on $\CH_\CA$.
\end{defn}

\begin{rem}
If an $n$D CFT comes from a conformal $n$-net, then it admits not only an action of the conformal transformation group $\Conf(S^n)$ but also an action of the diffeomorphism group $\Diff(S^{n-1})$. This prompts a positive answer to the long standing question whether there are infinite-dimensional symmetries for higher dimensional CFT's as in the two dimensional case. We will come back to this issue in a subsequent work.
\end{rem}

Now we focus on dimension $n=2$ so that $E$ is a double cover of $S^1$. We use $x^\pm\in E^\pm$ to denote the point lying over $x\in S^1$.

We define a defect 2-net $\CF$ of codimension one between the trivial topological 2-net $\underline{\C}$ and $\CA$ (i.e. a boundary 2-net of $\CA$). 
Let $A$ be a nonzero $*$-Frobenius algebra in $\Rep G$ (in particular, $A$ is a separable algebra carrying a $G$-action). Assign $U_{(1,0)^+}=\C$ and $U_{(1,0)^-}=A^*$; $U_{(-1,0)^+}=A$ and $U_{(-1,0)^-}=\C$; $U_{x^\pm}=\C$ if $x_1<0$; $U_{x^+}=V^*$ and $U_{x^-}=V$ if $x_1>0$. Let $\CK=\bigotimes_{s\in E}U_s$
and define $\CH_\CF$ to be the subspace of $A$-invariants in $\CK^G$
$$\CH_\CF = (\CK^G)^A = \{ w\in\CK^G \mid a w=w a, \, \forall a\in A \}$$
where $A$ acts from the left on $A$ and from the right on $A^*$.

Let $I\in\Disk^1_1$ be an arc. Define $\CF(I)$ to be the von Neumann algebra on $\CH_\CF$ generated for all finite subsets $P\subset E(I)$ by the algebras $\End_G(\bigotimes_{s\in P}U_s)^A$ or $\End_G(\bigotimes_{s\in P}U_s)$ or $\C$ depending on the types of $U_s$.
Then $(\CF,\CH_\CF)$ is a finite defect 2-net of codimension one. It is irreducible if and only if $A$ is a simple $*$-Frobenius algebra in $\Rep G$. 

\begin{prop}
$\Sect(\CF)$ is equivalent to the $*$-1-category of $A$-$A$-bimodules in $\widehat{\Rep G}$.
\end{prop}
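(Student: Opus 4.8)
The plan is to exhibit a $*$-equivalence that sends the vacuum sector $\CH_\CF$ to the regular bimodule $A$, by separating the rigid bulk contribution of the $\CA$-region on the upper arc $\{x_1\ge 0\}$ from the genuine defect degrees of freedom concentrated at the two points of $S^0=\{(1,0),(-1,0)\}$. Throughout I write $\CK_u=\bigotimes_{x_1>0}(V^*\otimes V)$ for the fixed bulk factor and recall that on the lower arc $\{x_1<0\}$ one has $\CF(I)=\C$, so by additivity and Haag duality a sector of $\CF$ is determined by its restriction to arcs meeting $S^0$.

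First I treat the easy direction, from an $A$-$A$-bimodule. Given $M$ in $\widehat{\Rep G}$ with commuting left and right $A$-actions, I set $\CH_M=(M\otimes\CK_u)^G$ and locate the left action of $A$ at $(-1,0)$ and the right action at $(1,0)$. I check this is a sector by defining $\rho_I$: on arcs inside the trivial or bulk region $\rho_I$ acts through $\CK_u$ exactly as for the bulk net, whereas on an arc meeting a defect point the generators $\End_G(\bigotimes_{s\in P}U_s)^A=\End_{G\times A}(\bigotimes_{s\in P}U_s)$ act on $M$ through its bimodule structure. Because $(A\otimes A^*)^A\cong A$ as a $G$-module (via $c\mapsto r_c$), the vacuum $\CH_\CF=(\CK^G)^A$ is recovered as $\CH_A$, so this functor sends $A\mapsto\CH_\CF$. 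Naturality of $\rho$ in inclusions, compatibility with the covariance isometries, and with the $*$-structure are then straightforward.

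Conversely, from a sector $\CH$ I extract a bimodule. Restricting $\rho$ to arcs contained in the bulk $\CA$-region realizes on $\CH$ a sector of (a partial net isomorphic to) the bulk onsite net; using the irreducibility of the bulk net together with the split and duality properties of the finite defect net $\CF$, this restriction is a completed multiple of the bulk reference, and I take $M$ to be the corresponding multiplicity space, i.e. the space of intertwiners from the reference into $\CH$. The space $M$ carries a unitary $G$-action by the identification of the category of bulk sectors with $\widehat{\Rep G}$, and it acquires commuting left and right $A$-actions from the operators of $\CF$ localized at $(-1,0)$ and at $(1,0)$ respectively, which commute with the bulk and with one another by locality; thus $M$ is an $A$-$A$-bimodule in $\widehat{\Rep G}$. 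I then verify that $\CH\mapsto M\mapsto\CH_M$ and $M\mapsto\CH_M\mapsto M$ are naturally isomorphic to the identities, and that a bounded left $\CF(I)$-module map $\CH\to\CH'$ restricts to exactly a bounded $G$-equivariant $A$-$A$-bimodule map $M\to M'$, giving a fully faithful $*$-functor.

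The main obstacle is the reconstruction step: proving that the restriction of an arbitrary sector to the bulk region is precisely a multiple of the reference sector, and that the multiplicity space equipped with the two point-localized $A$-actions recovers $\CH$ with no residual data. Essential surjectivity—that every sector arises as some $\CH_M$—is the delicate point, and it is where the additivity axiom (forcing a sector to be generated by the bulk together with the two defect points) and Haag duality on $\CH_\CF$ must be combined with the identification of $\widehat{\Rep G}$ as the category of bulk sectors. Once this localization-and-reconstruction statement is in hand, identifying the resulting data with an $A$-$A$-bimodule in $\widehat{\Rep G}$ and matching morphisms and $*$-structures is routine.
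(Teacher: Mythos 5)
There is a genuine gap, and you have in fact located it yourself: the entire content of the statement sits in your ``localization-and-reconstruction'' step, which you defer (``Once this localization-and-reconstruction statement is in hand\dots is routine''). Nothing in the proposal actually proves that the restriction of an arbitrary sector to the bulk region is quasi-equivalent to the reference, nor that the multiplicity space together with point-localized data recovers $\CH$ with no residual information. A sector of $\CF$ is a priori just a compatible family of normal representations of the local algebras, and the argument that this family is controlled by a single algebra is exactly what must be supplied. The paper does this in one stroke: a sector is the same thing as a normal module over the global von Neumann algebra $\CL_{A|A}(\CK)^G$ of $G$-equivariant $A$-bimodule endomorphisms of $\CK$ (an atomic type I algebra generated by the local algebras), and this algebra is Morita equivalent to $\End_{A|A}(W)^G$ for the finite-dimensional object $W=A\otimes(V\otimes V^*)^{\otimes k}\otimes A^*$, $k\ge1$, which contains all simple $A$-$A$-bimodules in $\Rep G$; the equivalence with $A$-$A$-bimodules in $\widehat{\Rep G}$ is then standard Morita theory.

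A second, more pointed error: you assert that the multiplicity space $M$ ``acquires commuting left and right $A$-actions from the operators of $\CF$ localized at $(-1,0)$ and at $(1,0)$.'' But $A$ is not among the observables of $\CF$. By construction, $\CF(I)$ for an arc meeting a defect point is generated by algebras of the form $\End_G(\bigotimes_{s\in P}U_s)^A$, i.e.\ $G$-invariant elements of the \emph{commutant} of the $A$-actions; a generic element $a\in A$ is neither $G$-invariant nor $A$-central, hence defines no operator on any sector. What the multiplicity space naturally carries is a representation of these commutant algebras, and passing from that to an honest $A$-$A$-bimodule structure is precisely a Morita-theoretic step (modules over $\End_{A|A}(W)^G$ versus bimodules), not a direct action. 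So even granting your reconstruction lemma, the bridge from ``multiplicity space with commutant action'' to ``$A$-$A$-bimodule in $\widehat{\Rep G}$'' is missing and cannot be installed by localization alone. Your easy direction, $M\mapsto(M\otimes\CK_u)^G$ with the commutant algebras acting through the bimodule structure via $M\boxtimes_A(A\otimes X)\cong M\otimes X$, is fine and is implicitly the inverse Morita functor; repairing the proof amounts to replacing your sketched converse by the Morita argument the paper uses.
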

\begin{proof}
Let $W=A\otimes(V\otimes V^*)^{\otimes k}\otimes A^*$ where $k\ge1$. Since $W$ contains all the simple $A$-$A$-bimodules in $\Rep G$, the category of $A$-$A$-bimodules in $\Rep G$ is equivalent to the category of finite-dimensional left $\End_{A|A}(W)^G$-modules. On the other hand, $\Sect(\CF)$ is equivalent to the category of left modules over the von Neumann algebra $\CL_{A|A}(\CK)^G$ which is Morita equivalent to $\End_{A|A}(W)^G$.
\end{proof}

\begin{cor}
$\Sect(\CA)\simeq\widehat{\FZ_1(\Rep G)}$.
\end{cor}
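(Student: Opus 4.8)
The plan is to compute $\Sect(\CA)$ by the same finite-dimensional reduction that proves the preceding proposition, and then to recognize the resulting finite-dimensional algebra as (Morita equivalent to) the quantum double $D(G)=\Fun(G)\rtimes\C[G]$, at which point the classical equivalence $\Rep(D(G))\simeq\FZ_1(\Rep G)$ finishes the job. The new ingredient compared with the boundary net $\CF$ is that $\CA$ lives on the whole circle with no distinguished defect point, so the charge data is constrained by a \emph{half-braiding} rather than a plain module structure; this is precisely what upgrades $\Rep G$ to its center.

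First I would fix two points $p=(1,0)$ and $q=(-1,0)$ splitting $S^1$ into the upper arc $S^1_\uparrow$ and the lower arc $S^1_\downarrow$. By the vacuum property (Definition~\ref{defn:top-net}) and Haag duality (Remark~\ref{rem:tn}) the half-arc algebras are the type~I algebras $\CA(S^1_\uparrow)=\CL_G(\CH_\uparrow)$ and $\CA(S^1_\downarrow)\cong\CL_G(\CH_\uparrow)^\op$, with $\CH_\CA=L^2(\CA(S^1_\uparrow))$, as already extracted in the proof of the proposition above. Restricting a sector $\CK$ to these algebras presents it as an $\CA(S^1_\uparrow)$-$\CA(S^1_\downarrow)^\op$-bimodule; additivity together with the type~I structure reduces this datum to a finite-dimensional $G$-module (the ``charge''), concretely a module over $\End_G(W)$ for the generating space $W=(V\otimes V^*)^{\otimes k}$, $k\ge1$.

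The heart of the argument is to read off, from the covariance axiom, the extra structure a charge must carry in order to extend to a genuine sector of the whole net. Transporting a small sub-arc across the point $p$, from $S^1_\uparrow$ into $S^1_\downarrow$, by a diffeomorphism in $\Diff(S^1)$ supported near $p$, yields a natural family of isomorphisms comparing the two ways of letting $\Rep G$ act; I would verify that this family is a half-braiding and satisfies the hexagon identities, so that a sector is precisely an object of the Drinfeld center $\FZ_1(\Rep G)$. Equivalently, in the finite-dimensional picture a sector is a module over the tube algebra of $\Rep G$, which is Morita equivalent to $D(G)$, and $\Rep(D(G))\simeq\FZ_1(\Rep G)$. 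Completing as in the previous proposition then gives $\Sect(\CA)\simeq\widehat{\FZ_1(\Rep G)}$.

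The main obstacle is this structural step: extracting the half-braiding from diffeomorphism covariance and checking the coherence (hexagon) axioms, so that the data of a sector is \emph{exactly} a half-braided $\Rep G$-module, neither more nor less, together with the Morita equivalence between the inseparable von Neumann algebra generated on $\CK$ and the finite-dimensional tube algebra. As a consistency check, and perhaps as a cleaner route, one can argue indirectly: specializing the proposition to the trivial $*$-Frobenius algebra $A=\C$ gives a gapped boundary net $\CF$ with $\Sect(\CF)\simeq\widehat{\Rep G}$, and the boundary--bulk relation $\Sect(\CA)\simeq\FZ_1(\Sect(\CF))$ then yields the same answer; making that relation precise within the net formalism is the price one pays for the shortcut.
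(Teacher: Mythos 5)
Your route is genuinely different from the paper's, and the difficulty you flag as ``the main obstacle'' is in fact a real gap rather than a deferrable technicality. The paper never extracts a half-braiding from covariance and never touches a tube algebra; instead it uses a folding trick: folding $\CA$ along the line $x_1=0$ produces a boundary 2-net between $\underline{\C}$ and $\CA\boxtimes\CA$ with the same sectors as $\CA$, and this folded net is observed to \emph{coincide} with the boundary construction of the preceding proposition applied to the group $G\times G$, the $(G\times G)$-module $V\otimes V^*$, and the $*$-Frobenius algebra $A=\tau\End_\C(V\otimes V^*)$ (with $\tau$ adjoint to $\otimes:\Rep G\boxtimes\Rep G\to\Rep G$). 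The proposition then gives $\Sect(\CA)\simeq$ $A$-$A$-bimodules in $\widehat{\Rep(G\times G)}$, and the purely categorical fact that $A$-$A$-bimodules in $\Rep G\boxtimes\Rep G$ form $\FZ_1(\Rep G)$ (because right $A$-modules form $\Rep G$) finishes the proof. So the Drinfeld center enters through a standard algebraic identification, not through hexagon identities; all the analytic work was already done once, in the proposition, and is simply reused.

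Concretely, two steps of your sketch would fail or need to be replaced. First, covariance gives the isomorphisms $\eta_{h,I}$ and implements them \emph{only on the vacuum sector} (and, per the remark after Definition~\ref{defn:top-net}, only projectively); a general sector $\CK$ carries no action of $\Diff(S^1)$, so ``transporting a small sub-arc across $p$'' is not implemented on $\CK$ and does not by itself produce a family of isomorphisms, let alone a half-braiding. A braiding on $\Sect(\CA)$ does exist, but in this framework it comes from the $E_2$-structure (Remark~\ref{rem:net-monoidal}) or from fusion of sectors at different points as in \cite{BDH17}, both of which require machinery not available at this stage of the paper; verifying that sectors are \emph{exactly} half-braided objects (or modules over a tube algebra Morita equivalent to $D(G)$) is the whole content of the corollary, not a checkable lemma. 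Relatedly, your reduction of a sector to ``a finite-dimensional $G$-module'' via the two half-arc algebras glosses over the fact that the circle has two transition points $p$ and $q$, and the compatibility of the bimodule structures at both points is precisely what the tube-algebra/center structure encodes. Second, your fallback via $\Sect(\CA)\simeq\FZ_1(\Sect(\CF))$ with $A=\C$ is circular here: no boundary--bulk theorem for nets is proved in the paper (the holographic principle appears only later, as Remark~\ref{rem:holo}, and at the level of quantum liquids); the folding argument in the paper's proof is exactly the rigorous substitute for that shortcut.
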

\begin{proof}
By folding $\CA$ along the line $x_1=0$ we obtain a defect 2-net between $\underline{\C}$ and $\CA\boxtimes\CA$ which shares the same sectors as $\CA$.
let $A=\tau\End_\C(V\otimes V^*)$ where $\tau$ is an adjoint functor to the tensor product $\otimes:\Rep G\boxtimes\Rep G\to\Rep G$.
Note that this defect 2-net coincides with the one constructed above for the finite group $G\times G$, the $(G\times G)$-module $V\otimes V^*$ and the $*$-Frobenius algebra $A$ in $\Rep G\boxtimes\Rep G \simeq \Rep(G\times G)$.
Since the category of right $A$-modules in $\Rep G\boxtimes\Rep G$ is equivalent to $\Rep G$, the category of $A$-$A$-bimodules in $\Rep G\boxtimes\Rep G$ is equivalent to the Drinfeld center $\FZ_1(\Rep G)$. Our claim follows.
\end{proof}

\begin{rem}
According to Remark \ref{rem:net-monoidal}, $\Sect(\CA)$ carries a braided monoidal structure. A direct way to see the braiding is to consider fusion of sectors as in \cite[Section 3]{BDH17}.
\end{rem}

\subsection{Levin-Wen nets} \label{sec:lw-net}

In this subsection, we provide the construction of a topological 2-net from a unitary fusion 1-category $\CC$ and a defect 2-net from a $*$-Frobenius algebra $A$ in $\CC$. They describe the Levin-Wen model associated to $\CC$ \cite{LW05} and the gapped boundary associated to $A$ \cite{KK12}. 

\smallskip

Let $\CC$ be a unitary fusion 1-category. Normalize the unit map $u_X:\one\to X\otimes X^*$ and the counit map $v_X:X^*\otimes X\to\one$ for every object $X\in\CC$ in such a way that the composite isomorphism induced by $v_X$ and $v_X^*$
$$\Hom_\CC(\one, X\otimes Y) \to \Hom_\CC(X^*,Y) \to \Hom_\CC(\one,Y\otimes X)$$
is isometric for all $Y\in\CC$. (In fact, these normalized duality maps induce a canonical spherical structure on $\CC$.)

Let $V\in\CC$ be an object which contains all the simple objects of $\CC$ as direct summands. Fix a morphism of norm one $\mu:\one\to V$.

Fix an orientation of the circle $S^1$ and choose a base point of $S^1$ so that the points of $S^1$ are linearly ordered. Assign $U_x=V\otimes V^*$ for all $x\in S^1$. For every finite subset $P \subset S^1$, we have an object $\bigotimes_{x\in P}U_x = U_{p_1}\otimes\cdots\otimes U_{p_k}$ of $\CC$ where $p_1,\dots,p_k$ are the points of $P$ in linear order. For an inclusion $P\subset Q$, the morphisms $\mu\otimes\mu^*:\one\to U_x$ for $x\in Q\setminus P$ induce a unitary embedding $\bigotimes_{x\in P}U_x \hookrightarrow \bigotimes_{x\in Q}U_x$.
Define $\bigotimes_{x\in S^1}U_x$ to be the direct limit $\varinjlim_P \bigotimes_{x\in P}U_x$ in $\hat\CC$.
We obtain a Hilbert space
$$\CH_\CA = \Hom_{\hat\CC}(\one,\bigotimes\nolimits_{x\in S^1}U_x) = \varinjlim_P \Hom_\CC(\one,\bigotimes\nolimits_{x\in P}U_x)$$
which is independent of the base point of $S^1$ due to the normalized duality maps.

Let $I\subset S^1$ be an arc from $a$ to $b$. Assign $W_a=V^*$, $W_b=V$ and $W_x=V\otimes V^*$ for $x\in I\setminus\{a,b\}$. Define $\CA(I)$ to be the type I von Neumann algebra
$$\CA(I) = \Hom_{\hat\CC}(\bigotimes\nolimits_{x\in I}W_x,\bigotimes\nolimits_{x\in I}W_x)$$
where $\bigotimes_{x\in I}W_x$ is an object of $\hat\CC$ defined by a direct limit as above.
The action of $\CA(I)$ on $\bigotimes_{x\in S^1}U_x$ induces a left action on $\CH_\CA$.

Then $(\CA,\CH_\CA)$ is a finite irreducible topological 2-net. Indeed, in the special case $\CC=\Rep G$ we recover the topological 2-net defined in the previous subsection. In general, the proof is essentially the same.

Moreover, by choosing a nonzero $*$-Frobenius algebra $A$ in $\CC$, one defines similarly a finite defect 2-net $\CF$ of codimension one between the trivial topological 2-net $\underline{\C}$ and $\CA$ (i.e. a boundary 2-net of $\CA$). 
We conclude by similar arguments that $\Sect(\CF)$ is equivalent to the $*$-1-category of $A$-$A$-bimodules in $\hat\CC$ and that $\Sect(\CA) \simeq \widehat{\FZ_1(\CC)}$.
One can say more precisely about the latter equivalence: an object $Z\in\FZ_1(\CC)$ induces a sector $\Hom_{\hat\CC}(\one,Z\otimes\bigotimes_{x\in S^1}U_x)$ of $\CA$.

\smallskip

The construction can be generalized to higher unitary fusion categories. We will address the problem in a more general context of condensation theory in Section \ref{sec:cond-nnet}.

\section{Construction of categories of quantum liquids}


\subsection{Categories of topological nets} \label{sec:netn}

Let $S^n_+$ denote the positive half of $S^n$ defined by $x_n\ge0$ and let $S^n_-$ denote the negative half of $S^n$ defined by $x_n\le0$ so that $S^{n-1} = S^n_+\cap S^n_-$.

We claim the following result:

\begin{thm} \label{thm:netn}
There is a symmetric monoidal $*$-$(n+1)$-category $\widehat\Net^n$ containing the following pieces of data:
\begin{itemize}
\item An object is a topological $n$-net $\CA$. 

\item The tensor product of two objects $\CA,\CB$ is $\CA\boxtimes\CB$ and the direct sum of $\CA,\CB$ is $\CA\oplus\CB$. The tensor unit is the trivial topological $n$-net $\underline{\C}$ and the zero object is $\underline{0}$.

\item A $k$-morphism $\CF:\CA\to\CB$ for $1\le k<n$ is a defect $n$-net of codimension $k$ such that
\begin{equation*}
\begin{split}
& \CF(I)=\CA(I),\quad \text{if $I\cap S^{n-k}_+=\emptyset$}, \\
& \CF(I)=\CB(I),\quad \text{if $I\cap S^{n-k}_-=\emptyset$},
\end{split}
\end{equation*}
and, moreover, the isomorphism $\eta_{h,I}$ associated to $\CF$ agrees with that to $\CA$ or $\CB$ if $I\cap S^{n-1-k}=\emptyset$.

\item
The identity $k$-morphism $\Id_\CA:\CA\to\CA$ is the restriction of $\CA$ (see Remark \ref{rem:res-k-k+1}). The direct sum of two $k$-morphisms $\CF,\CG:\CA\to\CB$ is determined by the formula
$$(\CF\oplus\CG)(I) = \CF(I)\oplus\CG(I), \quad \text{if $I\cap S^{n-1-k}\ne\emptyset$}.$$

\item An $n$-morphism $\CH:\CA\to\CB$ is a sector of the partial $n$-net $\CS_{\CB|\CA}:\Disk^{n-1}_{n-1}\to\VN$ defined by
\begin{equation*}
\begin{split}
& \CS_{\CB|\CA}(I)=\CA(I),\quad \text{if $I\cap S^0_+=\emptyset$}, \\
& \CS_{\CB|\CA}(I)=\CB(I),\quad \text{if $I\cap S^0_-=\emptyset$}.
\end{split}
\end{equation*}

\item The identity $n$-morphism $\Id_\CA:\CA\to\CA$ is the vacuum sector $\CH_\CA$.
The composition of two $n$-morphisms $\CH:\CA\to\CB$ and $\CK:\CB\to\CC$ is the Connes fusion $\CK\boxtimes_{\CB(S^{n-1}_\uparrow)}\CH$ (recall that $\CB(S^{n-1}_\downarrow)\cong\CB(S^{n-1}_\uparrow)^\op$ canonically).

\item An $(n+1)$-morphism is a homomorphism of sectors. The $*$-involution sends $f:\CH\to\CK$ to the adjoint $f^*:\CK\to\CH$.
\end{itemize}
Moreover, the collection of finite topological $n$-nets, finite defect $n$-nets and semisimple sectors form a symmetric monoidal $*$-subcategory $\Net^n$ of $\widehat\Net^n$.
\end{thm}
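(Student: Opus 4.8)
The plan is to take the ambient $(n+1)$-category $\widehat\Net^n$ as given and to show that the four defining properties of its finite part --- semisimplicity of vacuum sectors, the split property, the duality condition, and semisimplicity of sectors --- are preserved by every structural operation: the tensor product $\boxtimes$, the direct sum $\oplus$, the identity morphisms at each level, and the compositions at each level. Closure under the symmetric monoidal coherence data is then automatic, since the associator, unitor and braiding isomorphisms of $\widehat\Net^n$ are invertible higher morphisms between finite nets and therefore already lie in $\Net^n$ once closure under composition is established. I would organize the verification operation by operation.

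The closures under $\boxtimes$, $\oplus$ and identities are routine. For the tensor product, $(\CA\boxtimes\CB)(I)=\CA(I)\bar\otimes\CB(I)$, so a split representation for $\CA\boxtimes\CB$ is the spatial tensor product of the split representations for $\CA$ and $\CB$, and the bimodule realizing the duality condition for $\CA\boxtimes\CB$ is the spatial tensor product of those for $\CA$ and $\CB$, hence dualizable because dualizable objects are closed under tensor product. The direct sum is handled identically, all data decomposing into blocks along $\CH_\CA\oplus\CH_\CB$ and using that a finite direct sum of dualizable bimodules is dualizable; semisimplicity of the vacuum sectors is manifestly stable under both operations. For identity morphisms, $\Id_\CA$ is the restriction of $\CA$ to codimension $k$ (Remark \ref{rem:res-k-k+1}); since the split and duality conditions of a codimension-$k$ defect net are stated for disjoint disk regions meeting the stratum $S^{n-1-k}$, they are inherited verbatim from the corresponding conditions for the finite topological $n$-net $\CA$, whose vacuum sector is left unchanged.

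The substance of the proof, and the step I expect to be the main obstacle, is closure under composition. For the top composition, the composite of two semisimple sectors $\CH\colon\CA\to\CB$ and $\CK\colon\CB\to\CC$ is the Connes fusion $\CK\boxtimes_{\CB(S^{n-1}_\uparrow)}\CH$; here I would invoke the duality condition to present $\CH$ and $\CK$ as dualizable bimodules in a form amenable to Connes fusion over $\CB(S^{n-1}_\uparrow)$, and then use that Connes fusion of semisimple dualizable bimodules is again semisimple and dualizable. This step I would reduce to the analysis of Connes fusion and the $*$-condensation completeness of the $*$-$2$-category of von Neumann algebras and bimodules proved in Appendix \ref{sec:condense-vn}. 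The genuinely hard case is the composition of $k$-morphisms for $1\le k\le n-1$, i.e.\ the fusion of finite defect $n$-nets, whose very definition is deferred to Section \ref{sec:fus-def}. There the duality condition plays the decisive role: it is precisely the dualizability hypothesis that lets one produce the vacuum sector of the fused net as a relative Connes fusion and that propagates the split and duality conditions to the composite. I would therefore prove stability of finiteness under fusion in tandem with constructing that fusion in Section \ref{sec:fus-def}, invoking dualizability at each stage both to build the composite vacuum sector and to re-verify its finiteness.

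Finally, the tensor unit $\underline{\C}$ is finite, so $\Net^n$ contains the unit; together with the closures above and the remark on coherence isomorphisms, this exhibits $\Net^n$ as a symmetric monoidal $*$-subcategory of $\widehat\Net^n$.
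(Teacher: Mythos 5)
Your proposal has a genuine gap: it assumes the very thing the theorem asserts. Theorem \ref{thm:netn} is an existence statement for $\widehat\Net^n$, and the paper's proof (sketched in Section \ref{sec:fus-def}) consists precisely in constructing the missing structure. The composition of $k$-morphisms for $1\le k<n$ is defined by a sewing operation on stratified nets --- gluing $\CA_\pm$ along $\CB$ over the collar $S_\epsilon$, then transporting the result along an equivariant smoothing map $\pi:M\to S^{n-1}$ that collapses the cylinder $C$ --- and the hardest step, which your plan never touches, is verifying the vacuum property of the composite (the $1\boxtimes1$-isomorphism of \cite{BDH19a}). The paper proves this by a modular-theoretic argument: choosing $j_\pm$-invariant cyclic decompositions of $\CH_{\CA_\pm}$, producing mutually orthogonal positive functionals $\omega_{\alpha\beta}$ on $A=(\CA_-\circ\CA_+)(S^{n-1}_\uparrow)$, and identifying the composite vacuum with $L^2(A)$ via the induced conjugation $j$. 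By taking $\widehat\Net^n$ ``as given'' and deferring the fusion of defects to Section \ref{sec:fus-def}, you have reduced the theorem to itself; a blind proof must supply these constructions (the coherence data is likewise not ``automatic'' --- the paper says it requires more sophisticated sewing operations, though it too defers the details).

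There is a second, smaller gap. For closure of $\Net^n$ under composition of $n$-morphisms you invoke ``Connes fusion of semisimple dualizable bimodules is again semisimple and dualizable'' as if it were generic. Dualizability composes, but semisimplicity of the fusion is not a general fact about bimodules over von Neumann algebras, and the half-sphere algebras $\CB(S^{n-1}_\uparrow)$ are typically not type I. What the paper actually uses is Proposition \ref{prop:sec-ss-dual}: sewing $\CA$ with $\CA^\op$ over a cylinder and following the argument of \cite[Theorem 3.14]{BDH15}, it shows that $\Sect(\CS_{\CB|\CA})$ is a finite direct sum of $\widehat\Hilb$ and that semisimple sectors coincide with dualizable bimodules. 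That equivalence --- not the $*$-condensation completeness of $\Mor(\VN)$ from Appendix \ref{sec:condense-vn}, which serves Theorem \ref{thm:lqs-cc} rather than this one --- is what makes composition of $n$-morphisms in $\Net^n$ well-defined (Corollary \ref{cor:net-sec-dual}). Your routine closure checks for $\boxtimes$, $\oplus$ and identities do match the paper's treatment, and you correctly locate defect fusion as the crux; but locating the hard step is not the same as proving it.
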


We refer the reader to \cite{BDH15,BDH17,BDH19a,BDH18} for a detailed construction of a symmetric monoidal $*$-3-category of conformal nets. A bunch of techniques have been developed there and can be generalized to higher dimensions. In particular, the fusion operation of defects defined in \cite{BDH19a}, after generalized to higher dimensions, implements a crucial ingredient of the higher category $\widehat\Net^n$: the composition of $k$-morphisms for $1\le k<n$.
According to Example \ref{exam:conf-net} and Example \ref{exam:conf-defect}, the $*$-3-category of conformal nets is a full subcategory of $\widehat\Net^2$.

We will sketch a construction of $\widehat\Net^n$ and $\Net^n$ in Section \ref{sec:fus-def}.

\begin{exam}
Since $\Disk^{-1}$ is an empty 1-category, a 0-net is an empty functor of which a sector is simply a Hilbert space. Therefore, we may define $\widehat\Net^0=\widehat\Hilb$ and $\Net^0=\Hilb$.
\end{exam}

\begin{exam} \label{exam:net1}
According to Example \ref{exam:1tn}, giving a topological 1-net is equivalent to giving a von Neumann algebra. Therefore, $\widehat\Net^1$ can be identified with the symmetric monoidal $*$-2-category of von Neumann algebras and bimodules; $\Net^1$ can be identified with the symmetric monoidal $*$-2-category of finite direct sums of type I factors and semisimple bimodules.
\end{exam}

\begin{rem} \label{rem:net-monoidal}
Note that $\Sect(\CA) = \Omega^n(\widehat\Net^n,\CA)$ for any topological $n$-net $\CA$. Therefore, $\Sect(\CA)$ carries an $E_n$-monoidal structure. Similarly, for any $k$-morphism $\CF$ of $\widehat\Net^n$ where $1\le k<n$, $\Sect(\CF)$ carries an $E_{n-k}$-monoidal structure.
\end{rem}

According to Example \ref{exam:tn2dn}, an object of $\widehat\Net^n$ induces a 1-morphism of $\widehat\Net^{n+1}$ between the tensor unit $\underline{\C}$. Conversely, by the additivity axiom, all of the 1-morphisms between $\underline{\C}$ arise in this way.
Therefore, we have identifications
$$\widehat\Net^n = \Omega\widehat\Net^{n+1}, \quad\quad \Net^n = \Omega\Net^{n+1}.$$

\begin{rem}
Bartels, Douglas and Henriques answered positively in their works \cite{BDH15,BDH17,BDH19a,BDH18} the following question proposed by Stolz and Teichner: Does there exist an interesting 3-category that deloops the 2-category of von Neumann algebras? They showed that the $*$-3-category of conformal nets is a delooping of the $*$-2-category of von Neumann algebras. Theorem \ref{thm:netn} improves their result to that $\widehat\Net^n$ is an $(n-1)$-fold delooping of the $*$-2-category of von Neumann algebras (as well as an $n$-fold delooping of the $*$-1-category of Hilbert spaces).
\end{rem}

\subsection{Construction of $\TO^n$} \label{sec:construct-QL}

Let us assume the higher categories $\Net^n$ are well-defined. We construct explicitly higher categories of quantum liquids $\TO^n$ and give in particular a precise mathematical definition of a quantum liquid.

\smallskip

We define first for each $n$ a symmetric monoidal $*$-$(n+1)$-subcategory $\LQS^n \subset \Net^n$ to describe local quantum symmetries as follows.
First, let 
$$\LQS^0 = \Net^0 = \Hilb.$$ 
Then by induction on $n$, define $\LQS^n \subset \Net^n$ to be the maximal subcategory extending the embedding $B\LQS^{n-1} \hookrightarrow \Sigma_*\LQS^{n-1}$. In another word, $\LQS^n$
is obtained from the subcategory $B\LQS^{n-1} \subset \Net^n$ by appending all $*$-condensates.
By the construction, there is a forgetful functor
$$\sk: \LQS^n \to (n+1)\Hilb.$$

We define then
$$\TO^n = \begin{cases}
  \underline{\C}/\LQS^n, & \text{for even $n$}, \\
  \LQS^n/\underline{\C}, & \text{for odd $n$}. \\
\end{cases}
$$
It comes equipped with forgetful functors
$$\sk: \TO^n \to \TOsk^n,$$
$$\lqs: \TO^n \to \LQS^n.$$
Recall that $\TOsk^n = \Sigma_*^n(\C/\Hilb)$. According to \cite[Proposition 4.12]{KZ20b}, we may identify
$$\TOsk^n = \begin{cases}
  \bullet/(n+1)\Hilb, & \text{for even $n$}, \\
  (n+1)\Hilb/\bullet, & \text{for odd $n$}. \\
\end{cases}
$$

An object of $\TO^n$ is referred to as an $n$D {\em quantum liquid}. A $k$-morphism $\CF:\CA\to\CB$ of $\TO^n$ where $1\le k< n$ is referred to as a {\em domain wall} between $\CA$ and $\CB$ or a {\em defect} of codimension $k$. An $n$-morphism of $\TO^n$ is referred to as an {\em instanton}. 
The tensor unit $\Id_{\underline{\C}}$ of $\TO^n$ is denoted by $\one^n$ and referred to as the $n$D {\em trivial quantum liquid}.
We say that a domain wall is {\em trivial} if it is an identity morphism; {\em invertible} if it is an invertible morphism of $\TO^n$.

For an object or a morphism $\CX$ of $\TO^n$, we refer to the image $\sk(\CX)$ in $\TOsk^n$ as the {\em topological skeleton} of $\CX$, and refer to the image $\lqs(\CX)$ in $\LQS^n$ as the {\em local quantum symmetry} of $\CX$.

We say that a quantum liquid or defect $\CX$ is {\em gapped} (resp. {\em gapless}) if the defect net $\lqs(\CX)$ is gapped (resp. gapless).

\begin{exam}
We have $\TO^0 = \TOsk^0 = \C/\Hilb$.
\end{exam}

\begin{exam} \label{exam:to1}
By Example \ref{exam:net1}, we have $\LQS^1 = \Net^1$ which can be identified with the symmetric monoidal $*$-2-category of finite direct sums of type I factors and semisimple bimodules. Moreover, the forgetful functor $\LQS^1\to2\Hilb$ maps an object $\CA$ to the unitary 1-category $\LMod_\CA^s$ of semisimple left $\CA$-modules.
Therefore, $\TO^1$ has the following structures:
\begin{itemize}
\item An object $(\CA,\CH)$ consists of a finite direct sum of type I factors $\CA$ and a semisimple right $\CA$-module $\CH$. 
\item A 1-morphism $(\CF,f):(\CA,\CH)\to(\CB,\CK)$ consists of a semisimple $\CB$-$\CA$-bimodule $\CF$ and a right $\CA$-module map $f:\CH\to\CK\boxtimes_\CB\CF$.
\item A 2-morphism $\xi:(\CF,f)\to(\CG,g)$ is a bimodule map $\xi:\CF\to\CG$ such that $(\CK\boxtimes_\CB\xi)\circ f=g$.
\end{itemize}
Note that $\lqs(\CA,\CH) = \CA$ and that $\sk(\CA,\CH)$ is the functor $\CH\boxtimes_\CA-: \LMod_\CA^s\to\Hilb$. A 1D quantum liquid $(\CA,\CH)$ is gapped if and only if $\CA$ is finite-dimensional.
\end{exam}

We claim the following result which is requisite by physics (\cite[Theorem 5.13]{KZ20b}):

\begin{thm} \label{thm:lqs-cc}
$\LQS^n$ is $*$-condensation-complete. Consequently, the following forgetful functors
$$\sk:\LQS^n \to (n+1)\Hilb,$$
$$\sk:\TO^n \to \TOsk^n$$
are symmetric monoidal equivalences.
\end{thm}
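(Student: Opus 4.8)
The plan is to isolate $*$-condensation-completeness as the only substantial input and to deduce the two stated equivalences formally from it. I would argue by induction on $n$. The base case $\LQS^0=\Hilb$ is $*$-idempotent-complete (orthogonal projections split), and the analytic anchor for the inductive step is the $*$-condensation-completeness of the $*$-$2$-category of von Neumann algebras and bimodules proved in Appendix \ref{sec:condense-vn}, which by Example \ref{exam:net1} identifies with $\widehat\Net^1\supset\Net^1=\LQS^1$.

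The first and main task is to show that the condensates needed in $\LQS^n$ exist as honest finite nets; I would do this by establishing that $\Net^n$ is $*$-condensation-complete, inducting up the delooping tower $\Net^n=\Omega\Net^{n+1}$. By the standard criterion, a symmetric monoidal $*$-$(n+1)$-category is $*$-condensation-complete once (i) all of its hom-$n$-categories are $*$-condensation-complete and (ii) every $*$-condensation monad on an object splits. Part (i) follows from the inductive hypothesis together with $\Omega\Net^{n+1}=\Net^n$: the hom-categories of $\Net^{n+1}$ are built from defect nets and sectors, which are module categories over the already-complete $\Net^n$ and its loopings, and completeness propagates to such modules. Part (ii) is where the condensation theory of topological nets of Section \ref{sec:condense-net} enters: given a net $\CA$ and a $*$-condensation monad $\CF\colon\CA\to\CA$ (a defect $n$-net carrying a $*$-Frobenius/condensation-algebra structure), I would construct the condensed net $\CA_\CF$ as the splitting by fusion of defect nets, the same mechanism that in the examples turns a $*$-Frobenius algebra $A$ into a boundary net whose category of sectors is that of $A$-$A$-bimodules; here the local algebras $\CA_\CF(I)$ appear as fused algebras along $I$ and the vacuum sector as the corresponding fused Hilbert space.

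Granting that $\Net^n$ is $*$-condensation-complete, the theorem becomes essentially formal. By its recursive definition $\LQS^n$ is obtained inside $\Net^n$ by appending to $B\LQS^{n-1}$ all $*$-condensates; since $\Net^n$ is complete, this essential image is precisely the $*$-condensation completion of $B\LQS^{n-1}$, so $\LQS^n\simeq\Sigma_*\LQS^{n-1}$ and in particular $\LQS^n$ is $*$-condensation-complete, which is the first assertion. By the inductive hypothesis $\LQS^{n-1}\simeq n\Hilb$ compatibly with $\sk$, and by the defining recursion $(n+1)\Hilb=\Sigma_*(n\Hilb)$, the functor $\sk\colon\LQS^n\to(n+1)\Hilb$ is then the suspension of an equivalence: essential surjectivity holds because $(n+1)\Hilb$ is generated under $*$-condensation by the image of $B(n\Hilb)$ and $\sk$ preserves condensates, while full faithfulness reduces on hom-categories, via $\Omega\LQS^n=\LQS^{n-1}$, to the inductive equivalence $\sk\colon\LQS^{n-1}\simeq n\Hilb$. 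Finally, since $\sk$ is a symmetric monoidal equivalence carrying the tensor unit $\underline{\C}$ to the unit $\bullet$ of $(n+1)\Hilb$, and since the (co)slice construction is functorial under such equivalences, it transports $\TO^n=\underline{\C}/\LQS^n$ (resp. $\LQS^n/\underline{\C}$) to $\bullet/(n+1)\Hilb$ (resp. $(n+1)\Hilb/\bullet$), which by \cite[Proposition 4.12]{KZ20b} is $\TOsk^n$; this gives the desired $\sk\colon\TO^n\simeq\TOsk^n$.

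The hard part will be step (ii): verifying that the fused net $\CA_\CF$ is an honest \emph{finite} topological (or defect) net and genuinely splits $\CF$. Concretely one must check that the fusion respects locality and additivity, that covariance and the vacuum property descend, and above all that the split property together with the dualizability of the fused vacuum bimodule survive the Connes-type fusion -- this is the analytic core, generalizing the dualizability results of Bartels--Douglas--Henriques to higher codimension. Once this is in place, the remaining steps are formal consequences of the universal property of $*$-condensation completion.
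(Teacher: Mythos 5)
The load-bearing step of your argument---that the whole of $\Net^n$ is $*$-condensation-complete, proved by induction up the delooping tower $\Net^n=\Omega\Net^{n+1}$---is a genuine gap, and it is both stronger than what the theorem needs and stronger than what the paper establishes. In your clause (i) you assert that the hom-$n$-categories of $\Net^{n+1}$ are complete because they are ``module categories over the already-complete $\Net^n$''; completeness does not propagate to module categories by any general principle, and the identification $\Omega\Net^{n+1}=\Net^n$ only controls $\Hom(\underline{\C},\underline{\C})$, not the categories of defect nets between two arbitrary topological nets (say, two conformal nets). Your clause (ii) would require producing a condensate for a $*$-condensation monad on an \emph{arbitrary} finite topological net, and neither your sketch nor the paper constructs these: the fusion/boundary mechanism you invoke is carried out in Section \ref{sec:condense-net} only over the trivial net $\underline{\C}$ (the sole place where arbitrary nets appear is the top-codimension case of monads on $(n-1)$-morphisms, i.e.\ sectors, which reduces to Theorem \ref{thm:vn-cc} in Appendix \ref{sec:condense-vn}).

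The paper's actual route is narrower and avoids your global claim. Since $\LQS^n$ is \emph{defined} by appending to $B\LQS^{n-1}$ all $*$-condensates, it suffices to realize inside $\Net^n$ all condensation data over the single object $\underline{\C}$. Section \ref{sec:condense-net} does this by unpacking a unital $*$-condensation monad $\CR$ on $\underline{\C}$ through a postulated condensate $\CE$ in the formal completion ($\simeq(n+1)\Hilb$), choosing morphisms $\theta_k$ assigned to the cells of decompositions $P$ of $S^{n-1}$, and constructing the condensate $(\CB,\CH_\CB)$ explicitly, with $\CH_\CB=\varinjlim_P\CH_P$ a direct limit of sectors and $\CB(I)$ defined as endomorphism algebras of the cut-open limits; the construction is checked to be independent of $\CE$. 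Crucially, the paper also verifies that $*$-condensation \emph{bimodules}, bimodule maps, etc.\ are induced by genuine morphisms of $\LQS^n$---this is what yields $\LQS^n\simeq\Sigma_*\LQS^{n-1}$ and hence completeness. Your reduction of full faithfulness of $\sk$ via $\Omega\LQS^n=\LQS^{n-1}$ only covers homs at the unit and silently relies on exactly this realization of bimodules between general condensates. Your final formal steps (the induction $\LQS^{n-1}\simeq n\Hilb$, $(n+1)\Hilb=\Sigma_*(n\Hilb)$, and the transport of the (co)slice to obtain $\TO^n\simeq\TOsk^n$ via \cite[Proposition 4.12]{KZ20b}) do match the paper; if you replace the unproven completeness of $\Net^n$ by the over-$\underline{\C}$ constructions of condensates, bimodules and higher data, the rest of your outline goes through.
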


The theorem is true for $n=1$ as clear from Examples \ref{exam:to1}.
However, the theorem becomes nontrivial for $n=2$. The Levin-Wen 2-nets constructed in Section \ref{sec:lw-net} show that the forgetful functor $\sk:\LQS^2 \to 3\Hilb$ is essentially surjective. But this is not sufficient to establish the theorem for $n=2$. 
One needs to show that there are sufficiently many 1-morphisms in $\LQS^2$.
The problem is even more complicated for $n=3$: 3D Chern-Simons theory and 3D CFTs are expected to be involved in $\LQS^3$.

We will sketch a proof of Theorem \ref{thm:lqs-cc} in Section \ref{sec:condense-net}.

\begin{rem}
Let $\CA\in\LQS^n$. Then $\sk(\CA)\in(n+1)\Hilb$ is a unitary $n$-category. Let $\CB$ denote the unitary braided multi-fusion $(n-1)$-category $\Omega\Fun(\sk(\CA),\sk(\CA))$. Since $\sk: \LQS^n \to (n+1)\Hilb$ is an equivalence, $\CB \simeq \Omega\Hom_{\LQS^n}(\CA,\CA) = \Hom_{\LQS^n}(\Id_\CA,\Id_\CA)$. In particular, an object of $\CB$ corresponds to a 2-morphism $\Id_\CA\to\Id_\CA$ of $\LQS^n$ which is a defect $n$-net of codimension two for large $n$. 

Let $\CX:\underline{\C}\to\CA$ be an object of $\TO^n$ where $n$ is even (the odd case is similar). Regard $\sk(\CX)$ as an object of $\sk(\CA)$ and let $\CC$ denote the unitary multi-fusion $(n-1)$-category $\Omega(\sk(\CA),\sk(\CX)) = \Hom_{\sk(\CA)}(\sk(\CX),\sk(\CX))$. Since $\sk: \LQS^n \to (n+1)\Hilb$ is an equivalence, $\CC \simeq \Hom_{\LQS^n}(\CX,\CX)$. In particular, an object of $\CC$ corresponds to a 2-morphism $\CX\to\CX$ of $\LQS^n$ which is a defect $n$-net of codimension two for large $n$. 

Consider the generic case where $\sk(\CA)$ is indecomposable and $\sk(\CX)$ is nonzero. We have $\Sigma_*\CC = \sk(\CA)$ by \cite[Corollary 3.13]{KZ20b} so that $\CB=\FZ_1(\CC)$ by \cite[Theorem 3.41]{KZ20b}. In particular, one recovers $\sk(\CX)$ as the object $\CC$ of $\Sigma_*\CC$ so that we can say that $\CC$ is the topological skeleton of the quantum liquid $\CX$.
Moreover, $\CC$ is enriched in $\Hom_{\LQS^n}(\Id_\CA,\Id_\CA)$ in the sense of \cite[Definition 3.25]{KZ20b} via the equivalence $\CB \simeq \Hom_{\LQS^n}(\Id_\CA,\Id_\CA)$. Note that the topological $n$-net $\CA=\lqs(\CX)$ encodes the local observable algebras living on the space-time and, moreover, it determines both $\Hom_{\LQS^n}(\Id_\CA,\Id_\CA)$ and the braided monoidal equivalence $\CB \simeq \Hom_{\LQS^n}(\Id_\CA,\Id_\CA)$. Therefore, $\lqs(\CX)$ is exactly what we have expected in \cite[Section 5.2]{KZ20b} for the local quantum symmetry of the quantum liquid $\CX$.

However, the content of the quantum liquid $\CX$ is slightly more than we have expected in \cite[Section 5.2]{KZ20b}. Besides the topological skeleton $\sk(\CX)$ and the local quantum symmetry $\lqs(\CX)$, there are also local observable algebras of the boundary $n$-net $\CX$.
\end{rem}

\begin{rem}
Recall that a topological $n$-net $\CA\in\LQS^n$ can be viewed as a 1-morphism $\CA:\underline{\C}\to\underline{\C}$ of $\LQS^{n+1}$, i.e. an $n+1$D quantum liquid whose local quantum symmetry is trivial. This provides an explanation of the topological Wick rotation introduced in \cite{KZ20a}. Namely, the local quantum symmetry of an $n$D quantum liquid $\CX:\underline{\C}\to\CA$ shares the same mathematical content as an $n+1$D quantum liquid with trivial local quantum symmetry.
\end{rem}

Before concluding this subsection, we should point out that the $*$-involution of $\TO^n$ is not the one induced by the time-reversal operator. Let us endow $\LQS^n$ with a new involution $*: \LQS^n \to (\LQS^n)^{\op n}$ which still fixes all the objects and all the $(n-1)$- and lower morphisms but sends an $n$-morphism $\CH$ to the complex conjugate $\bar\CH$ and an $(n+1)$-morphism $f:\CH\to\CK$ to $\bar f:\bar\CH\to\bar\CK$. Then endow $\TO^n$ with the induced $*$-involution.

\subsection{Transparent domain walls} \label{sec:transparent-wall}

As we have seen from the previous subsection, $\TO^n$ is equivalent to $\TOsk^n$.
That is, local quantum symmetries can not be recovered from the equivalence type of $\TO^n$ at all.

To detect local quantum symmetries, one needs to distinguish a class of transparent domain walls from invertible ones in certain categorical structures.
Roughly speaking, a domain wall $\CF$ between two quantum liquids or defects $\CA$ and $\CB$ is transparent if $\CA$ and $\CB$ can be identified in such a way that $\CF$ is a trivial domain wall.

\smallskip

We say that two defect $n$-nets $\CA$ and $\CB$ of codimension $k$ are {\em isomorphic} if there exists a natural isomorphism $\xi:\CA\to\CB$ intertwining the action of $\Diff(S^{n-1})_k$ such that $\xi_I$ is the identity for $I\cap S^{n-1-k}=\emptyset$.

Then we say that a $k$-morphism $\CF:\CA\to\CB$ of $\Net^n$ where $1\le k\le n$ is {\em completely transparent} if the defect $n$-nets $\CA$ and $\CB$ can be identified by an isomorphism in such a way that the defect $n$-net $\CF$ is isomorphic to the identity $k$-morphism.
An $(n+1)$-morphism of $\Net^n$ is {\em completely transparent} if it is an isometry of sectors.

Let $\CT$ be the minimal class of morphisms of $\Net^n$ that contains all the completely transparent morphisms and is closed under composition, tensor product and adjunction. The members of $\CT$ are called {\em transparent}.

\begin{exam}
A 1-morphism $\CF:\CA\to\CB$ of $\Net^1$ is (completely) transparent if and only if they are isomorphic as von Neumann algebras. 
A 2-morphism of $\Net^1$ is (completely) transparent if and only if it is an isometry of bimodules.
\end{exam}

We say that a morphism of $\TO^n$ is {\em transparent} if it is defined in terms of transparent morphisms of $\LQS^n$. We say that two objects or morphisms of $\TO^n$ are {\em unitarily equivalent} if there is a transparent morphism between them.

\begin{exam}
In the notation of Example \ref{exam:to1}, two objects $(\CA,\CH)$ and $(\CB,\CK)$ of $\TO^1$ are unitarily equivalent if and only if there exists an isomorphism $\CA\cong\CB$ and an isometry of right $\CA$-modules $\CH\cong\CK$. 
Two 1-morphisms $(\CF,f)$ and $(\CG,g)$ are unitarily equivalent if and only if there is an isometry of bimodules $\xi:\CF\to\CG$ such that $(\CK\boxtimes_\CB\xi)\circ f=g$.
\end{exam}

Define $\widetilde\LQS^n_k \subset \LQS^n$ to be the symmetric monoidal subcategory obtained by discarding all the nontransparent (k+1)- and higher morphisms. Similarly, define $\widetilde\TO^n_k \subset \TO^n$ be the symmetric monoidal subcategory obtained by discarding all the nontransparent (k+1)- and higher morphisms.
We obtain two towers of symmetric monoidal $(n+1)$-categories
$$\widetilde\LQS^n_0 \subset \widetilde\LQS^n_1 \subset \cdots \subset \widetilde\LQS^n_{n+1} = \LQS^n,$$
$$\widetilde\TO^n_0 \subset \widetilde\TO^n_1 \subset \cdots \subset \widetilde\TO^n_{n+1} = \TO^n$$
to encode the information of local quantum symmetries. Indeed, two $n$D quantum liquids are unitarily equivalent if and only if they are equivalent in $\widetilde\TO^n_0$. Two defects of codimension $k$ are unitarily equivalent if and only if they are equivalent in $\widetilde\TO^n_k$.

\begin{rem} [Holographic principle] \label{rem:holo}
In \cite{KWZ15,KWZ17}, a new kind of morphisms between potentially anomalous quantum liquids were introduced. In the present context, they are nothing but the 1-morphisms of the coslice $(n+1)$-category $\one^n/\widetilde\TO^n_1$. An object of $\one^n/\widetilde\TO^n_1$ is referred to as an $(n-1)$D {\em potentially anomalous quantum liquid} which by definition is a 1-morphism $\CX:\one^n\to\CA$ of $\widetilde\TO^n_1$. Note that $\CA$ is an $n$D anomaly-free quantum liquid and $\CX$ is a boundary of $\CA$. Also note that the holographic principle holds trivially: the bulk $\CA$ is uniquely determined by the boundary $\CX$. However, one can say more concretely about this principle. The reasoning in \cite{KWZ15,KWZ17} shows that $\CA$ is the center of $\CX$ in the sense that the trivial domain wall of $\CA$, viewed as an $(n-1)$D potentially anomalous quantum liquid by folding $\CA$, satisfies the universal property of the center of $\CX$. 
\end{rem}


\section{Construction of categories of topological nets} \label{sec:fus-def}

The purpose of this section is to sketch a construction of $\widehat\Net^n$ and $\Net^n$.

\subsection{Nets on manifolds}

Unless specified explicitly, manifolds are smooth, oriented, compact without boundary and possibly disconnected.

Let $M$ be an $n$-manifold. A {\em disk region} of $M$ is a closed region that is diffeomorphic to the $n$-disk. Let $\Disk(M)$ denote the 1-category of disk regions of $M$ whose morphisms are inclusions of disk regions. 

A {\em partial net} on $M$ is a functor $\CA:\CD\to\VN$ where $\CD$ is a full subcategory of $\Disk(M)$. 
A {\em sector} of a partial net $\CA:\CD\to\VN$ is defined literally as Definition \ref{defn:sector}.

\begin{defn}
A {\em stratified net} on $M$ consists of the following data:
\begin{itemize}
\item a partial net $\CA:\CD\to\VN$ on $M$;
\item a stratification $M=M_n\supset M_{n-1}\supset\cdots\supset M_0$ where $M_i$ is a closed submanifold of dimension $i$ around which $M_{i+1}$ is equipped with a local parametrization.
\end{itemize}
These data are subject to the following condition:
\begin{itemize}
\item For every $I\in\CD$, there exists a diffeomorphism $h:S^n_\uparrow\to I$ such that $h(S^{n-i}_\uparrow)=I\cap M_{n-i}$ and $h$ transforms the local parametrization around $S^{n-i}_\uparrow$ into that around $I\cap M_{n-i}$ for $1\le i\le d(I)$ where $d(I)$, called the {\em depth} of $I$, is the maximal integer such that $I\cap M_{n-d(I)}\ne\emptyset$.
\end{itemize}
We use the notation $\CA$ to denote a stratified net.
\end{defn}

\begin{exam}
A defect $n$-net of codimension $k$ defines a stratified net on $S^{n-1}$ with respect to the stratification $S^{n-1} \supset\cdots\supset S^{n-1-k} \supset \emptyset \supset\cdots\supset \emptyset$ and the standard local parametrizations.
\end{exam}

\subsection{Sewing operations}

Let $\CA_\pm:\CD_\pm\to\VN$ be a pair of stratified nets on $(n-1)$-manifolds $M_\pm$ and let $\CB$ be a defect $n$-net of codimension $k$. We consider an operation of sewing $\CA_\pm$ along $\CB$ as follows.

Define subsets of $S^{n-1}$:
$$S_\pm = \{ x\in S^{n-1} \mid \pm x_{n-1-k}\ge0 \}, \quad S_0=S_+\cap S_-,$$
$$S_{\pm\epsilon} = \{ x\in S^{n-1} \mid \pm x_{n-1-k}\ge-\epsilon \}, \quad S_\epsilon = S_{+\epsilon}\cap S_{-\epsilon},$$
where $\epsilon>0$ is a small real number fixed once and for all. 

Suppose given a pair of disk regions $I_\pm\in\CD_\pm$ of depth $k$, a pair of diffeomorphisms $\phi_\pm: S_{\pm\epsilon} \to I_\pm$ that preserve the orientation, the stratification and the local parametrizations, and a pair of natural isomorphisms $\CA_\pm(\phi_\pm(K_\pm)) \cong \CB(K_\pm)$ for $S_{\pm\epsilon}\supset K_\pm\in\Disk^{n-1}_k$ (in particular, $\phi_\pm(K_\pm)\in\CD_\pm$). 

\medskip
\begin{center}
\begin{tikzpicture}[scale=1.2]
\fill[color=lightgray] (1.5,0.866) arc (60:-60:1) node[above=2.5em,right=0em][color=black]{$I_-$};
\draw[thick] (1,0) circle (1) node[below=3.5em]{$M_-$};
\fill[color=lightgray] (3,0.866) arc (120:240:1) node[above=2.5em,left=0em][color=black]{$I_+$};
\draw[thick] (3.5,0) circle (1) node[below=3.5em]{$M_+$};
\end{tikzpicture}
\quad\quad
\begin{tikzpicture}[scale=1.2]
\fill[color=lightgray] (-0.257,0.866) arc (60:40:1) arc (140:120:1) -- (0.257,-0.866) arc (240:220:1) arc (-40:-60:1) node[above=2.5em,right=0em][color=black]{$S_\epsilon$};
\draw[thick] (0,0.643) arc (40:320:1) arc (-140:140:1) node[below=5.5em]{$M$};
\end{tikzpicture}
\end{center}

Gluing $M_\pm\setminus\phi_\pm(S_\pm)$ and $S_\epsilon$ via the diffeomorphisms $\phi_\pm$, we obtain an $(n-1)$-manifold 
$$M = (M_-\setminus\phi_-(S_-)) \cup S_\epsilon \cup (M_+\setminus\phi_+(S_+))$$ 
which inherits a stratification from $M_\pm$. Let $\CD\subset\Disk(M)$ be the full subcategory consisting of the disk regions $M_\pm\setminus\phi_\pm(S_\pm) \supset J_\pm\in\CD_\pm$ and $S_\epsilon\supset J\in\Disk^{n-1}_k$. Assembling the algebras $\CA_\pm(J_\pm)$ and $\CB(J)$ then yields a stratified net on $M$
$$\CA:\CD\to\VN.$$

\smallskip

We consider next an operation of sewing a pair of sectors $\CH_\pm$ of $\CA_\pm$. Let 
$$\CH = \CH_-\boxtimes_{\CB(S_+)}\CH_+.$$ 
Then $\CH$ carries the structure of a sector of $\CA$ in the following way. On the one hand, $\CA_\pm(J_\pm)$ acts on $\CH$ via $\CH_\pm$ for $M_\pm\setminus\phi_\pm(S_\pm) \supset J_\pm\in\CD_\pm$. On the other hand, we have $$\CH \cong \CH_- \boxtimes_{\CB(S_{+\epsilon})} L^2(\CB(S_{+\epsilon})) \boxtimes_{\CB(S_+)} L^2(\CB(S_{+\epsilon})) \boxtimes_{\CB(S_{+\epsilon})} \CH_+$$ and $L^2(\CB(S_{+\epsilon})) \boxtimes_{\CB(S_+)} L^2(\CB(S_{+\epsilon})) \cong \CH_\CB \boxtimes_{\CB(S_+)} \CH_\CB \cong \CH_\CB$. Then $\CB(J)$ acts on $\CH$ via $\CH_\CB$ for $S_\epsilon\supset J\in\Disk^{n-1}_k$.

\begin{exam}[Fusion of sectors]
The composition of two $n$-morphisms $\CH_+:\CP\to\CB$ and $\CH_-:\CB\to\CQ$ of $\widehat\Net^n$ is precisely defined by sewing the sectors $\CH_\pm$ along $\CB$ where $\CA_+=\CS_{\CB|\CP}$, $\CA_-=\CS_{\CQ|\CB}$, $M_\pm=S^{n-1}$ and $\phi_\pm=\Id_{S_{\pm\epsilon}}$. Indeed, $\CS_{\CQ|\CP}$ is a completion of $\CA$ so that $\CH=\CH_-\boxtimes_{\CB(S_\uparrow)}\CH_+$ is a sector of $\CS_{\CQ|\CP}$.
\end{exam}

\void{
The above sewing operation generalizes readily to the case where $I_\pm$ are disjoint disk regions of a single manifold. The sector sewing is implemented by the completely additive functor (see Proposition \ref{prop:nfun-bimod}) $$L^2(\CB(S_{-\epsilon}) \bar\otimes \CB(S_{+\epsilon})) \mapsto L^2(\CB(S_{-\epsilon})) \boxtimes_{\CB(S_+)} L^2(\CB(S_{+\epsilon})).$$
}

The above sewing operation can be generalized to sew $\CA_\pm$ along $\CB$ over a generic closed region $R\subset S^{n-1}$ with smooth boundary instead of a disk region. The local observable algebras around the seam are defined to be those of $\CB$ around $\partial R$.

\begin{prop} \label{prop:sec-ss-dual}
Let $\CS_{\CB|\CA}$ be the partial $n$-net from the definition of an $n$-morphism $\CA\to\CB$ of $\Net^n$. Then $\Sect(\CS_{\CB|\CA})$ is equivalent to a finite direct sum of $\widehat\Hilb$. A sector of $\CS_{\CB|\CA}$ is semisimple if and only if it is a dualizable $\CB(S^{n-1}_\uparrow)$-$\CA(S^{n-1}_\uparrow)$-bimodule.
\end{prop}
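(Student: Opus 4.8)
The plan is to realize $\Sect(\CS_{\CB|\CA})$ as a full subcategory of bimodules over the two hemisphere algebras and then use the finiteness of $\CA$ and $\CB$ to force that subcategory to be a finite direct sum of copies of $\widehat\Hilb$. Write $B=\CS_{\CB|\CA}(S^{n-1}_\uparrow)$ and $A=\CA(S^{n-1}_\uparrow)$. Since $S^{n-1}_\uparrow$ meets $S^0$ only in the pole lying in $S^0_+$, it avoids $S^0_-$, so $\CS_{\CB|\CA}(S^{n-1}_\uparrow)=\CB(S^{n-1}_\uparrow)=B$; symmetrically $\CS_{\CB|\CA}(S^{n-1}_\downarrow)=\CA(S^{n-1}_\downarrow)\cong A^\op$ via the reflection $\eta_r$. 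Because $S^{n-1}_\uparrow$ and $S^{n-1}_\downarrow$ have disjoint interiors, locality makes the two actions on any sector $\CH$ commute, so forgetting the remaining data gives a $*$-functor $\Phi:\Sect(\CS_{\CB|\CA})\to{}_B\Mod_A$ into the $*$-1-category of $B$-$A$-bimodules.

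First I would show $\Phi$ is fully faithful. Faithfulness is immediate. For any $I\subset S^{n-1}_\uparrow$ the sector relation $\rho_I=\rho_{S^{n-1}_\uparrow}\circ\CS_{\CB|\CA}(I\subset S^{n-1}_\uparrow)$ shows that $\rho_I(\CS_{\CB|\CA}(I))$ lies in the image of $B$, and symmetrically for regions in $S^{n-1}_\downarrow$; hence a $B$-$A$-bimodule map automatically intertwines $\rho_I$ for every region contained in a closed hemisphere. The regions not of this form straddle the equator $S^{n-2}$, and for them I would invoke additivity together with an approximation by slightly enlarged, overlapping hemispheres, as in the conformal-net constructions of \cite{BDH15,BDH19a}, to conclude that $B$ and $A^\op$ generate $\CO(\CH)$. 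This is the one bookkeeping point requiring genuine care: enlarging a hemisphere enlarges its algebra, so one must use the split property and Haag duality of $\CB$ and $\CA$ to check that the enlarged algebras contribute no new constraints on bimodule maps. Full faithfulness then identifies $\Sect(\CS_{\CB|\CA})$ with its essential image in ${}_B\Mod_A$.

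Next I would bring in the finiteness of $\CA$ and $\CB$. The duality axiom makes the vacuum bimodules $\CH_\CA=L^2(A)$ and $\CH_\CB=L^2(B)$, i.e.\ the identity $n$-morphisms, dualizable, but its essential role is as a finite-index hypothesis controlling the size of the sector category. Concretely, I would argue that the split property together with the duality condition force $\CO(\CH)$ to be a type I von Neumann algebra with finite-dimensional centre for \emph{every} sector $\CH$: the split property supplies the normal product structure across the equator, while the finite-index duality makes the resulting bimodule discretely decomposable and bounds the number of inequivalent irreducible sectors, in precise analogy with how finite $\mu$-index yields complete rationality and finitely many superselection sectors in the conformal-net setting. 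Granting this, every sector decomposes as $\bigoplus_{i=1}^{r}\CK_i\otimes X_i$ for finitely many irreducible sectors $X_1,\dots,X_r$ and arbitrary multiplicity spaces $\CK_i\in\widehat\Hilb$, which is exactly the assertion $\Sect(\CS_{\CB|\CA})\simeq\widehat\Hilb^{\oplus r}$.

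Finally, for the semisimplicity criterion I would note that under $\Phi$ each irreducible $X_i$ maps to a dualizable (finite-index) irreducible $B$-$A$-bimodule, so finite direct sums of the $X_i$ are again dualizable, whereas an amplification $\CK_i\otimes X_i$ with $\dim\CK_i=\infty$ can never be dualizable since its coevaluation would be of infinite rank. Hence a sector $\bigoplus_i\CK_i\otimes X_i$ is semisimple, i.e.\ a finite direct sum of irreducibles, exactly when all $\CK_i$ are finite-dimensional, and this happens exactly when the underlying $B$-$A$-bimodule is dualizable; the required dualizability calculus is the Connes-fusion adjunction recalled in Appendix \ref{sec:condense-vn}. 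I expect the genuine obstacle to be the middle step: extracting from the abstract split and duality axioms the complete-rationality-type conclusion that every sector is type I with only finitely many irreducible constituents, while handling the type III nature of $A$ and $B$ through the split property. Once that finiteness is secured, the bimodule reformulation and the dualizability bookkeeping are routine.
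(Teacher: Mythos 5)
Your outer steps (the fully faithful embedding of $\Sect(\CS_{\CB|\CA})$ into $\CB(S^{n-1}_\uparrow)$-$\CA(S^{n-1}_\uparrow)$-bimodules, and the final bookkeeping that finite sums of dualizable irreducibles are dualizable while infinite amplifications are not) agree with what the paper uses. But the middle step, which you yourself flag as ``the genuine obstacle'' and then dispose of by ``precise analogy'' with complete rationality, is not a bookkeeping point: it is the entire content of the paper's proof, and the analogy you invoke does not transfer as stated. The complete-rationality machinery of \cite{KLM01} (finite $\mu$-index $\Rightarrow$ finitely many sectors, type I decomposability) uses strong additivity in an essential way, and the paper is explicit that topological nets in general fail strong additivity (this is precisely why it cannot simply quote \cite{BDH19a} for defect fusion either). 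So ``split $+$ duality force complete rationality'' is not something you can cite; it is what must be proved, by a mechanism tailored to the duality axiom as actually stated --- dualizability of the vacuum as an $\CO(\CH_\CA,I\cup J)$-$\CO(\CH_\CA,I'\cap J')^\op$-bimodule for disjoint disks $I,J$, a two-region condition, not dualizability of $\CH_\CA=L^2(A)$ as an $A$-$A$-bimodule (which is the identity $n$-morphism and trivially dualizable, so your parenthetical reading of the axiom is off).

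The paper's mechanism is the following, and it is the idea missing from your proposal. First reduce to $\CA=\CB$ by replacing both with $\CA\oplus\CB$. Then sew $\CA$ with $\CA^\op$ over the cylinder $\{x\in S^{n-1}\mid |x_0|\le 1/2\}$, producing a stratified net $\CS$ that is a disjoint union of two copies of $\CA$, together with a canonical sector $\CH$ of $\CS$. The sewing geometry is engineered so that the split property and the two-disk duality axiom say exactly that $\CH$ is a dualizable $\CA(S^{n-1}_\uparrow)\bar\otimes\CA(S^{n-1}_\downarrow)$-bimodule on both sides. The crucial observation is then that the duality maps can be chosen to be homomorphisms of \emph{sectors} of $\CS$; this upgrades dualizability from a statement about the underlying bimodule to one inside $\Sect(\CS)$, forcing $\CH$ to be semisimple there, i.e.\ $\CH\cong\bigoplus_{i=1}^m \CH_i\otimes\CK_i$ with $\CH_i,\CK_i$ irreducible sectors of $\CA$ that are dualizable bimodules. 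Finally, an argument as in the proof of \cite[Theorem 3.14]{BDH15} shows that \emph{every} sector of $\CA$ is a (possibly infinite) direct sum of copies of the $\CH_i$, which gives $\Sect(\CS_{\CB|\CA})\simeq\widehat\Hilb^{\oplus r}$; you assert this last decomposition but supply no argument for it either. In short: your proposal correctly identifies where the difficulty lies and correctly handles everything downstream of it, but the cylinder-sewing construction that converts the duality axiom into semisimplicity of a single canonical sector --- the replacement for the unavailable complete-rationality toolkit --- is absent, so the proof as proposed does not close.
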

\begin{proof}[Sketch of proof]
The proof follows the lines of \cite[Theorem 3.14]{BDH15} (see also \cite{KLM01} and \cite[Theorem 1.10]{BDH19b}).
Replacing $\CA$ and $\CB$ by $\CA\oplus\CB$ if necessary, we may assume without loss of generality that $\CA=\CB$. Then $\CS_{\CB|\CA}=\CA$.
Sewing $\CA$ and $\CA^\op$ over the cylinder $\{x\in S^{n-1} \mid |x_0|\le1/2]\}$, we obtain a stratified net $\CS$ as well as a sector $\CH$, where $\CS$ is a disjoint union of two copies of $\CA$. By the split property and the duality condition, $\CH$ is a dualizable $\CA(S^{n-1}_\uparrow)\bar\otimes\CA(S^{n-1}_\downarrow)$-$\CA(S^{n-1}_\uparrow)\bar\otimes\CA(S^{n-1}_\downarrow)$-bimodule. Note that the duality maps can be chosen to be homomorphisms of sectors of $\CS$. Therefore, $\CH$ has to be a semisimple sector of $\CS$. That is, $\CH$ has the form $\CH_1\otimes\CK_1\oplus\cdots\oplus\CH_m\otimes\CK_m$ where $\CH_i$ and $\CK_i$ are irreducible sectors of $\CA$. Moreover, $\CH_i$ and $\CK_i$ are dualizable $\CA(S^{n-1}_\uparrow)$-$\CA(S^{n-1}_\uparrow)$-bimodules.
Applying an argument as the proof of \cite[Theorem 3.14]{BDH15}, we see that every sector of $\CA$ is a direct sum of possibly infinite copies of the $\CH_i$'s.
\end{proof}

\begin{cor} \label{cor:net-sec-dual}
The composition of $n$-morphisms of $\Net^n$ is well-defined. Moreover, all the $n$-morphisms of $\Net^n$ are dualizable.
\end{cor}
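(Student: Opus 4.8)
The plan is to deduce Corollary~\ref{cor:net-sec-dual} directly from Proposition~\ref{prop:sec-ss-dual}, treating it essentially as a packaging statement. The content is that composition of $n$-morphisms is well-defined and that all $n$-morphisms are dualizable, so first I would verify well-definedness, then dualizability.

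For \emph{well-definedness} of composition, recall that an $n$-morphism $\CH:\CA\to\CB$ is a \emph{semisimple} sector of $\CS_{\CB|\CA}$, and that composition of $\CH:\CA\to\CB$ with $\CK:\CB\to\CC$ is defined by sewing, producing the Connes fusion $\CK\boxtimes_{\CB(S^{n-1}_\uparrow)}\CH$ as a sector of (a completion of) $\CS_{\CC|\CA}$. To know that composition lands in $\Net^n$ rather than only in $\widehat\Net^n$, one must check that this fused sector is again \emph{semisimple}. By Proposition~\ref{prop:sec-ss-dual}, semisimplicity is equivalent to dualizability of the associated bimodule: $\CH$ corresponds to a dualizable $\CB(S^{n-1}_\uparrow)$-$\CA(S^{n-1}_\uparrow)$-bimodule and $\CK$ to a dualizable $\CC(S^{n-1}_\uparrow)$-$\CB(S^{n-1}_\uparrow)$-bimodule. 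Since Connes fusion over $\CB(S^{n-1}_\uparrow)$ corresponds to the relative tensor product of these bimodules, and the relative tensor product of two dualizable bimodules is again dualizable (the dual being the fusion of the duals in the opposite order), the composite corresponds to a dualizable $\CC(S^{n-1}_\uparrow)$-$\CA(S^{n-1}_\uparrow)$-bimodule. Applying the equivalence of Proposition~\ref{prop:sec-ss-dual} in the other direction, the composite is therefore a semisimple sector of $\CS_{\CC|\CA}$, i.e. a legitimate $n$-morphism of $\Net^n$.

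For \emph{dualizability of every $n$-morphism}, the same proposition does most of the work: an $n$-morphism is by definition a semisimple sector, hence by Proposition~\ref{prop:sec-ss-dual} corresponds to a dualizable bimodule. A dual morphism in the categorical sense ($\Net^n$ viewed with composition of $n$-morphisms as $1$-composition in the local $2$-category $\Omega^{n-1}\Net^n$) is then furnished by the bimodule dual, with evaluation and coevaluation $(n+1)$-morphisms given by the corresponding maps of sectors; I would note that these duality maps can be taken to be homomorphisms of sectors exactly as in the proof of Proposition~\ref{prop:sec-ss-dual}, and that the identity $n$-morphism $\Id_\CA=\CH_\CA=L^2(\CA(S^{n-1}_\uparrow))$ is the monoidal unit so that the triangle (zig-zag) identities reduce to the standard bimodule ones.

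The main obstacle I expect is bookkeeping rather than conceptual: one must be careful that the bimodule structure extracted in Proposition~\ref{prop:sec-ss-dual} is natural enough that relative tensor product of bimodules genuinely matches the sewing/Connes-fusion definition of composition, including matching $\CA(S^{n-1}_\downarrow)\cong\CA(S^{n-1}_\uparrow)^\op$ correctly on both sides so that the left/right module structures compose in the right order. A secondary subtlety is that Proposition~\ref{prop:sec-ss-dual} does not impose the normalization condition on duals (see the footnote referencing Remark~\ref{rem:vnbim-dual}), so I would only claim dualizability, not any normalized or balanced dual, and correspondingly phrase the zig-zag identities up to invertible scalars where necessary. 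Everything else is a direct transport of the dualizability statement across the equivalence of Proposition~\ref{prop:sec-ss-dual}, so the corollary follows with little additional argument.
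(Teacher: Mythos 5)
Your proposal is correct and matches the paper's intended argument: the paper states this corollary without a separate proof precisely because it is the immediate packaging of Proposition~\ref{prop:sec-ss-dual} that you carry out — semisimple sectors correspond to dualizable bimodules, Connes fusion of dualizable bimodules is dualizable (so composition stays in $\Net^n$), and the bimodule dual, with duality maps chosen as sector homomorphisms as in the proposition's proof sketch, furnishes the dual $n$-morphism. Your cautionary remarks about matching $\CA(S^{n-1}_\downarrow)\cong\CA(S^{n-1}_\uparrow)^\op$ and about the absence of a normalization condition on duals are consistent with the paper's own footnote and Remark~\ref{rem:vnbim-dual}, so no gap remains.
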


\subsection{Fusion of defects}

Let $\CA_+:\CP\to\CB$ and $\CA_-:\CB\to\CQ$ be two $(k+1)$-morphisms of $\widehat\Net^n$ where $0\le k\le n-2$. We construct the composition $\CA_-\circ\CA_+:\CP\to\CQ$ by using the sewing operation defined in the previous subsection.

\smallskip

Let $M_\pm=S^{n-1}$ and $\phi_\pm: S_{\pm\epsilon} \to \overline{S^{n-1}\setminus S_{\mp\epsilon}}$ be the dilation map. 
Sewing $\CA_\pm$ along $\CB$ via $\phi_\pm$ then yields a stratified net $\CA:\CD\to\VN$ on an $(n-1)$-manifold $M$ which is diffeomorphic to $S^{n-1}$. Moreover, we have a sector of $\CA$ $$\CH = \CH_{\CA_-}\boxtimes_{\CB(S_+)}\CH_{\CA_+}.$$

Note that the isometry group $\Isom(S_0)$ acts on $M$ and $S^{n-1}$ leaving the $x_{n-1-k}$-axis fixed and that $M$ contains a copy of $S^{n-1}\setminus S_\epsilon$. Extend the identity map of $S^{n-1}\setminus S_\epsilon$ to an $\Isom(S_0)$-equivariant diffeomorphism $\pi_0:M\to S^{n-1}$.
Let $C\subset M_{n-1-k}$ be the closed subset lying between the two copies of $S_0$ in $M$. Note that $C$ is a cylinder over $S^{n-2-k}$. Modify $\pi_0$ on a sufficiently small tubular neighborhood of $C$ to an $\Isom(S^{n-2-k})$-equivariant smooth map $\pi:M\to S^{n-1}$ such that $\pi$ restricts to the projection $C\twoheadrightarrow S^{n-2-k}$ and maps the complement of $C$ diffeomorphically onto that of $S^{n-2-k}$ preserving the stratification and the local parametrizations.

\medskip
\begin{center}
\begin{tikzpicture}[scale=1.2]
\fill[color=lightgray] (-0.257,0.866) arc (60:40:1) arc (140:120:1) -- (0.257,-0.866) arc (240:220:1) arc (-40:-60:1) node[above=2.5em,right=0em][color=black]{$S_\epsilon$} node[above=2.5em,left=1em][color=black]{$S_{+\epsilon}$} node[above=2.5em,right=3em][color=black]{$S_{-\epsilon}$};
\draw[thick] (0,0.643) arc (40:320:1) arc (-140:140:1) node[below=5.5em]{$M$};
\draw[ultra thick,color=red] (-0.757,1) arc (90:40:1) arc (140:90:1) node[left=1.7em]{$C$};
\draw[ultra thick,color=red] (-0.757,-1) arc (-90:-40:1) arc (-140:-90:1);
\draw[->] (3.5,0) -- (3.5,.7) node[right]{$x_0$};
\draw[->] (3.5,0) -- (2.8,0) node[below]{$x_{n-1-k}$};
\end{tikzpicture}
\end{center}
\medskip

Now we are ready to define the composition $\CA_-\circ\CA_+:\CP\to\CQ$:
$$(\CA_-\circ\CA_+)(I) = \begin{cases}
 \CP(I), & \text{if $I\cap S^{n-1-k}_+=\emptyset$}, \\
 \CQ(I), & \text{if $I\cap S^{n-1-k}_-=\emptyset$}, \\
 \CO(\CH,\pi^{-1}(I)), & \text{if $I\cap S^{n-2-k}\ne\emptyset$}, \\
\end{cases}
$$
where $\CO(\CH,R)$ is the von Neumann algebra on $\CH$ generated by the images of $\CA(J)$ for all $R\supset J\in\CD$.

\begin{prop}
The pair $(\CA_-\circ\CA_+,\CH)$ is a well-defined $(k+1)$-morphism of $\widehat\Net^n$. 
\end{prop}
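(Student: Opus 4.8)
The plan is to recognize $\CA_-\circ\CA_+$ as the pushforward along $\pi$ of the stratified net $\CA$ produced by the sewing operation of the previous subsection, reconciled with the boundary data $\CP$ and $\CQ$ away from the fused defect. Since that subsection already equips $M$ with a stratified net $\CA:\CD\to\VN$ and a sector $\CH=\CH_{\CA_-}\boxtimes_{\CB(S_+)}\CH_{\CA_+}$, the remaining work is to check that this pushforward is a genuine defect $n$-net of codimension $k+1$ with the prescribed behavior along $S^{n-k-1}_\pm$; that is, to verify functoriality together with the four axioms and the boundary conditions of a $(k+1)$-morphism $\CP\to\CQ$. As we are only asserting membership in $\widehat\Net^n$, no semisimplicity or duality is needed, and the argument rests purely on the defect-net axioms.

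First I would check that the three clauses in the definition of $(\CA_-\circ\CA_+)(I)$ assemble into a well-defined functor on $\Disk^{n-1}_{k+1}$. A region $I$ of depth $k+1$ meets $S^{n-k-2}$ and is covered by the third clause alone; a region of depth $\le k$ is connected and avoids $S^{n-k-2}=S^{n-k-1}_+\cap S^{n-k-1}_-$, so $I\cap S^{n-k-1}$ lies in at most one of the two half-spheres. If it lies in exactly one, a single one of the first two clauses applies; if $I\cap S^{n-k-1}=\emptyset$, both apply and agree, since the boundary conditions for the $(k+1)$-morphisms $\CA_\pm$ force $\CP(I)=\CB(I)=\CQ(I)$. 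For functoriality the only nonformal point is an inclusion $I\subset J$ with $I$ a boundary region and $J$ of depth $k+1$: here one uses that $\CH$ is a sector of $\CA$, so the copy of $I$ in $M$ sits inside $\pi^{-1}(J)$ and the action of $\CP(I)$ (resp. $\CQ(I)$) on $\CH$ factors through $\CO(\CH,\pi^{-1}(J))$, which supplies the required structure homomorphism.

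Next I would verify the axioms. The vacuum sector is $\CH$ itself, already a sector of $\CA_-\circ\CA_+$ by the sector-sewing construction. Locality and additivity are local conditions: away from the seam they are inherited from $\CA_\pm$ and $\CB$, and across the seam they follow from the additivity of the sewn net together with the monotonicity and additivity of $R\mapsto\CO(\CH,R)$. For covariance I would produce, for $h\in\Diff(S^{n-1})_{k+1}$, the isometry $\alpha_h$ on $\CH$ from the isometries on $\CH_{\CA_\pm}$ via Connes fusion, using the $\Isom(S^{n-k-2})$-equivariance built into $\pi$ to intertwine the pushed-forward local algebras; the clause $\eta_{h,I}=\id$ for $h|_I=\id$, and the agreement of $\eta_{h,I}$ with $\CP$ or $\CQ$ for $I\cap S^{n-k-2}=\emptyset$, descend directly from the corresponding statements for $\CA_\pm$.

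The main obstacle is the vacuum property. One must identify $(\CA_-\circ\CA_+)(S^{n-1}_\uparrow)=\CO(\CH,\pi^{-1}(S^{n-1}_\uparrow))$ and show that, as a bimodule over it and its reflected opposite, $\CH$ is the standard form $L^2$ of this algebra. The strategy is to factor the upper hemisphere across the seam, writing its algebra as the one generated by the upper parts of $\CA_-$ and $\CA_+$ amalgamated over $\CB(S_+)$, and then to propagate the vacuum properties of $\CA_\pm$ (which make $\CH_{\CA_\pm}$ standard forms) through the fusion. The decisive input is the compatibility of Connes fusion with standard forms, in the guise of the identity $L^2(\CB(S_{+\epsilon}))\boxtimes_{\CB(S_+)}L^2(\CB(S_{+\epsilon}))\cong\CH_\CB$ already used in the sewing of sectors; controlling precisely how the $x_0\ge0$ algebra is glued from the two sides and matching its standard module with $\CH$ under this fusion is the technical heart of the argument, and is where the generalization of the Bartels--Douglas--Henriques techniques does the real work.
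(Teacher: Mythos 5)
Your reduction of the proposition is organized correctly, and you identify the right crux: the paper likewise dismisses functoriality, locality, additivity and covariance as clear, and declares the vacuum property (the $1\boxtimes1$-isomorphism of \cite{BDH19a}) to be the difficult part. But at exactly that point your proposal stops short of an argument. The ``compatibility of Connes fusion with standard forms'' you invoke is not an available lemma in this generality: the identity $L^2(\CB(S_{+\epsilon}))\boxtimes_{\CB(S_+)}L^2(\CB(S_{+\epsilon}))\cong\CH_\CB$ only serves to put a sector structure on $\CH$ near the seam; it does not identify $\CH$ with $L^2(A)$ for $A=(\CA_-\circ\CA_+)(S^{n-1}_\uparrow)=\CO(\CH,\pi^{-1}(S^{n-1}_\uparrow))$, which is precisely the statement to be proved --- fusing two standard forms over a common subalgebra does not automatically yield the standard form of the algebra generated across the seam. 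Worse, your fallback, that ``the generalization of the Bartels--Douglas--Henriques techniques does the real work,'' points at a route that fails here: as the paper remarks immediately after this proposition, the BDH fusion of defects rests on the strong additivity axiom (via the fiber product of von Neumann algebras), and general topological nets --- including the onsite-symmetry nets and Levin--Wen nets constructed earlier in the paper --- do not satisfy strong additivity. So the one step you defer is the one step that cannot be imported.

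What the paper actually supplies at this point is a direct modular-theoretic argument that your sketch does not contain. Identify $\CH_{\CA_\pm}\cong L^2(\CA_\pm(S^{n-1}_\uparrow))$, so each carries a modular conjugation $j_\pm$, and let $j:\CH\to\bar\CH$ be the induced conjugation. Decompose $\CH_{\CA_+}$ as a left module over $B=\CB(S_+)\vee\CA_+(S^{n-1}_\uparrow)$ into cyclic pieces $\overline{B\xi_\alpha}$ with $j_+$-invariant generators, treat the trivial case $\CA_+=\Id_\CB$ to produce $j_-$-invariant vectors $\phi_\beta$ with associated projections $q_\beta$, and in the general case decompose $\CH=\bigoplus\overline{A(\phi_\beta\otimes\xi_\alpha)}$ with $j$-invariant generators. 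The projections $jp_\alpha q_\beta j\in A$ then show that the positive functionals $\omega_{\alpha\beta}$ attached to the vectors $\phi_\beta\otimes\xi_\alpha$ are mutually orthogonal, so the faithful left $A$-module $\CH$ is induced by the weight $\sum\omega_{\alpha\beta}$; since this weight is compatible with $j$, it is faithful with modular conjugation $j$, whence $\CH\cong L^2(A)$ as an $A$-$A$-bimodule. This explicit construction of a weight from $j$-invariant cyclic decompositions is the missing idea in your proposal, and without it (or a genuine substitute) the vacuum property --- and hence the proposition --- remains unproven.
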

\begin{proof}[Sketch of proof]
We need to show that $(\CA_-\circ\CA_+,\CH)$ is a defect $n$-net of codimension $k+1$. The axioms of locality, additivity and covariance are clear.
The difficult part of the proof is to show the vacuum property (aka the $1\boxtimes1$-isomorphism in \cite{BDH19a}). 

Let $A = (\CA_-\circ\CA_+)(S^{n-1}_\uparrow)$. 
Identify $\CH_{\CA_\pm}$ with $L^2(\CA_\pm(S^{n-1}_\uparrow))$ so that it comes equipped with a modular conjugation $j_\pm:\CH_{\CA_\pm}\to\bar\CH_{\CA_\pm}$. Then $j_+$ and $j_-$ induce a conjugation $j:\CH\to\bar\CH$. 

Let $B = \CB(S_+)\vee\CA_+(S^{n-1}_\uparrow)$ be the von Neumann algebra on $\CH_{\CA_+}$. Choose a decomposition of the left $B$-module $\CH_{\CA_+} = \bigoplus \overline{B\xi_\alpha}$ where $\xi_\alpha\in\CH_{\CA_+}$ is $j_+$-invariant. 
Let $p_\alpha: \CH_{\CA_+} \to \overline{B\xi_\alpha}$ be the projection so that $p_\alpha\xi_\alpha=\xi_\alpha$ and $\sum p_\alpha = 1$. Since $p_\alpha \in B' = \CB(S_+)'\cap\CA_+(S^{n-1}_\downarrow)$, $j p_\alpha j$ defines a projection in $A$. 

Consider the trivial case $\CA_+=\Id_\CB$ so that $\CH_{\CA_+}\cong L^2(\CB(S_+))$. Recall that $\CH$ is a completion of $\hom_{\CB(S_-)}(L^2(\CB(S_-)),\CH_{\CA_-}) \otimes_{B(S_+)} \CH_{\CA_+}$.
Choose a decomposition of the left $A$-module $\CH = \bigoplus\overline{A(\phi_\beta\otimes\CH_{\CA_+})}$ where $\phi_\beta\in\hom_{\CB(S_-)}(L^2(\CB(S_-)),\CH_{\CA_-})$ is $j_-$-invariant. Let $q_\beta:\CH\to\overline{A(\phi_\beta\otimes\CH_{\CA_+})}$ be the projection so that $\sum q_\beta=1$. Then $q_\beta$ can be regarded as a projection in $\CB(S_-)'\cap\CA_-(S^{n-1}_\downarrow)$ such that $q_\beta\phi_\beta=\phi_\beta$.

For the general case, we have $\CH = \bigoplus\overline{A(\phi_\beta\otimes\xi_\alpha)}$. Let $\omega_{\alpha\beta}$ be the positive linear functional on $A$ associated to the $j$-invariant vector $\phi_\beta\otimes\xi_\alpha$. Then $(j p_\alpha q_\beta j)\omega_{\alpha\beta}=\omega_{\alpha\beta}$. Thus $\omega_{\alpha\beta}$ are mutually orthogonal.
Therefore, the faithful left $A$-module $\CH$ is induced by the weight $\sum\omega_{\alpha\beta}$. Since the weight $\sum\omega_{\alpha\beta}$ is compatible with $j$, it has to be faithful and the associated modular conjugation is $j$. This shows that the $A$-$A$-bimodule $\CH$ is isometric to $L^2(A)$, as desired.
\end{proof}

\begin{prop}
If both of $\CA_{\pm}$ belong to $\Net^n$, so is $\CA_-\circ\CA_+$.
\end{prop}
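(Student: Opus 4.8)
The plan is to verify the three conditions that define finiteness for $\CA_-\circ\CA_+$ as a defect $n$-net of codimension $k+1$: that its vacuum sector $\CH=\CH_{\CA_-}\boxtimes_{\CB(S_+)}\CH_{\CA_+}$ is semisimple, and that the split property and the duality condition hold for a suitable (equivalently, any) pair of disjoint disk regions meeting $S^{n-2-k}$. The structural backbone of the argument is that dualizability of bimodules over von Neumann algebras is preserved under Connes fusion, which is exactly the kind of statement packaged into the $*$-condensation completeness proved in Appendix \ref{sec:condense-vn}. Throughout, I would exploit the freedom to check each condition on whatever disk regions are most convenient.

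First I would handle semisimplicity. Finiteness of $\CA_\pm$ forces their vacuum sectors $\CH_{\CA_\pm}$, regarded as bimodules over the corner algebras around $S^{n-1-k}$, to be dualizable: this is precisely the content of the duality condition combined with the semisimplicity criterion of Proposition \ref{prop:sec-ss-dual} read one codimension higher. Since $\CB$ is finite, the algebra $\CB(S_+)$ along which we fuse is (by the split property together with Remark \ref{rem:tn}(4) applied to $\CB$) of type I, so Connes fusion over it carries dualizable bimodules to dualizable bimodules; here I would invoke the preservation result of Appendix \ref{sec:condense-vn}. Consequently the $(\CA_-\circ\CA_+)(S^{n-1}_\uparrow)$-bimodule $\CH$ is dualizable, and by the same semisimplicity-versus-dualizability dictionary (the analog of Proposition \ref{prop:sec-ss-dual} and Corollary \ref{cor:net-sec-dual} in codimension $k+1$) this is exactly semisimplicity of $\CH$ as a sector of $\CA_-\circ\CA_+$.

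For the split property and the duality condition, I would test them on a pair of disjoint disk regions $I,J\in\Disk^{n-1}_{k+1}$ that both meet $S^{n-2-k}$ but lie inside the region on which the collapsing map $\pi$ is a diffeomorphism onto the complement of $S^{n-2-k}$, well away from the cylinder $C$. On such regions $(\CA_-\circ\CA_+)(I)=\CO(\CH,\pi^{-1}(I))$ is assembled from the local algebras of $\CA_+$, $\CA_-$ and $\CB$, so that the split property and the dualizability of the $\CO(\CH,I\cup J)$-$\CO(\CH,I'\cap J')^\op$-bimodule $\CH$ descend from the corresponding finiteness data of $\CA_\pm$ (and of $\CB$ near the seam), again using the fusion-dualizability of Appendix \ref{sec:condense-vn} to control the contribution of the Connes fusion. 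Because finiteness only needs to be verified for some such pair, placing $I$ and $J$ off the seam suffices.

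The main obstacle I expect is the semisimplicity step, and within it the claim that Connes fusion over $\CB(S_+)$ preserves dualizability in the required sense; this is where the type I finiteness of $\CB$ and the preservation theorem of Appendix \ref{sec:condense-vn} are genuinely needed. By contrast, once the disk regions are chosen away from $C$, the split and duality conditions reduce fairly mechanically to the finiteness already assumed for $\CA_\pm$, so I do not anticipate serious difficulty there.
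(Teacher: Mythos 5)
There is a genuine gap, and it sits exactly where you claim the argument is easiest. First, in the semisimplicity step you assert that dualizability of $\CH_{\CA_+}$ as a $\CB(S_+)$-$\CO(\CH_{\CA_+},S_+')^\op$-bimodule ``is precisely the content of the duality condition'' of $\CA_+$ read one codimension higher. It is not: the duality axiom for the finite defect net $\CA_+$ concerns the bimodule structure attached to \emph{two disjoint disk regions} $I,J\in\Disk^{n-1}_k$ meeting $S^{n-1-k}$, whereas here one acting algebra is the half-sphere algebra $\CB(S_+)$ of the \emph{fusion interface} $\CB$, a configuration covered by no axiom of $\CA_+$. This is the actual content of the proposition, and the paper handles it by constructing duality maps explicitly for the $\CB(S_+)$-$\CO(\CH_{\CA_+},S_+')^\op$-bimodule $\CH_{\CA_+}$ and the $\CO(\CH_{\CA_-},S_-')$-$\CB(S_+)$-bimodule $\CH_{\CA_-}$. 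By contrast, the step you flag as the ``main obstacle'' is formal: once those two bimodules are dualizable, their Connes fusion $\CH$ is dualizable because the composition of dualizable $1$-morphisms in any $*$-$2$-category is dualizable (with dual the fusion of the conjugates, cf.\ Remark \ref{rem:vnbim-dual}); no type I hypothesis and no ``preservation theorem'' from Appendix \ref{sec:condense-vn} is needed or indeed present there (the appendix proves $*$-condensation completeness, Theorem \ref{thm:vn-cc}, not a fusion--dualizability statement). Your premise is also false as stated: Remark \ref{rem:tn}(4) makes the \emph{global} algebra $\CO(\CH_\CB)$ type I, not the local algebra $\CB(S_+)$ --- for a finite conformal net the local algebras are typically type III.

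Second, your choice of test regions for the split and duality conditions is contradictory. Finiteness of the codimension-$(k+1)$ defect $\CA_-\circ\CA_+$ must be checked on disjoint $I,J\in\Disk^{n-1}_{k+1}$ that \emph{intersect} $S^{n-2-k}$; but $\pi$ collapses the cylinder $C$ onto $S^{n-2-k}$, so any such $I$ has $\pi^{-1}(I)\cap C\ne\emptyset$, and there is no region that both meets $S^{n-2-k}$ and lies ``well away from the cylinder $C$.'' Hence $(\CA_-\circ\CA_+)(I)=\CO(\CH,\pi^{-1}(I))$ always involves the seam, $I'\cap J'$ does as well, and the duality condition cannot ``descend fairly mechanically'' from the finiteness of $\CA_\pm$: here too the paper's sketch constructs the duality maps explicitly on the fused configuration. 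In short, your proposal inverts the difficulty --- it treats the genuinely nontrivial corner-bimodule dualizability and seam duality as automatic, while spending its effort on the fusion step, which is categorical bookkeeping.
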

\begin{proof}[Sketch of proof]
We need to show that $\CA_-\circ\CA_+$ is finite.
The split property is clear. 
The duality condition is proved by constructing duality maps explicitly.
By constructing duality maps explicitly again, one shows that the $\CB(S_+)$-$\CO(\CH_{\CA_+},S_+')^\op$-bimodule $\CH_{\CA_+}$ is dualizable and similarly for the $\CO(\CH_{\CA_-},S_-')$-$\CB(S_+)$-bimodule $\CH_{\CA_-}$. Therefore, the $\CO(\CH_{\CA_-},S_-')$-$\CO(\CH_{\CA_+},S_+')^\op$-bimodule $\CH$ is dualizable. This implies that $\CH$ is a semisimple sector of $\CA_-\circ\CA_+$.
\end{proof}

\begin{rem}
The fusion of defects was carried out in \cite{BDH19a} by using (a variant of) fiber product of von Neumann algebras defined in \cite{Ti08}. However, the construction relies on the strong additivity axiom which is not satisfied by general topological nets, for example, those constructed in Section \ref{sec:net-oss} and Section \ref{sec:lw-net}. In the special case where $\CB$ does satisfy the strong additivity axiom, our construction of the fusion $\CA_-\circ\CA_+$ for $n=2$ clearly recovers that of \cite{BDH19a}.
\end{rem}

We have defined all ``vertical'' compositions of morphisms of $\widehat\Net^n$. ``Horizontal'' compositions are defined literally as ``vertical'' ones. It remains to construct various coherence relations of $\widehat\Net^n$ to establish Theorem \ref{thm:netn}. This part of work is essentially tautological but appeals to more sophisticated sewing operations of stratified nets.

So far, we have made a sufficient preparation for studying the condensation theory of topological nets which is the main topic of this paper. We shall leave the details of the construction of $\widehat\Net^n$ to a future work.

\begin{rem}
One can show that $\Net^n$ has duals: the dual of an object $(\CA,\CH_\CA)$ is $(\CA^\op,\bar\CH_\CA)$, the dual of an $n$-morphism $\CH$ is the complex conjugate $\bar\CH$ and the dual of a $k$-morphism $(\CF,\CH_\CF)$ for $1\le k<n$ is $(r_k^*\CF,\bar\CH_\CF)$ where $r_k$ is the reflection across the hyperplane $x_{n-k}=0$. The proof is an application of sewing arguments and will not be given here. See \cite{BDH19b} for a dualizability result of the symmetric monoidal $*$-3-category of conformal nets. We need not this fact in this paper.
\end{rem}

\section{Condensation theory of topological nets} \label{sec:condense-net}

The purpose of this section is to sketch a proof of Theorem \ref{thm:lqs-cc}.

\subsection{Condensation of 1D defects}

Let $\CR$ be a $*$-condensation monad on an $(n-1)$-morphism $\CA$ of $\widehat\Net^n$ (in particular, $\CR$ is a sector of $\CA$). Then $\CR$ induces a $*$-condensation monad on the von Neumann algebra $\CA(S^{n-1}_\uparrow)$ in the Morita $*$-2-category of von Neumann algebras. Applying Theorem \ref{thm:vn-cc} and Remark \ref{rem:vn-cc} then yields a canonical $*$-condensation $\CA\condense\CB$ extending the $*$-condensation monad $\CR$. More precisely, the $*$-condensation $\CA\condense\CB$ is defined by a pair of $n$-morphisms $\CH_\CB:\CA\to\CB$ and $\CR:\CB\to\CA$.

Now we assume that the $*$-condensation monad $\CR$ is defined in $\Net^n$. Since $n$-morphisms of $\Net^n$ are dualizable by Corollary \ref{cor:net-sec-dual}, we may assume that $\CR$ is unital (see \cite[Theorem 3.1.7]{GJF19}). According to Remark \ref{rem:vn-cc-dual}, $\CR:\CB\to\CA$ is right dual to $\CH_\CB:\CA\to\CB$.

It is easy to verify that $\CA\condense\CB$ is a well-defined $*$-condensation in $\Net^n$: Since $\CR:\CA\to\CA$ is semisimple and since the action of $\CA(S^{n-1}_\uparrow)$ on $\CR$ factors through $\CB(S^{n-1}_\uparrow)$, $\CR:\CB\to\CA$ is semisimple. Then $\CH_\CB:\CA\to\CB$ is also semisimple. Since the action of $\CA(S^{n-1}_\downarrow)$ on $\CH_\CB$ factors through $\CB(S^{n-1}_\downarrow)$, $\CH_\CB$ is a semisimple sector of $\CB$. Since $\Id_\CB$ is a direct summand of $\CH_\CB\circ\CR$, the duality condition of $\CA$ implies that of $\CB$. Hence $\CB$ is finite.

\subsection{Condensation of topological 2-nets}

We consider in this subsection the simplest nontrivial case $n=2$. Let $\CA=\underline{\C}$ be the trivial topological 2-net. We establish Theorem \ref{thm:lqs-cc} by showing that every $*$-condensation monad on $\CA$ admits a $*$-condensate in $\LQS^2$ and every $*$-condensation bimodule is induced by a 1-morphism of $\LQS^2$.

Let $\CR$ be a unital $*$-condensation monad on $\CA$.
To unpack the definition of $\CR$, we assume a prior that $\CR$ is induced by a $*$-condensation $f:\CA\condense\CE$ defined by the counit map $v:f\circ f^\vee\to\Id_\CE$ and a 3-morphism $\alpha:v\circ v^\vee\to\Id_{\Id_\CE}$ such that $\alpha\circ\alpha^*=1$. (Indeed, $\CE$ lives in the $*$-condensation completion of $\LQS^2$ which is equivalent to $3\Hilb$.) See the following picture. By composing the 1-morphisms in the left configuration horizontally to eliminate $\CE$, we obtain a configuration in terms of the defining data of $\CR$ on the right.
\medskip
\begin{center}
\begin{tikzpicture}[scale=1]
\fill[lightgray] (0,0) circle (1.5);
\draw[thick] (1.5,0) node[right]{$f$} 
  arc(0:90:1.5) node{$\bullet$} node[above]{$u^\vee$}
  arc(90:180:1.5) node[left]{$f^\vee$}
  arc(180:270:1.5) node{$\bullet$} node[below]{$u$}
  arc(270:360:1.5);
\fill[fill=white] (0,0) circle (0.5) node{$\CA$} node[left=5em,above=3em]{$\CA$} node[left=2.5em,above=1.2em]{$\CE$};
\draw[thick] (.5,0) node[right]{$f^\vee$} 
  arc(0:90:.5) node{$\bullet$} node[above]{$v$}
  arc(90:180:.5) node[left]{$f$}
  arc(180:270:.5) node{$\bullet$} node[below]{$v^\vee$}
  arc(270:360:.5);
\end{tikzpicture}
\quad\quad \raisebox{5.3em}{$=$} \quad\quad
\begin{tikzpicture}[scale=1]
\draw[thick] (0,0.5) node{$\bullet$} node[above left]{$m$} 
  -- (0,1.5) node{$\bullet$} node[above]{$u^\vee$} node[below=1.2em,right]{$\CR$};
\draw[thick] (0,-1.5) node{$\bullet$} node[below]{$u$} 
  -- (0,-0.5) node{$\bullet$} node[below left]{$m^\vee$} node[below=1.2em,right]{$\CR$};
\draw[thick] (0,0) circle (0.5) node{$\CA$} node[left=3.5em,above=3em]{$\CA$}
 node[left=1.2em]{$\CR$} node[right=1.2em]{$\CR$};
\end{tikzpicture}
\end{center}
\medskip
The 2-morphisms $u:\Id_\CA\to\CR$ and $m:\CR\circ\CR\to\CR$ exhibit $\CR$ as an algebra in the monoidal $*$-2-category $\Hom_{\LQS^2}(\CA,\CA)$. Moreover, the 2-morphism $m^\vee\circ u:\Id_\CA\to\CR\circ\CR$ exhibits $\CR$ self dual.

Let $\theta:\Id_\CA\to\CR$ be a 2-morphism which contains all the simple 2-morphisms $\Id_\CA\to\CR$ as direct summands and admits an isometric embedding $\mu:u\hookrightarrow\theta$. Note that $\theta^\vee:\CR\to\Id_\CA$ induces a 2-morphism $\Id_\CA\to\CR^\vee$ which by the self duality of $\CR$ determines a 2-morphism $\theta^*:\Id_\CA\to\CR$.

We follow the lines of Section \ref{sec:lw-net} to construct a $*$-condensate $\CB$ of $\CR$.
Let $P=\{p_1,\dots,p_k\}\subset S^1$ be a finite subset where the points $p_i$ are in cyclic order. We have a sector of $\CA$ defined by the following composition:
$$\CH_P: \Id_\CA \xrightarrow{(\theta\theta^*)^k} \CR^{2k} \xrightarrow{m} \CR \xrightarrow{u^\vee} \Id_\CA$$
where $\CR^k$ denotes the $k$-fold composition $\CR\circ\CR\circ\cdots\circ\CR$. For any inclusion $P\subset Q$, the isometric embedding $\mu:u\hookrightarrow\theta$ induces an isometric embedding $\CH_P\hookrightarrow\CH_Q$. Form a direct limit in $\Sect(\CA)$
$$\CH_\CB = \varinjlim_P \CH_P.$$

To facilitate the generalization to higher dimensions, we give a diagrammatic description of $\CH_P$ by means of the postulated $*$-condensate $\CE$. Consider the following configuration where we insert a copy of the 2-morphism $\theta$ (resp. $\theta^*$) on the left (resp. right) of each point of $P$:
\medskip
\begin{center}
\begin{tikzpicture}[scale=1]
\filldraw[thick,fill=lightgray] (0,0) circle (1.5) node[left=5em,above=2em]{$\CA$} node[]{$\CE$};
\draw (1.5,0) 
 arc (0:25:1.5) node{$\bullet$} node[right]{$\theta$}
 arc (25:45:1.5) node{$+$} node[above=.5em,right]{$p_1$}
 arc (45:65:1.5) node{$\bullet$} node[right=.5em,above]{$\theta^*$}
 arc (65:205:1.5) node{$\bullet$} node[left]{$\theta$}
 arc (205:225:1.5) node{$+$} node[below=.5em,left]{$p_2$}
 arc (225:250:1.5) node{$\bullet$} node[below=.5em,left]{$\theta^*$}
 arc (250:295:1.5) node{$\bullet$} node[below=.5em,right]{$\theta$}
 arc (295:315:1.5) node{$+$} node[below=.5em,right]{$p_3$}
 arc (315:340:1.5) node{$\bullet$} node[right]{$\theta^*$}
;
\end{tikzpicture}
\quad\quad \raisebox{5em}{$:=$} \quad\quad
\raisebox{.5em}{\begin{tikzpicture}[scale=.9]
\filldraw[thick,fill=lightgray] (1.65,0) arc (0:180:1.65 and 1.65) node[above=2.3em]{$\CA$}
 -- (-1.65,-1) arc (-180:-90:.15) node{$\bullet$} node[below]{$\theta$} arc (-90:0:.15)  
 -- (-1.35,-.5) arc (180:90:.15) node{$\times$} node[above]{$p_1$} arc (90:0:.15)
 -- (-1.05,-1) arc (-180:-90:.15) node{$\bullet$} node[below]{$\theta^*$} arc (-90:0:.15)
 -- (-.75,-.5) arc (180:0:.15) 
 -- (-.45,-1) arc (-180:-90:.15) node{$\bullet$} node[below]{$\theta$} arc (-90:0:.15)
 -- (-.15,-.5) arc (180:90:.15) node{$\times$} node[above]{$p_2$} node[above=2em]{$\CE$} arc (90:0:.15)
 -- (.15,-1) arc (-180:-90:.15) node{$\bullet$} node[below]{$\theta^*$} arc (-90:0:.15)
 -- (.45,-.5) arc (180:0:.15) 
 -- (.75,-1) arc (-180:-90:.15) node{$\bullet$} node[below]{$\theta$} arc (-90:0:.15)
 -- (1.05,-.5) arc (180:90:.15) node{$\times$} node[above]{$p_3$} arc (90:0:.15)
 -- (1.35,-1) arc (-180:-90:.15) node{$\bullet$} node[below]{$\theta^*$} arc (-90:0:.15)
 -- (1.65,0);
\end{tikzpicture}}
\end{center}
\medskip
The left diagram is interpreted by the right one. By composing the 1-morphisms horizontally to eliminate $\CE$, we obtain a configuration of which the vertical composition recovers the above definition of $\CH_P$.


\medskip
\begin{center}
\begin{tikzpicture}[scale=1]
\filldraw[thick,fill=lightgray] (0,0) circle (1.5) node[left=5em,above=2em]{$\CA$} node[]{$\CE$};
\draw (1.5,0) 
 arc (0:25:1.5) node{$\bullet$} node[right]{$\theta$}
 arc (25:45:1.5) node{$+$} node[above=.5em,right]{$p_1$}
 arc (45:65:1.5) node{$\bullet$} node[right=.5em,above]{$\theta^*$}
 arc (65:205:1.5) node{$\bullet$} node[left]{$\theta$}
 arc (205:225:1.5) node{$+$} node[below=.5em,left]{$p_2$}
 arc (225:250:1.5) node{$\bullet$} node[below=.5em,left]{$\theta^*$}
 arc (250:295:1.5) node{$\bullet$} node[below=.5em,right]{$\theta$}
 arc (295:315:1.5) node{$+$} node[below=.5em,right]{$p_3$}
 arc (315:340:1.5) node{$\bullet$} node[right]{$\theta^*$}
;
\draw[ultra thick,color=red] (1.061,1.061) node[below=2.5em,right=1.2em]{$I$} arc (45:-135:1.5);
\end{tikzpicture}
\quad \raisebox{5em}{$=$} \quad
\raisebox{.5em}{\begin{tikzpicture}[scale=1]
\filldraw[thick,fill=lightgray] 
 (-1.05,0) node{$\times$} node[left]{$p_2$} node[left=1.5em,above=1.5em]{$\CA$} node[right=1em,above=.5em]{$\CE$}
 -- (-1.05,-1) arc (-180:-90:.15) node{$\bullet$} node[below]{$\theta^*$} arc (-90:0:.15)
 -- (-.75,-.5) arc (180:0:.15) 
 -- (-.45,-1) arc (-180:-90:.15) node{$\bullet$} node[below]{$\theta$} arc (-90:0:.15)
 -- (-.15,-.5) arc (180:90:.15) node{$\times$} node[above]{$p_3$} arc (90:0:.15)
 -- (.15,-1) arc (-180:-90:.15) node{$\bullet$} node[below]{$\theta^*$} arc (-90:0:.15)
 -- (.45,-.5) arc (180:0:.15) 
 -- (.75,-1) arc (-180:-90:.15) node{$\bullet$} node[below]{$\theta$} arc (-90:0:.15)
 -- (1.05,0) node{$\times$} node[right]{$p_1$}
 -- (1.05,.8) arc (0:90:.35) node{$\bullet$} node[above]{$\theta^\vee$} arc (90:180:.35)
 arc (0:-180:.35)
 arc (0:90:.35) node{$\bullet$} node[above]{$\theta^{*\vee}$} arc (90:180:.35)
 -- (-1.05,0);
\end{tikzpicture}}
\quad \raisebox{5em}{$=$} 
\raisebox{1.5em}{\begin{tikzpicture}[scale=1]
\draw[thick] (0,-.7) node{$\bullet$} node[below]{$\CH_{P,I}$} -- (0,0) node[right]{$\CR$} -- (0,.7) node{$\bullet$} node[above]{$\bar\CH_{P,I'}$};
\end{tikzpicture}}
\end{center}
\medskip

Let $I\subset S^1$ be an arc and let $P\subset S^1$ be a finite subset. Enlarging $P$ if necessary, we assume $\partial I\subset P$. Note that $I$ and $I'$ divide the collection of the inserted 2-morphisms $\theta$ and $\theta^*$ into two disjoint parts. Correspondingly, $\CH_P$ admits a decomposition as depicted in the above picture
$$\CH_P: \Id_\CA \xrightarrow{\CH_{P,I}} \CR \xrightarrow{\bar\CH_{P,I'}} \Id_\CA.$$
Passing to the direct limit yields a decomposition
$$\CH_\CB: \Id_\CA \xrightarrow{\CH_{\CB,I}} \CR \xrightarrow{\bar\CH_{\CB,I'}} \Id_\CA.$$
Define $\CB(I)$ to be the von Neumann algebra
$$\CB(I) = \hom_{\CR(S^1_\uparrow)}(\CH_{\CB,I},\CH_{\CB,I}).$$

We need to verify that $(\CB,\CH_\CB)$ is a finite topological 2-net. Note that $\bar\CH_{P,I}\boxtimes_{\CR(S^1_\uparrow)}\CH_{P,I}$ defines a unital $*$-condensation monad on $\Id_\CA$, determining a $*$-condensation $\Id_\CA\condense\tilde\CB$ in $\LQS^2$ with $\tilde\CB(S^1_\uparrow) = \hom_{\CR(S^1_\uparrow)}(\CH_{P,I},\CH_{P,I})$ and $\CH_{\tilde\CB} = \bar\CH_{P,I}\boxtimes_{\CR(S^1_\uparrow)}\CH_{P,I}$. In particular, $\CH_{\tilde\CB} \cong L^2(\tilde\CB(S^1_\uparrow))$. Passing to the direct limit yields $\CH_\CB \cong L^2(\CB(S^1_\uparrow))$. This proves the vacuum property of $\CB$. The center $Z(\CO(\CH_\CB))$ can be computed by any finite configuration. Hence $\CH_\CB$ is a semisimple sector of $\CB$. The duality condition of $\CB$ is also proved by reducing the problem to finite configurations.

In order to obtain a 1-morphism $\CA\to\CB$ to extend the $*$-condensation monad $\CR$, one simply applies the above construction on a half circle instead of $S^1$.

\begin{rem}
Since $\CR$ is a unital $*$-condensation algebra in $\LQS^1 \simeq 2\Hilb$, 
$\CR$ may be regarded as a unitary multi-fusion 1-category $\CC$ so that $m:\CR\circ\CR\to\CR$ is identified with the tensor product $\otimes:\CC\boxtimes\CC\to\CC$ and $u:\Id_\CA\to\CR$ defines the tensor unit of $\CC$. Moreover, the 1-morphism $\theta: \Id_\CA \to \CR$ of $\LQS^1$ defines an object $V\in\CC$. The topological 2-net $(\CB,\CH_\CB)$ constructed above recovers the Levin-Wen net from Section \ref{sec:lw-net} associated to $\CC$ and $V$.
\end{rem}

Now we consider a $*$-condensation bimodule $\CS$ over unital $*$-condensation monads $\CR$ and $\CR'$ on $\CA$. To unpack the definition of $\CS$, we assume a prior that $\CR$ and $\CR'$ are induced by $*$-condensations $f:\CA\condense\CE$ and $f':\CA\condense\CE'$, respectively, and $\CS=f'^\vee\circ g\circ f$ where $g:\CE\to\CE'$ is a 1-morphism of $\LQS^2$. See the following picture:
\medskip
\begin{center}
\begin{tikzpicture}[scale=1]
\fill[fill=lightgray] (-.2,1.5) -- (.2,1.5) -- (.2,1.183) arc (80.4:-80.4:1.2) -- (.2,-1.5) 
  -- (-.2,-1.5) -- (-.2,-1.183) arc (260.4:99.6:1.2) -- (-.2,1.5);
\draw[thick] (.2,1.5) -- (.2,1.183) arc (80.4:-80.4:1.2) -- (.2,-1.5); 
\draw[thick] (-.2,-1.5) -- (-.2,-1.183) arc (260.4:99.6:1.2) -- (-.2,1.5);
\draw[thick] (0,1.5) -- (0,0) node{$g$} -- (0,-1.5);
\node at (-1.4,.9) {$\CA$}; 
\node at (-1,0) {$\CE'$};
\node at (1,0) {$\CE$};
\node at (-1.4,0) {$f'^\vee$};
\node at (1.4,0) {$f$};
\filldraw[thick,fill=white] (.2,.77) arc(75.5:-75.5:.8) -- (.2,0) node[right]{$\CA$} -- (.2,.77);
\filldraw[thick,fill=white] (-.2,.77) arc(104.5:255.5:.8) -- (-.2,0) node[left]{$\CA$} -- (-.2,.77);
\end{tikzpicture}
\quad\quad \raisebox{4em}{$=$} \quad\quad
\begin{tikzpicture}[scale=1]
\draw[thick] (0,1.5) -- (0,1) node[above=1em,right]{$\CS$} node{$\bullet$} 
  -- (0,0) node[right]{$\CS$} 
  -- (0,-1) node[below=1em,right]{$\CS$} node{$\bullet$} 
  -- (0,-1.5) ;
\draw[thick] (0,0) circle (1) node[left=2.7em]{$\CR'$} node[right=2.7em]{$\CR$};
\node at (-1.4,.9) {$\CA$}; 
\end{tikzpicture}
\end{center}
\medskip

Let $\zeta:\Id_\CA\to\CS$ be a 2-morphism which contains all the simple 2-morphisms $\Id_\CA\to\CS$ as direct summands. Let $I\in\Disk^1_1$ be an arc and let $P\subset S^1$ be a finite subset containing $\partial I$. Assume without loss of generality that $I$ intersects $S^0_\downarrow$. We have a sector $\CH_P$ of $\CA$ as depicted in the following picture:
\medskip
\begin{center}
\begin{tikzpicture}[scale=1]
\filldraw[thick,fill=lightgray] (0,0) circle (1.5) node[left=5em,above=2em]{$\CA$} 
  node[left=1em]{$\CE'$} node[right=1em]{$\CE$};
\draw (1.5,0) 
 arc (0:25:1.5) node{$\bullet$} node[right]{$\theta$}
 arc (25:45:1.5) node{$+$} node[above=.5em,right]{$p_1$}
 arc (45:65:1.5) node{$\bullet$} node[right=.5em,above]{$\theta^*$}
 arc (65:205:1.5) node{$\bullet$} node[left]{$\theta'$}
 arc (205:225:1.5) node{$+$} node[below=.5em,left]{$p_2$}
 arc (225:250:1.5) node{$\bullet$} node[below=.5em,left]{$\theta'^*$}
;
\draw[thick] (0,1.5) node[above]{$\zeta^*$} node{$\bullet$} -- (0,0) -- (0,-1.5) node[below]{$\zeta$} node{$\bullet$};
\draw[ultra thick,color=red] (1.061,1.061) node[below=2.5em,right=1.2em]{$I$} arc (45:-135:1.5);
\end{tikzpicture}
\quad \raisebox{5.5em}{$:=$} \quad
\raisebox{1em}{\begin{tikzpicture}[scale=1]
\filldraw[thick,fill=lightgray] 
 (-1,0) node{$\times$} node[left]{$p_2$} node[left=1.5em,above=1.5em]{$\CA$}
 -- (-1,-1) arc (-180:-90:.2) node{$\bullet$} node[below]{$\theta'^*$} arc (-90:0:.2)
 -- (-.6,-.5) arc (180:0:.2) 
 -- (-.2,-1) arc (-180:-90:.2) node{$\bullet$} node[below]{$\zeta$} arc (-90:0:.2)
 -- (.2,-.5) arc (180:0:.2) 
 -- (.6,-1) arc (-180:-90:.2) node{$\bullet$} node[below]{$\theta$} arc (-90:0:.2)
 -- (1,0) node{$\times$} node[right]{$p_1$}
 -- (1,1) arc (0:90:.2) node{$\bullet$} node[above]{$\theta^\vee$} arc (90:180:.2)
 -- (.6,.5) arc (0:-180:.2) 
 -- (.2,1) arc (0:90:.2) node{$\bullet$} node[above]{$\zeta^\vee$} arc (90:180:.2)
 -- (-.2,.5) arc (0:-180:.2) 
 -- (-.6,1) arc (0:90:.2) node{$\bullet$} node[above]{$\theta'^{*\vee}$} arc (90:180:.2)
 -- (-1,0);
\draw[thick] (0,1.2) -- (0,0) node[left=.8em]{$\CE'$} node[right=.8em]{$\CE$} -- (0,-1.2);
\end{tikzpicture}}
\quad \raisebox{5.5em}{$=$} 
\raisebox{2em}{\begin{tikzpicture}[scale=1]
\draw[thick] (0,-.7) node{$\bullet$} node[below]{$\CH_{P,I}$} -- (0,0) node[right]{$\CS$} -- (0,.7) node{$\bullet$} node[above]{$\bar\CH_{P,I'}$};
\end{tikzpicture}}
\end{center}
\medskip
Form direct limits in $\Sect(\CA)$
$$\CH_\CF = \varinjlim_P \CH_P, \quad \CH_{\CF,I} = \varinjlim_P \CH_{P,I}$$
and define $\CF(I)$ to be the von Neumann algebra
$$\CF(I) = \hom_{\CS(S^1_\uparrow)}(\CH_{\CF,I},\CH_{\CF,I}).$$
Then $(\CF,\CH_\CF)$ defines a 1-morphism of $\LQS^2$ inducing the $*$-condensation bimodule $\CS$.

\subsection{Condensation of topological $n$-nets} \label{sec:cond-nnet}

Let $\CA=\underline{\C}$ be the trivial topological $n$-net where $n\ge2$. By induction on $n$, we have $\Hom_{\LQS^n}(\CA,\CA) \simeq n\Hilb$.

Let $\CR$ be a unital $*$-condensation monad on $\CA$. We construct explicitly a $*$-condensate $\CB$ of $\CR$ by generalizing the construction from the previous subsection. 
To unpack the definition of $\CR$, we assume a prior that $\CR$ is induced by a $*$-condensation $f:\CA\condense\CE$.

Consider the cone $\Lambda = \{ x\in\R^n \mid x_i\ge0 \}$. We label the interior of $\Lambda$ by $\CE$ and the complement of $\Lambda$ by $\CA$. Use $\theta_1$ to denote the 1-morphism $f:\CA\to\CE$ and assign it to all the faces of $\Lambda$ of codimension one. Then by induction on $k$ for $2\le k\le n$, we choose a $k$-morphism $\theta_k:\Id_{\cdots\Id_\CA}\to f_k$ and assign it to all the faces of codimension $k$, where $f_k$ is obtained by composing the morphisms around such a face (for example, $f_2=f^\vee\circ f=\CR$). We require that $\theta_k$ extend to a $*$-condensation and contain the unit map $u_{\theta_{k-1}}: \Id_{\cdots\Id_\CA} \to \theta_{k-1}^\vee\circ\theta_{k-1}$ as a direct summand. Moreover, let $\theta_n^*:\Id_{\cdots\Id_\CA}\to f_n$ be the $n$-morphism induced by $\theta_n$ due to the self duality of $\CR$. Note that the morphisms $\theta_2,\dots,\theta_n$ and $\theta_n^*$ are irrelevant to $\CE$.

\medskip
\begin{center}
\begin{tikzpicture}[scale=1]
\draw[thick] (-1.3,1)--(1.3,1)  (-1.3,-1)--(1.3,-1)  (1,-1.3)--(1,1.3)  (-1,-1.3)--(-1,1.3);
\draw[dashed,color=blue] (-1.3,-1.3) -- (1.3,1.3)  (1.3,-1.3) -- (-1.3,1.3)  (0,1.3) -- (0,-1.3)  (1.3,0) -- (-1.3,0)
  (.7,1.3)--(1.3,.7)  (.7,-1.3)--(1.3,-.7)  (-.7,1.3)--(-1.3,.7)  (-.7,-1.3)--(-1.3,-.7);
\draw[ultra thick,color=red] 
  (.3,.7) node[color=black]{$\circlearrowleft$} -- (.7,.3) node[color=black]{$\circlearrowright$}
  -- (.7,-.3) node[color=black]{$\circlearrowleft$} -- (.3,-.7) node[color=black]{$\circlearrowright$}
  -- (-.3,-.7) node[color=black]{$\circlearrowleft$} -- (-.7,-.3) node[color=black]{$\circlearrowright$}
  -- (-.7,.3) node[color=black]{$\circlearrowleft$} -- (-.3,.7) node[color=black]{$\circlearrowright$} -- (.3,.7) 
  (-.3,.7)--(-.3,1.3)  (.3,.7)--(.3,1.3)  (-.3,-.7)--(-.3,-1.3)  (.3,-.7)--(.3,-1.3)  
  (.7,.3)--(1.3,.3)  (.7,-.3)--(1.3,-.3)  (-.7,.3)--(-1.3,.3)  (-.7,-.3)--(-1.3,-.3) ;
\end{tikzpicture}
\end{center}
\medskip

Let $P$ be a regular cell decomposition of $S^{n-1}$ (compatible with the smooth structure). Let $\hat P$ be a barycentric subdivision of $P$ and let $\hat P^\vee$ be a dual decomposition of $\hat P$. In particular, the vertices of $\hat P$ are in bijection to all the cells of $P$ and the $i$-cells of $\hat P^\vee$ are in bijection to the $(n-1-i)$-cells of $\hat P$. 
See the above picture, where $P$ is depicted by the black lines and $\hat P^\vee$ is depicted by the red lines.
Indeed, $\hat P$ is a triangulation of $S^{n-1}$ and each cell of $\hat P$ carries a canonical orientation (determined by the evident linear order on its vertices). Then, each vertex of $\hat P^\vee$ carries an induced orientation.

Assign the $k$-morphism $\theta_k$ to each $(n-k)$-cell of $\hat P^\vee$ for $1\le k<n$ and assign the $n$-morphism $\theta_n$ or $\theta_n^*$ to each vertex according to the orientation. Composing all the morphisms in the configuration then yields a sector $\CH_P$ of $\CA$ which is irrelevant to $\CE$. Since $\hat P$ and $\hat P^\vee$ are unique up to isotopy, $\CH_P$ is independent of the choice of them. 
If $Q$ is a subdivision of $P$, then $\hat Q^\vee$ is canonically a subdivision of $\hat P^\vee$ up to isotopy. The inclusions $u_{\theta_k}\hookrightarrow\theta_{k+1}$ then induce an isometric embedding $\CH_P\hookrightarrow\CH_Q$. Form a direct limit in $\Sect(\CA)$
$$\CH_\CB = \varinjlim_P \CH_P.$$

Let $I\subset S^{n-1}$ be a disk region and let $P$ be a regular cell decomposition of $S^{n-1}$. Passing to a subdivision of $P$ if necessary, we assume that $\partial I$ is a union of cells of $P$ so that $\partial I$ is transverse to all the cells of $\hat P^\vee$. Then $\CH_P$ admits a decomposition
$$\CH_P: \Id_{\cdots\Id_\CA} \xrightarrow{\CH_{P,I}} \CT_{P,I} \xrightarrow{\bar\CH_{P,I'}} \Id_{\cdots\Id_\CA}$$
where $\CT_{P,I}$ is obtained by composing all the morphisms lying on $\partial I$ and $\CH_{P,I}$ is obtained by composing all the morphisms inside $I$.
Passing to the direct limit yields a decomposition
$$\CH_\CB: \Id_{\cdots\Id_\CA} \xrightarrow{\CH_{\CB,I}} \CT_I \xrightarrow{\bar\CH_{\CB,I'}} \Id_{\cdots\Id_\CA}.$$
Define $\CB(I)$ to be the von Neumann algebra
$$\CB(I) = \hom_{\CT_I(S^{n-1}_\uparrow)}(\CH_{\CB,I},\CH_{\CB,I}).$$
Then $(\CB,\CH_\CB)$ defines a $*$-condensate of the $*$-condensation monad $\CR$.

\begin{rem}
Since $\CR$ is a $*$-condensation algebra in $\LQS^{n-1} \simeq n\Hilb$,
$\CR$ may be regarded as a unitary multi-fusion $(n-1)$-category $\CC$. Then $\theta_2$ defines an object of $\CC$ and $\theta_k$ defines a $(k-2)$-morphism of $\CC$ for $3\le k\le n$. 
The construction of the topological $n$-net $(\CB,\CH_\CB)$ can be reduced to a computation in the unitary multi-fusion $(n-1)$-category $\CC$ without referring to the local observable algebras of $\CR$. Moreover, $\CT_I$ lives in the completion of $\Omega^{n-1}\LQS^n = \LQS^1$ by infinite direct sums therefore $\CB(I)$ is an atomic type I von Neumann algebra.
\end{rem}

Generalizing the above construction, we see that all $*$-condensation bimodules and bimodule maps, etc. are induced by morphisms of $\LQS^n$. This completes the proof of Theorem \ref{thm:lqs-cc}.

\appendix
\section{Condensation theory of von Neumann algebras} \label{sec:condense-vn}

Recall that functors between $*$-$n$-categories are silently assumed to be $*$-functors.

\subsection{Lurie's formulation of Connes fusion}

We review in this subsection Lurie's formulation of Connes fusion of bimodules \cite[Lecture 21]{Lur11}.

\smallskip

Let $A$ and $B$ be von Neumann algebras. We use $\BMod_{A|B}$ to denote the $*$-1-category of $A$-$B$-bimodules and use $\LMod_A$ to denote the $*$-1-category of left $A$-modules.
We remind the reader the following basic fact and give a proof for the reader's convenience. It is analogous to the fact that every left module over a semisimple algebra is a direct summand of a free module.

\begin{lem} \label{lem:lmod-l2}
Every left $A$-module is a direct summand of a direct sum of (possibly infinite) copies of $L^2(A)$.
\end{lem}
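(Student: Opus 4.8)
The plan is to realize the given left $A$-module $\CH$ as a direct summand of a large direct sum of copies of $L^2(A)$ by decomposing $\CH$ into cyclic submodules and embedding each one as a direct summand of a single copy of $L^2(A)$.

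First I would record two elementary facts. (i) If $\CK\subseteq\CH$ is a closed $A$-invariant subspace, then it is automatically a direct summand in $\LMod_A$: since $A$ acts by a unital $*$-representation, $\CK^\perp$ is again invariant, so the orthogonal projection onto $\CK$ lies in the commutant $A'$ and exhibits $\CH=\CK\oplus\CK^\perp$ as an orthogonal direct sum of $A$-modules. (ii) By Zorn's lemma one can choose a maximal orthogonal family $\{\overline{A\xi_\alpha}\}$ of nonzero cyclic submodules of $\CH$; maximality together with unitality of the action forces $\CH=\bigoplus_\alpha\overline{A\xi_\alpha}$.

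The heart of the matter is to embed a single cyclic module $\overline{A\xi}$ as a direct summand of $L^2(A)$. For this I would pass to the positive functional $\omega_\xi(a)=\langle a\xi,\xi\rangle$ on $A$, which is normal because the module $\CH$ is a normal representation, and whose GNS representation is exactly $\overline{A\xi}$. Invoking the standard-form theory of von Neumann algebras, the self-dual positive cone $P^\natural\subseteq L^2(A)$ parametrizes all normal positive functionals: there is a vector $\eta\in P^\natural$ with $\omega_\xi(a)=\langle a\eta,\eta\rangle$. The assignment $a\xi\mapsto a\eta$ is then isometric, since $\|a\eta\|^2=\omega_\xi(a^*a)=\|a\xi\|^2$, and $A$-linear, so it extends to an $A$-linear isometry $\overline{A\xi}\to\overline{A\eta}\subseteq L^2(A)$, identifying $\overline{A\xi}$ with the closed invariant subspace $\overline{A\eta}$, which by fact (i) is a direct summand of $L^2(A)$.

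Assembling, each $\overline{A\xi_\alpha}$ is isometrically $A$-isomorphic to a direct summand $\CK_\alpha$ of its own copy of $L^2(A)$, whence $\CH\cong\bigoplus_\alpha\overline{A\xi_\alpha}\cong\bigoplus_\alpha\CK_\alpha$ is a direct summand of $\bigoplus_\alpha L^2(A)$, as desired; note that no separability or $\sigma$-finiteness of $A$ is needed. I expect the cyclic step to be the main obstacle: everything hinges on knowing that every normal positive functional is represented by a vector of the standard Hilbert space, which is precisely the content of the standard-form theorem, and on verifying that normality of the representation guarantees $\omega_\xi$ is normal so that this theorem applies.
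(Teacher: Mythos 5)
Your proof is correct and takes essentially the same route as the paper's: a Zorn's-lemma decomposition of the module into cyclic submodules (for which the paper cites \cite[Proposition I.9.17]{Ta79}), followed by realizing each cyclic module, via its normal vector functional and the standard-form theorem, as a direct summand of $L^2(A)$ (which is precisely the content of the paper's citation \cite[Lemma IX.1.6]{Ta03}). The only difference is that you unfold the two cited results of Takesaki explicitly rather than quoting them.
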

\begin{proof}
By Zorn's lemma, every left $A$-module is a direct sum of cyclic left $A$-modules (see \cite[Proposition I.9.17]{Ta79}). A cyclic left $A$-module induced by a normal state $\omega$ is a direct summand of $L^2(A)$ by \cite[Lemma IX.1.6]{Ta03}.
\end{proof}

We say that a functor $F:\LMod_A\to\LMod_B$ is {\em completely additive} if it satisfies the following condition:
\begin{itemize}
\item For every (possibly infinite) collection of objects $M_\alpha$ in $\LMod_A$, the collection of maps $F(M_\alpha)\to F(\bigoplus M_\alpha)$ exhibits $F(\bigoplus M_\alpha)$ as the direct sum $\bigoplus F(M_\alpha)$.
\end{itemize}
We use $\Fun^c(\LMod_A,\LMod_B)$ to denote the $*$-1-category of completely additive functors $F:\LMod_A\to\LMod_B$.

\begin{lem} \label{lem:cc-normal}
Let $F:\LMod_A\to\LMod_B$ be a completely additive functor. Then for every $M\in\LMod_A$, the map $$F_M: \Hom_{\LMod_A}(M,M) \to \Hom_{\LMod_B}(F(M),F(M))$$ is a (normal) homomorphism of von Neumann algebras.
\end{lem}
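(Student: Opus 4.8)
The plan is to separate the purely formal part from the genuine analytic content. Since functors between $*$-$1$-categories are tacitly $*$-functors, the map $F_M$ is automatically a unital $*$-homomorphism of von Neumann algebras: functoriality gives $F_M(fg)=F_M(f)F_M(g)$ and $F_M(\id_M)=\id_{F(M)}$, while compatibility with the $*$-structures (the Hilbert-space adjoint on each Hom-algebra) gives $F_M(f^*)=F_M(f)^*$. In particular $F_M$ is bounded and positive, so it is well defined as a morphism of von Neumann algebras, and the entire content of the lemma is the \emph{normality} of $F_M$. I would reduce this to the standard criterion that a unital $*$-homomorphism of von Neumann algebras is normal if and only if it is completely additive on projections, i.e.\ it carries every orthogonal family of projections summing to the identity to an orthogonal family summing to the identity (the general orthogonal case then follows by adjoining the complementary projection).

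So fix a family $\{p_\alpha\}$ of mutually orthogonal projections in $\End_A(M)=\Hom_{\LMod_A}(M,M)$ with $\sum_\alpha p_\alpha=\id_M$. Each $p_\alpha$ cuts out a closed left $A$-submodule $M_\alpha:=p_\alpha M$, and orthogonality together with $\sum_\alpha p_\alpha=\id_M$ exhibits $M$ as the Hilbert-space direct sum of the $M_\alpha$, which is precisely the categorical direct sum $\bigoplus_\alpha M_\alpha$ in $\LMod_A$. Writing $\iota_\alpha:M_\alpha\to M$ for the isometric inclusions, their adjoints $\iota_\alpha^*:M\to M_\alpha$ are the co-restricted orthogonal projections, and one has the relations $\iota_\alpha^*\iota_\alpha=\id_{M_\alpha}$, $\iota_\beta^*\iota_\alpha=0$ for $\alpha\ne\beta$, and $\iota_\alpha\iota_\alpha^*=p_\alpha$.

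Next I would apply complete additivity of $F$ to the decomposition $M\cong\bigoplus_\alpha M_\alpha$. Because $F$ respects adjoints, each $F(\iota_\alpha)$ is an isometry, since $F(\iota_\alpha)^*F(\iota_\alpha)=F(\iota_\alpha^*\iota_\alpha)=\id_{F(M_\alpha)}$, and the projections $F(\iota_\alpha)F(\iota_\alpha)^*=F(\iota_\alpha\iota_\alpha^*)=F_M(p_\alpha)$ are mutually orthogonal in $\End_B(F(M))$. By hypothesis the maps $F(\iota_\alpha)$ exhibit $F(M)$ as the direct sum $\bigoplus_\alpha F(M_\alpha)$; under the resulting identification these isometries are the canonical inclusions, whose ranges span $F(M)$, so that $\sum_\alpha F(\iota_\alpha)F(\iota_\alpha)^*=\id_{F(M)}$ ($\sigma$-strongly). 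Equivalently $\sum_\alpha F_M(p_\alpha)=F_M(\id_M)$, which is exactly the required complete additivity on projections; hence $F_M$ is normal.

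The one point that needs genuine care, and which I expect to be the main obstacle, is the passage from the abstract statement that the $F(\iota_\alpha)$ ``exhibit the direct sum'' to the operator-theoretic conclusion $\sum_\alpha F(\iota_\alpha)F(\iota_\alpha)^*=\id_{F(M)}$. Here one must check that the biproduct structure on $\bigoplus_\alpha F(M_\alpha)$ in $\LMod_B$ is the completed orthogonal Hilbert-space direct sum, whose structural projections are the adjoints of the inclusions, so that an orthogonal family of isometries realizing the direct sum automatically has densely spanning ranges. Granting this compatibility, which itself rests on $F$ being a $*$-functor and therefore turning the $\iota_\alpha$ into isometries, the remainder is routine bookkeeping with the identities $\iota_\alpha^*\iota_\alpha=\id_{M_\alpha}$ and $\iota_\alpha\iota_\alpha^*=p_\alpha$ together with the projection criterion for normality.
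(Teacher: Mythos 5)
Your proof is correct and follows essentially the same route as the paper: both rest on the criterion that normality is equivalent to complete additivity on orthogonal families of projections (Takesaki, Corollary III.3.11), combined with the dictionary between orthogonal projections in $\Hom_{\LMod_A}(M,M)$ and orthogonal direct-summand decompositions of $M$, to which the complete additivity of $F$ is applied. The only cosmetic difference is that you verify $\sum_\alpha F_M(p_\alpha)=\id_{F(M)}$ directly at the level of the homomorphism (with explicit bookkeeping via the isometries $\iota_\alpha$), whereas the paper checks the equivalent dual statement that $\omega\circ F_M$ is normal for every normal linear functional $\omega$ on the target.
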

\begin{proof}
According to \cite[Corollary III.3.11]{Ta79}, a bounded linear functional $\omega:R\to\C$ on a von Neumann algebra $R$ is normal if and only if $\omega(\sum e_\alpha)=\sum\omega(e_\alpha)$ for every orthogonal family $e_\alpha$ of projections in $R$. Moreover, giving an orthogonal family of projections in $\Hom_{\LMod_A}(M,M)$ is equivalent to giving an orthogonal family of direct summands of the left $A$-module $M$. Therefore, the complete additivity of $F$ implies that if $\omega$ is a normal linear functional on the target of $F_M$ then the induced functional $\omega\circ F_M$ on the domain is also normal. That is, $F_M$ is normal.
\end{proof}

\begin{prop} \label{prop:nfun-bimod}
We have a pair of mutually inverse functors:
$$\Fun^c(\LMod_A,\LMod_B) \to \BMod_{B|A}, \quad F\mapsto F(L^2(A)),$$
$$\BMod_{B|A} \to \Fun^c(\LMod_A,\LMod_B) , \quad M\mapsto M\boxtimes_A-.$$
\end{prop}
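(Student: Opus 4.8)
The plan is to construct explicit natural isomorphisms witnessing that the two assignments are mutually inverse. First I would pin down the bimodule structure produced by $F\mapsto F(L^2(A))$. The space $F(L^2(A))$ is a left $B$-module by definition of the functor $F$, while the right $A$-action on $L^2(A)$ defines a normal homomorphism $A^\op\to\End_{\LMod_A}(L^2(A))=A'$; applying the normal homomorphism of Lemma \ref{lem:cc-normal} transports this to a normal right $A$-action on $F(L^2(A))$ that commutes with the left $B$-action, so $F(L^2(A))$ is a genuine $B$-$A$-bimodule. In the other direction, $M\boxtimes_A-$ lands in $\LMod_B$ and is completely additive because Connes fusion commutes with arbitrary direct sums in the second variable, so both assignments are well defined. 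Functoriality on natural transformations and on bimodule maps is then immediate: a natural transformation $F\to F'$ evaluated at $L^2(A)$ intertwines the right $A$-actions by naturality, and a bimodule map $M\to M'$ induces a natural transformation $M\boxtimes_A-\to M'\boxtimes_A-$.

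For one composite, starting from a bimodule $M$ I would invoke that $L^2(A)$ is the unit for Connes fusion to obtain a natural isomorphism $M\boxtimes_A L^2(A)\cong M$ of $B$-$A$-bimodules. This is the standard property of the standard form, and it supplies the required natural isomorphism $(M\boxtimes_A-)(L^2(A))\cong M$.

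The substance lies in the other composite: given a completely additive $F$ with $M:=F(L^2(A))$, I want a natural isomorphism $M\boxtimes_A-\cong F$. The key input is Lemma \ref{lem:lmod-l2}, which says every left $A$-module is a direct summand of a (possibly infinite) direct sum of copies of $L^2(A)$. Both $F$ and $M\boxtimes_A-$ are completely additive, so on a free module $L^2(A)^{\oplus I}$ both return $M^{\oplus I}$, and on morphisms between free modules — which are governed by $\End_{\LMod_A}(L^2(A))=A'\cong A^\op$ — both act through the very same normal homomorphism, precisely by the definition of the right $A$-action on $M$. Hence the two functors agree canonically on free modules and on morphisms between them. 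I would then extend to an arbitrary module $N$ by presenting it as the image of an idempotent $e\in\End_{\LMod_A}(L^2(A)^{\oplus I})$: since both functors send $e$ to matching idempotents on $M^{\oplus I}$, their values on $N$ are canonically identified as the corresponding images.

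The main obstacle will be establishing naturality and independence of choices in this last step: I must verify that the isomorphism $M\boxtimes_A N\cong F(N)$ does not depend on the chosen presentation of $N$ as a retract of a free module and is natural in $N$. This reduces, by a standard retract argument, to the already-established agreement on free modules together with the complete additivity of both functors. The one genuinely analytic point is that morphisms between \emph{infinite} free modules must be correctly encoded by bounded matrices over $A^\op$, and that Lemma \ref{lem:cc-normal} — whose normality statement controls exactly the behaviour on infinite orthogonal families of projections — guarantees that $F$ respects this encoding, so that no convergence issue obstructs the comparison.
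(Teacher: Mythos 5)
Your proposal is correct and follows essentially the same route as the paper: the paper's (very compressed) proof likewise uses Lemma \ref{lem:cc-normal} to make $F(L^2(A))$ a $B$-$A$-bimodule, the unit isomorphism $M\boxtimes_A L^2(A)\cong M$, and Lemma \ref{lem:lmod-l2} to conclude that $F$ is determined by $F(L^2(A))$ up to canonical natural isomorphism. Your write-up simply makes explicit the retract/free-module comparison and the normality point on infinite direct sums that the paper leaves as ``straightforward.''
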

\begin{proof}
The proof is straightforward: $F(L^2(A))$ is a well-defined $B$-$A$-bimodule by Lemma \ref{lem:cc-normal}. On the one hand, $M\boxtimes_A L^2(A) \cong M$. 
On the other hand, by Lemma \ref{lem:lmod-l2} $F$ is determined by $F(L^2(A))$ up to canonical natural isomorphism. Namely, $F\cong F(L^2(A))\boxtimes_A-$.
\end{proof}

The following corollary states that Connes fusion of bimodules can be formulated by composition of completely additive functors.

\begin{cor}
The bifunctor of Connes fusion
$$\BMod_{C|B} \times \BMod_{B|A} \to \BMod_{C|A}, \quad (M,N)\mapsto M\boxtimes_B N$$
is induced by the bifunctor of composition 
$$\Fun^c(\LMod_B,\LMod_C) \times \Fun^c(\LMod_A,\LMod_B) \to \Fun^c(\LMod_A,\LMod_C),$$
$$(G,F)\mapsto G\circ F.$$
\end{cor}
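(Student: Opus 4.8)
The plan is to deduce the statement directly from Proposition~\ref{prop:nfun-bimod}, which supplies the equivalences $\Phi_{A,B}\colon \Fun^c(\LMod_A,\LMod_B)\xrightarrow{\sim}\BMod_{B|A}$, $F\mapsto F(L^2(A))$, with quasi-inverse $M\mapsto M\boxtimes_A-$. Under these identifications the claim amounts to producing a natural isomorphism $(G\circ F)(L^2(A))\cong G(L^2(B))\boxtimes_B F(L^2(A))$ of $C$-$A$-bimodules, natural in $F$ and $G$, and compatible with the evident coherence data.

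First I would check that the composition bifunctor lands in the right place: if $F\colon\LMod_A\to\LMod_B$ and $G\colon\LMod_B\to\LMod_C$ are completely additive, then $G\circ F$ is completely additive, since $F$ carries an arbitrary direct sum to a direct sum and $G$ then does the same. Thus $(G,F)\mapsto G\circ F$ is a well-defined bifunctor into $\Fun^c(\LMod_A,\LMod_C)$.

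Next comes the core computation. Write $M=G(L^2(B))=\Phi_{B,C}(G)$ and $N=F(L^2(A))=\Phi_{A,B}(F)$; here $N$ is a $B$-$A$-bimodule. Evaluating the composite at $L^2(A)$ gives $(G\circ F)(L^2(A))=G(N)$. By Proposition~\ref{prop:nfun-bimod} there is a natural isomorphism $G\cong M\boxtimes_B-$ of functors on $\LMod_B$; evaluating it at the left $B$-module $N$ yields a left $C$-module isomorphism $G(N)\cong M\boxtimes_B N$. To upgrade this to a $C$-$A$-bimodule isomorphism I would use naturality: the right $A$-action on $N$ is a normal homomorphism $A^{\op}\to\Hom_{\LMod_B}(N,N)$, which $G$ sends to the right $A$-action on $G(N)$ — this is where Lemma~\ref{lem:cc-normal} guarantees normality — and naturality of $G\cong M\boxtimes_B-$ identifies $G(r_a)$ with $\id_M\boxtimes_B r_a$. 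Hence the isomorphism intertwines the right $A$-actions, so $(G\circ F)(L^2(A))\cong M\boxtimes_B N=G(L^2(B))\boxtimes_B F(L^2(A))$ as $C$-$A$-bimodules. This is precisely the assertion that $\Phi$ carries composition to Connes fusion.

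The main obstacle will be bookkeeping rather than conceptual: one must verify that the isomorphism just produced is natural in both variables $F,G$ (equivalently, in both bimodules $M,N$) and that it is compatible with the associativity and unit constraints, so that it assembles into an isomorphism of bifunctors rather than a mere objectwise isomorphism. The associativity check reduces to the standard associativity of Connes fusion together with the strict associativity of functor composition, and the normality subtlety in the right-action argument is exactly dealt with by Lemma~\ref{lem:cc-normal}; beyond that the verification is a routine naturality chase.
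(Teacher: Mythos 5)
Your proposal is correct and follows exactly the route the paper intends: the paper states this corollary without an explicit proof, treating it as immediate from Proposition \ref{prop:nfun-bimod}, and your argument --- computing $(G\circ F)(L^2(A))=G(N)\cong M\boxtimes_B N$ via the natural isomorphism $G\cong M\boxtimes_B-$, with Lemma \ref{lem:cc-normal} ensuring the right $A$-action transports normally --- is precisely the expansion of that implicit argument. Nothing in your write-up deviates from or adds to what the paper's equivalence already encodes, so there is no gap to flag.
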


\subsection{Condensation of von Neumann algebras}

Let $\Mor(\VN)$ denote the $*$-2-category of von Neumann algebras and bimodules \cite{Lan01}.

\begin{thm} \label{thm:vn-cc}
$\Mor(\VN)$ is $*$-condensation-complete.
\end{thm}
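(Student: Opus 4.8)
The plan is to invoke the Gaiotto--Johnson-Freyd criterion and reduce to a single existence statement: a $*$-$2$-category is $*$-condensation-complete precisely when every $*$-condensation monad admits a $*$-condensate (the $2$-morphism level being automatic, since the hom $*$-$1$-categories $\BMod_{A|B}$ have arbitrary direct sums and split $*$-idempotents---a projection in the endomorphism algebra of a bimodule splits by passing to its range). Thus I fix a von Neumann algebra $A$ and a $*$-condensation monad $\CR$ on it, i.e. a $1$-endomorphism $\CR:A\to A$ (an $A$-$A$-bimodule) with a multiplication $m:\CR\boxtimes_A\CR\to\CR$; by \cite[Theorem 3.1.7]{GJF19} (idempotents split in $\BMod_{A|A}$) I may assume $\CR$ unital, with unit $u:L^2(A)\to\CR=\Id_A$, so that $(\CR,m,u,m^*,u^*)$ is a special symmetric $*$-Frobenius algebra---a Q-system. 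The goal is then to produce a von Neumann algebra $B$ and a $*$-condensation $A\condense B$ whose induced monad is $\CR$.

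The candidate condensate is built directly from the Q-system, in the spirit of the Longo correspondence between Q-systems and extensions. Regarding $\CR$ as a Hilbert space, let $B\subset\CL(\CR)$ be the von Neumann algebra generated by the operators of left multiplication by $\CR$ on itself determined by $m$. These commute with the right $A$-action, so $B$ is a von Neumann algebra on $\CR$ containing the image of $A$ under $u$. I would first verify that the vector $\Omega:=u(\Omega_A)$, the image of the standard cyclic vector $\Omega_A\in L^2(A)$, is cyclic and separating for $B$: the right-unit axiom $m\circ(\id\boxtimes u)=\id$ shows that left multiplication by $r$ sends $\Omega$ to $r$, whence $B\Omega$ is dense, and the symmetry of the Frobenius structure gives the same for the commutant. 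Consequently $\CR\cong L^2(B)$ in standard form, with modular conjugation supplied by the $*$-Frobenius (conjugation) structure, and $u:A\to B$ is a unital normal homomorphism.

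With $B$ in hand, the $*$-condensation $A\condense B$ is implemented by the $B$-$A$-bimodule ${}_B\CR_A$ (namely $L^2(B)$ with its right action restricted to $A$) together with its conjugate ${}_A\bar\CR_B$, the structure $2$-morphisms being $m,u$ and their adjoints. The two defining identities are checked by Connes fusion: on the $A$-side ${}_A\bar\CR_B\boxtimes_B{}_B\CR_A\cong{}_A\CR_A$ recovers the monad $\CR$ with its algebra structure, while on the $B$-side ${}_B\CR_A\boxtimes_A{}_A\bar\CR_B$ contains $\Id_B={}_BL^2(B)_B$ as a retract, the retraction being the Frobenius counit and comultiplication $u^*,m^*$; the specialness relation $mm^*=\id_\CR$ furnishes the required $*$-condensation splitting identity. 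Here the functorial reformulation of Connes fusion from the appendix is convenient: by Proposition \ref{prop:nfun-bimod} these fusions are computed as composites of completely additive functors, and Lemma \ref{lem:cc-normal} guarantees that the resulting module actions are normal. Compatibility with the $*$-involutions is inherited from the $*$-structure of the Q-system.

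The main obstacle is the second paragraph: showing that $B$, defined as the weak closure of the left-multiplication operators of a merely \emph{separable} (not necessarily finite-index) Q-system, is genuinely a von Neumann algebra in \emph{standard form} on $\CR$, i.e. that the commutation theorem identifying $B'$ with the right multiplications holds and that $\Omega$ is separating. This is a Tomita--Takesaki-type statement for Q-systems and is the only place where analysis, rather than formal $2$-categorical manipulation, enters. Once it is established, the cyclic--separating vector identifies $\CR$ with $L^2(B)$, and the remaining verifications---the algebra axioms, the recovery of $m$ and $u$, and $*$-compatibility---follow formally from the Frobenius axioms together with the appendix's identification of Connes fusion with composition of completely additive functors.
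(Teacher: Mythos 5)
There are two genuine gaps here, and together they mean the proposal does not prove the theorem. First, your reduction to the unital case is unjustified: the unitalization result \cite[Theorem 3.1.7]{GJF19} is not a consequence of idempotent 2-morphisms splitting in $\BMod_{A|A}$ (your parenthetical), but requires the underlying 1-morphism $\CR$ to be dualizable --- this is exactly how the paper invokes it in Section 5.1 for $\Net^n$, where dualizability of $n$-morphisms is first established (Corollary \ref{cor:net-sec-dual}). In $\Mor(\VN)$, 1-morphisms are arbitrary bimodules and are in general \emph{not} dualizable (already an infinite-dimensional Hilbert space as a $\C$-$\C$-bimodule fails), so a proof of Theorem \ref{thm:vn-cc} must handle genuinely nonunital $*$-condensation monads. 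Second, even granting unitality, the heart of your argument --- that $(\CR,m,u,m^*,u^*)$ is a Q-system in the standard sense and that the left-multiplication algebra $B$ acts on $\CR$ in standard form with $\Omega=u(\Omega_A)$ cyclic \emph{and separating} --- is precisely the nontrivial analytic content, and you concede in your final paragraph that it is unproven. Note also that a unital $*$-condensation algebra is special $*$-Frobenius but carries no standardness/normalization condition (the paper explicitly drops the normalization of \cite{BDH14}; see Remark \ref{rem:vnbim-dual}), and the Longo-type Q-system/extension correspondence you appeal to is usually established under finite-index or dualizability hypotheses; without them, "symmetry of the Frobenius structure" does not formally yield that $\Omega$ is separating.

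The paper's actual proof sidesteps both issues by working monadically with Lurie's reformulation of Connes fusion: a nonunital condensation monad $R$ induces a nonunital algebra $T=R\boxtimes_A-$ in $\Fun^c(\LMod_A,\LMod_A)$; one forms the $*$-1-category $\CC$ of split $T$-modules, observes that $\Hom_\CC(X,X)$ is a von Neumann algebra because it is the kernel of an ultraweakly continuous map, sets $B=\Hom_\CC(R,R)^\op$, and proves $R\boxtimes_B-:\LMod_B\to\CC$ is an equivalence using Lemma \ref{lem:lmod-l2} (every module is a summand of a sum of copies of $L^2(A)$, and every $X\in\CC$ is a summand of $T(X)$). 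The condensation is then read off from the forgetful/free pair $M:\CC\to\LMod_A$, $N:\LMod_A\to\CC$ via Proposition \ref{prop:nfun-bimod}; the identification $L^2(B)\cong R$ in the unital case is a \emph{consequence} (Remark \ref{rem:vn-cc-dual}), not an input. If you want to salvage your route, you would need to (a) remove the unitality assumption or prove unitalization in $\Mor(\VN)$ directly, and (b) supply the Tomita--Takesaki-type standard-form argument for arbitrary (infinite-index, non-normalized) unital $*$-condensation algebras --- at which point you would essentially be reproving the paper's lemmas in harder form.
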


\begin{proof}
Let $R$ be a $*$-condensation monad on $A\in\Mor(\VN)$. That is, $R$ is a nonunital algebra in $\BMod_{A|A}$ such that $m^*$ is an $R$-$R$-bimodule map and $m\circ m^* = \Id_R$ where $m:R\boxtimes_A R\to R$ is the multiplication of $R$. We need to show that $R$ is induced by a $*$-condensation $A\condense B$. That is, there exist a pair of bimodules ${}_A M_B$ and ${}_B N_A$, a bimodule map $t:N\boxtimes_A M\to L^2(B)$ such that $t\circ t^*=\Id_{L^2(B)}$, and a bimodule isomorphism $M\boxtimes_B N\cong R$ rendering the following diagram commutative:
$$\xymatrix{
  M\boxtimes_B N\boxtimes_A M\boxtimes_B N \ar[r]^-\sim \ar[d]_{M\boxtimes_B t\boxtimes_B N} & R\boxtimes_A R \ar[d]^m \\
  M\boxtimes_B N \ar[r]^\sim & R .
}
$$

Let $T$ denote the nonunital algebra in $\Fun^c(\LMod_A,\LMod_A)$ induced by the $*$-condensation monad $R$. Define a $*$-1-category $\CC$ whose objects are the left $T$-modules $X$ in $\LMod_A$ such that $t_X^*$ is a left $T$-module map and $t_X\circ t_X^*=\Id_X$ where $t_X:T(X)\to X$ is the action map and whose morphisms are the left $T$-module maps. 
Then $\Hom_\CC(X,X)$ is a von Neumann algebra for every $X\in\CC$ because it is the kernel of the ultraweakly continuous map 
$$\Hom_{\LMod_A}(X,X) \to \Hom_{\LMod_A}(T(X),X), \quad f\mapsto t_X\circ T(f)-f\circ t_X.$$
In particular, we obtain a von Neumann algebra
$$B=\Hom_\CC(R,R)^\op$$
so that $R$ is an $A$-$B$-bimodule. Moreover, the right action of $A$ on $R$ induces a homomorphism $A\to B$.

We claim that the functor 
$$R\boxtimes_B-: \LMod_B \to \CC$$
is an equivalence. Indeed, we have $\Hom_\CC(R,R) \cong \Hom_{\LMod_B}(L^2(B),L^2(B))$, thus the functor is fully faithful by Lemma \ref{lem:lmod-l2}. Moreover, since every object $X\in\CC$ is a direct summand of $T(X)=R\boxtimes_A X$ which is a direct summand of a direct sum of copies of $R$ by Lemma \ref{lem:lmod-l2}, the functor is essentially surjective.

We have a pair of completely additive functors
$$M:\CC\to\LMod_A, \quad X\mapsto X,$$
$$N:\LMod_A\to\CC, \quad X\mapsto T(X).$$
Note that $M\circ N = T$.
Let $t:N\circ M\to\Id_\CC$ be the natural transformation given by the action map $t_X:T(X)\to X$ so that $t\circ t^*=\Id_{\Id_\CC}$.
Regarding $M$ and $N$ as bimodules, we conclude the theorem.
\end{proof}

\begin{rem} \label{rem:vn-cc}
From the proof we see that the $*$-condensation $A\condense B$ is canonically determined by the $*$-condensation monad $R$: the von Neumann algebra $B^\op$ is the kernel of the ultraweakly continuous map
$$\Hom_{\LMod_A}(R,R) \to \Hom_{\LMod_A}(R\boxtimes_A R,R), \quad f\mapsto m\circ(R\boxtimes_A f)-f\circ m$$
and we have $M={}_A R_B$ and $N={}_B L^2(B)_A$.
\end{rem}

\begin{rem} \label{rem:vn-cc-dual}
If the $*$-condensation monad $R$ is unital, then $T$ is a monad over $\LMod_A$ so that $t:N\circ M\to\Id_\CC$ is a counit map exhibiting the forgetful functor $M$ right adjoint to $N$. 
According to the next remark, $L^2(B)\cong R$ canonically in this case.
\end{rem}

\begin{rem} \label{rem:vnbim-dual}
If an $A$-$B$-bimodule $M$ is dualizable (as a 1-morphism of $\Mor(\VN)$) then it is dual to the $B$-$A$-bimodule $\bar M$ given by the complex conjugate. 
This is \cite[Corollary 6.12]{BDH14} if $Z(A)$ is finite-dimensional. To see the general case, we may assume that $M$ is a faithful $A$-$B$-bimodule. Then \cite[Proposition 3.10]{BDH14} reduces the problem to the case where $A\subset B$ and $M=L^2(B)$.
Suppose an $A$-$A$-bimodule map $u:L^2(A)\to M\boxtimes_B N$ exhibits $N$ left dual to $M$. Then $u^*u\in Z(A)$ is nonvanishing almost everywhere. Replacing $u$ by $u(u^*u)^{-1/2}$, we assume $u^*u=1$. Then the proof of \cite[Proposition 7.3]{BDH14} supplies a conditional expectation $E:B\to A$ such that $E(x)\ge x$ for $x\in B_+$. The proof of \cite[Proposition 7.5]{BDH14} then shows that $M$ is dual to $\bar M$.
\end{rem}

\end{document}